\documentclass[10pt]{article}
\usepackage[margin=1in,dvips]{geometry}
\usepackage{graphicx,cancel}
\usepackage{graphics}
\usepackage[]{float}
\DeclareGraphicsRule{.pstex}{eps}{*}{}
\usepackage{color, soul}
 \usepackage{xcolor}
\usepackage{amsmath, amsthm, slashed}
\usepackage{tikz-cd} 
\usepackage{pgfplots}
\usetikzlibrary{decorations.pathmorphing}
\usetikzlibrary{decorations.markings}
\usetikzlibrary{patterns}
\usetikzlibrary{shapes}
\usetikzlibrary{plotmarks}
 \usepackage[hidelinks]{hyperref}
\usepackage{amssymb}
\usepackage{mathabx}
\usepackage{rotating}
\usepackage{mathrsfs}
\usepackage{epsfig}
\usepackage{verbatim}
\usepackage[affil-it]{authblk}
\usepackage{rotating}
\usepackage{graphicx}
\usepackage{accents}

\usepackage[numbers,compress]{natbib}
\setlength{\bibsep}{0.0pt}

\newtheorem{definition}{Definition}[section]
\newtheorem{theorem}{Theorem}[section]
 
\newtheorem*{conjecture*}{Conjecture}
\newtheorem{corollary}{Corollary}[section]
\newtheorem*{theorem*}{Theorem}
\newtheorem*{corollary*}{Corollary}
\newtheorem{proposition}{Proposition}[section]
\newtheorem{lemma}{Lemma}[subsection]
\newtheorem{remark}{Remark}[section]

\setcounter{introtheorem}{-1}

\newcommand{\bea}{\begin{eqnarray}}
\newcommand{\eea}{\end{eqnarray}}
\def\beaa{\begin{eqnarray*}}
\def\eeaa{\end{eqnarray*}}

\newcommand{\Xb}{\underline{X}}
\newcommand{\lb}{\underline{\lambda}}
\newcommand{\divs}{\slashed{\mathrm{div}}}
\newcommand{\ns}{\slashed{\nabla}}
\newcommand{\Ds}{\slashed{\mathcal{D}}^{\star}}

\newcommand{\olinb}{\accentset{\scalebox{.6}{\mbox{\tiny (1)}}}{\underline{\omega}}}
\newcommand{\xlin}{\accentset{\scalebox{.6}{\mbox{\tiny (1)}}}{{\hat{\chi}}}}
\newcommand{\xblin}{\accentset{\scalebox{.6}{\mbox{\tiny (1)}}}{\underline{\hat{\chi}}}}
\newcommand{\xflin}{\accentset{\scalebox{.6}{\mbox{\tiny (1)}}}{{\mathfrak{x}}}}
\newcommand{\xfblin}{\accentset{\scalebox{.6}{\mbox{\tiny (1)}}}{{\underline{\mathfrak{x}}}}}

\newcommand{\Olin}{\Omega^{-1}\accentset{\scalebox{.6}{\mbox{\tiny (1)}}}{\Omega}}
\newcommand{\Olino}{\accentset{\scalebox{.6}{\mbox{\tiny (1)}}}{\Omega}}
\newcommand{\glinh}{\accentset{\scalebox{.6}{\mbox{\tiny (1)}}}{\hat{\slashed{g}}}}

  \newcommand{\glinto}{\accentset{\scalebox{.6}{\mbox{\tiny (1)}}}{\sqrt{\slashed{g}}}}
\newcommand{\bmlin}{\accentset{\scalebox{.6}{\mbox{\tiny (1)}}}{b}}

\newcommand{\chiblin}{\accentset{\scalebox{.6}{\mbox{\tiny (1)}}}{\underline{\hat{\chi}}}}
\newcommand{\chilin}{\accentset{\scalebox{.6}{\mbox{\tiny (1)}}}{{\hat{\chi}}}}

\newcommand{\eblin}{\accentset{\scalebox{.6}{\mbox{\tiny (1)}}}{\underline{\eta}}}
\newcommand{\elin}{\accentset{\scalebox{.6}{\mbox{\tiny (1)}}}{{\eta}}}
\newcommand{\otx}{\accentset{\scalebox{.6}{\mbox{\tiny (1)}}}{(\Omega \mathrm{tr} \chi)}}
\newcommand{\otxb}{\accentset{\scalebox{.6}{\mbox{\tiny (1)}}}{(\Omega \mathrm{tr} \underline{\chi})}}
\newcommand{\olin}{\accentset{\scalebox{.6}{\mbox{\tiny (1)}}}{\omega}}
\newcommand{\oblin}{\accentset{\scalebox{.6}{\mbox{\tiny (1)}}}{\underline{\omega}}}

\newcommand{\ablin}{\accentset{\scalebox{.6}{\mbox{\tiny (1)}}}{\underline{\alpha}}}
\newcommand{\aalin}{\accentset{\scalebox{.6}{\mbox{\tiny (1)}}}{\underline{\alpha}}}
\newcommand{\alin}{\accentset{\scalebox{.6}{\mbox{\tiny (1)}}}{{\alpha}}}

\newcommand{\bblin}{\accentset{\scalebox{.6}{\mbox{\tiny (1)}}}{\underline{\beta}}}
\newcommand{\blin}{\accentset{\scalebox{.6}{\mbox{\tiny (1)}}}{{\beta}}}
\newcommand{\rlin}{\accentset{\scalebox{.6}{\mbox{\tiny (1)}}}{\rho}}
\newcommand{\slin}{\accentset{\scalebox{.6}{\mbox{\tiny (1)}}}{{\sigma}}}
\newcommand{\Klin}{\accentset{\scalebox{.6}{\mbox{\tiny (1)}}}{K}}

\newcommand{\bflin}{\accentset{\scalebox{.6}{\mbox{\tiny (1)}}}{{\frak{b}}}}


\def\dual{{\,^\star \mkern-2mu}}
\def\tr{\mbox{tr}}

\newcommand{\nabb}{\nab\mkern-13mu /\,}



\renewcommand{\div}{\mbox{div }}
\newcommand{\curl}{\mbox{curl }}

\def\nab{\nabla}


\def\DDs{ \, \DD \hspace{-2.4pt}\dual    \mkern-16mu /}


\def\a{\alpha}
\def\b{\beta}

\def\om{\omega}



\renewcommand{\aa}{\protect\underline{\a}}
\newcommand{\bb}{\protect\underline{\b}}
\def\omb{{\underline{\om}}}

\newcommand{\chib}{\underline{\chi}}


\def\DD{{\mathcal D}}






\def\ff{\frak{f}}





\def\trch{\tr \chi}
\def\trchb{\tr \chib}

\usepackage{scalerel}

\def\bF{\,^{\scaleto{(F)}{5pt}} \hspace{-2.2pt}\b}
\def\bbF{\,^{\scaleto{(F)}{5pt}} \hspace{-2.2pt}\bb}

\def\rhoF{\,^{\scaleto{(F)}{5pt}} \hspace{-2.2pt}\rho}

\def\sigmaF{\,^{\scaleto{(F)}{5pt}} \hspace{-2.2pt}\sigma}

\def \f12{\frac 1 2 }

\def\rhoFlin{\,^{\scaleto{(F)}{5pt}} \hspace{-1.8pt}\rlin}
\def\sigmaFlin{\,^{\scaleto{(F)}{5pt}} \hspace{-1.8pt}\slin}
\def\bFlin{\,^{\scaleto{(F)}{5pt}} \hspace{-2pt}\blin}
\def\bbFlin{\,^{\scaleto{(F)}{5pt}} \hspace{-2pt}\bblin}

\newcommand{\fflin}{\accentset{\scalebox{.6}{\mbox{\tiny (1)}}}{{\ff}}}

\newcommand{\curls}{\slashed{\mathrm{curl}}}
\newcommand{\ds}{\slashed{\Delta}}
\newcommand{\eps}{\varepsilon}
\hyphenation{Schwarz-schild}

\DeclareMathAlphabet\mathbfcal{OMS}{cmsy}{b}{n}

\allowdisplaybreaks

\setcounter{tocdepth}{2}

\usepackage{upgreek}
\title{Conservation Laws and Boundedness for Linearised Einstein--Maxwell Equations on the Reissner-Nordstr{\"o}m Black Hole}

\author[1]{Marios A. Apetroaie\thanks{marios.apetroaie@epfl.ch}}
\author[2]{Sam C. Collingbourne\thanks{SC.Collingbourne@ed.ac.uk}}
\author[3]{Elena Giorgi\thanks{elena.giorgi@columbia.edu}}

\affil[1]{\small  Institute of Mathematics,
EPFL SB MATH, \vskip.1pc \ MA A2 383 (Bâtiment MA), Station 8, \vskip.1pc \
CH-1015 Lausanne,
Switzerland} 
\affil[2]{\small School of Mathematics, University of Edinburgh, \vskip.1pc \ James Clerk Maxwell Building, Peter Guthrie Tait Road, \vskip.1pc \ Edinburgh, EH9
3FD, Scotland, UK  }
\affil[3]{\small Department of Mathematics, Columbia University, \vskip.1pc \  2990 Broadway,~New York,~NY 10027,~USA  \vskip.2pc \ }

\begin{document}

\maketitle

\begin{abstract}
 We study the linearised Einstein–Maxwell equations on the Reissner–Nordström spacetime and derive the canonical energy conservation law in double null gauge. In the spirit of the work of Holzegel and the second author \cite{ColHol24}, we avoid any use of the hyperbolic nature of the Teukolsky equations and rely solely on the conservation law to establish control of energy fluxes for the gauge-invariant Teukolsky variables, previously identified by the third author~\cite{Giorgi4, Giorgi5}, along all outgoing null hypersurfaces, for charge-to-mass ratio $\frac{|Q|}{M} < \frac{\sqrt{15}}{4}$. This yields uniform boundedness for the Teukolsky variables in Reissner-Nordstr\"om.
\end{abstract}
\tableofcontents

\section{Introduction}

The Einstein–Maxwell equations govern the interaction of gravitational and electromagnetic fields within the framework of General Relativity. These equations describe the geometry of a four-dimensional Lorentzian manifold $({\mathcal{M}}, \mathrm{g})$ coupled to a closed 2-form $\mathrm{F}$ representing the electromagnetic field. The Einstein--Maxwell equations,
\begin{align*}
    \mathrm{Ric}(\mathrm{g})_{ab}&=2\Big(\mathrm{F}_{ac}{\mathrm{F}_{b}}^c-\frac{1}{4}|\mathrm{F}|^2_{\mathrm{g}}\mathrm{g}_{ab}\Big),\qquad
    \mathrm{d}\mathrm{F}=0,\qquad
    \div\mathrm{F}=0,
\end{align*}
admit a $2$-parameter family of solutions known as the Reissner–Nordström spacetimes. These represent static, spherically symmetric black holes endowed with electric charge. They are parametrised by their mass $M$ and charge $Q$, subject to the bound $|Q| \leq M$ to ensure the presence of an event horizon.

In this paper, we revisit the linear stability of the exterior region of the Reissner--Nordstr\"om black hole as solution to the Einstein-Maxwell equations.
We establish a stability statement by \textit{exploiting the existence of a conservation law in the linearised system} rather than utilising `the traditional route' of \textit{studying the Teukolsky equations as hyperbolic equations}. Nevertheless, the method leads to the proof of boundedness for energy fluxes of the Teukolsky variables.

\paragraph{Linear Stability from Conservation Laws.}
The linear stability problem of the Reissner-Nordstr\"om family of black holes involves studying solutions to the linearized Einstein–Maxwell equations around a fixed Reissner–Nordström background. This task is complicated by the gauge freedom inherent in the theory and by the coupled nature of the gravitational and electromagnetic perturbations. A well-known route to overcome these difficulties is to study the perturbations at the level of curvature and derive decoupled equations governing the evolution of gauge-invariant curvature scalars—most notably, the Teukolsky equations. These are then analysed using techniques adapted from the study of wave equations on black hole backgrounds. This framework has proven extremely successful, particularly in recent years, in tackling linear or nonlinear stability questions for Schwarzschild~\cite{DHR,KS1,DHRT}, Kerr~\cite{DHR2,Ma,YakovRita1,YakovRita2,ABBM,KS,GKS}, Reissner-Nordstr\"om~\cite{Giorgi6,Apetroaie24} and Kerr-Newman spacetimes~\cite{Giorgi9}.\footnote{For proofs of linear stability that do not rely on the Teukolsky system see also \cite{Hintz,Lili}.}
However, the use of the Teukolsky formalism, while powerful, comes at a cost: it requires the exploitation of the background’s Petrov type and principal null directions. Moreover, the implementation of standard analytic tools available for the wave equation do not \textit{straightforwardly} carry over to the decoupled Teukolsky equations. Instead one relies on the transformation theory linking the gauge-invariant Teukolsky variables to solutions of better-behaved equations (e.g., Regge–Wheeler and Zerilli-type variables).

In this paper, we pursue a different strategy. We develop a proof of linear stability for the Reissner–Nordström spacetime that avoids altogether the Teukolsky formalism and the associated machinery of decoupled curvature perturbations. Instead, our method exploits the existence of conservation laws that are naturally inherited by the linearized Einstein–Maxwell system from the Killing symmetries of the background spacetime (see already Section~\ref{sec:introduction-canonical} below). By adopting a perspective based on the canonical energy conservation law, and building on the recent approach developed in Schwarzschild by Holzegel and the second author \cite{HolCL16,ColHol24}, we show that these conserved quantities provide uniform control of the gauge-invariant Teukolsky variables. Remarkably, one can establish boundedness of their energy fluxes without referencing their wave-type equations.

A key ingredient in our analysis is the use of the double null gauge, in which the spacetime is foliated by two intersecting families of null hypersurfaces. In double null gauge, one represents the gravitational and electromagnetic perturbations not only through variations of the metric, but also via associated variations in the connection coefficients and curvature components. At the non-linear level, this involves decomposing the metric, connection, and Weyl curvature with respect to a suitably chosen null frame and expressing the geometric structure—via the Levi-Civita conditions, the definition of the Riemann tensor, and the Bianchi identities—in terms of frame components. Linearizing these equations then yields a redundant but highly structured system that is well-suited for analyzing stability problems.
The main advantage of this formulation is the particularly symmetric and simple expression that the conserved quantities take, particularly adept to prove coercivity. 

The approach to linear stability from conserved quantities offers conceptual and technical simplifications. The conservation laws we employ are robust and geometrically natural, providing a direct route to boundedness without relying on transformation theory or intricate algebraic identities. In this way, this method, which was initiated by Holzegel in \cite{HolCL16} in the case of the Schwarzschild black hole, also proves to be fruitful in the broader setting of the Einstein--Maxwell equations by producing a more transparent stability mechanism in Reissner–Nordström.

\paragraph{The System of Teukolsky equations and Previous Works.}

The linearization of the Einstein–Maxwell equations around the Reissner–Nordström background yields a highly coupled system involving perturbations of the metric, the electromagnetic field, and the curvature. Working in a suitably adapted null frame, one finds that the complexity of the system arises not only from gauge freedom but also from the interactions between the gravitational and electromagnetic components. A direct analysis of this full linearized system is, in general, prohibitively difficult.

A more tractable approach—motivated by developments in the vacuum setting—is to isolate gauge-invariant quantities that capture the physical degrees of freedom emitted by the perturbation. In the Schwarzschild and Kerr spacetimes, the celebrated Teukolsky formalism identifies certain null curvature components, denoted $\upalpha^{[\pm2]}$, which satisfy decoupled wave equations. These so-called Teukolsky equations reduce the analysis of the full system to scalar wave-type equations for spin $\pm2$ fields. However, in the Einstein–Maxwell setting, the structure becomes significantly more involved: although gauge-invariant quantities can still be identified, the equations they satisfy no longer fully decouple. Instead, they obey a coupled system of Teukolsky-type equations, involving spin $\pm1$ and $\pm2$ scalar fields.

In prior work by the third author~\cite{Giorgi4, Giorgi5}, a collection of such gauge-invariant quantities was first identified, specifically, $\upalpha$, $\mathfrak{b}$, $\mathfrak{f}$, $\mathfrak{x}$, along with their ingoing counterparts.\footnote{The scalar version equivalent of the quantity $\mathfrak{b}$, corresponding in Newman-Penrose formalism to $2\Psi_1\phi_1-3\phi_0\Psi_2$, appeared in page 240 of \cite{Chandrasekhar2}, where it is noted that such combination is invariant to first order for infinitesimal
rotations.} These tensorial quantities satisfy a system of coupled equations governed by spin-weighted Teukolsky operators, see already equations \eqref{eq:teuk-eq-a}--\eqref{eq:teuk-eq-xx} for their exact expressions. These equations capture the essential features of the linearized dynamics, but the couplings on the right-hand side prevent the application of techniques for decoupled wave equations. Nonetheless, the Teukolsky system for Reissner–Nordström, and its generalization to Kerr–Newman \cite{Giorgi7}, has provided a critical framework for understanding the propagation and decay properties of perturbations.

Despite the absence of conserved energies for the Teukolsky variables at the level of first derivatives---even in vacuum---substantial progress has been made in establishing stability properties of the Reissner–Nordström spacetime. Key breakthroughs were achieved by introducing derived quantities—higher-order combinations of the gauge-invariant fields—that satisfy Regge–Wheeler-type wave equations, first introduced in \cite{DHR} in Schwarzschild. These derived equations allow to derive energy estimates, allowing for the application of the vector field method to obtain uniform energy boundedness, integrated decay, and pointwise control.

In the work \cite{Giorgi7a}, the third author established boundedness and decay estimates for $\fflin, \blin$ and $\underline{\fflin}, {\bblin}$ in the full subextremal range $|Q|<M$ of Reissner–Nordström spacetimes. This analysis was based on the transformation of these quantities into higher-order variables, using the so-called Chandrasekhar transformation. The transformed variables satisfy a symmetric system of coupled Regge–Wheeler-type equations, which admits a combined energy-momentum tensor, allowing for coercive energy and Morawetz estimates in the full subextremal range. Building on this, the first author extended the analysis to the extremal Reissner–Nordström case $|Q|=M$ in \cite{Apetroaie24}, where both stability and instability phenomena along the event horizon, known as Aretakis instabilities, are known to occur.
In these works, the wave equations for $\alin$, $\xflin$ and $\aalin$, $\underline{\xflin}$ have not been used, but rather these quantities have been estimated through simpler transport equations. In the subextremal range $|Q|<M$, control of these gauge-invariant quantities has been proved to estimate all the remaining metric, curvature and electromagnetic components upon a choice of gauge in \cite{Giorgi6}.

A further extension of this framework has been achieved for perturbations of the Kerr–Newman spacetime \cite{Giorgi7}, where a generalisation of the coupled system has been formulated. In this more general, rotating and charged setting, boundedness and decay estimates have been obtained in the regime $|a|, |Q|\ll M$ in \cite{Giorgi9} and for $|a| \ll M, |Q| < M$ in axial symmetry in \cite{GiorgiWan}.  For a proof of linear stability of slowly rotating and weakly charged Kerr-Newman black hole that does not rely on the Teukolsky system see \cite{Lili}.

In the present work, we take a different approach. We provide an alternative proof of the boundedness of the energy fluxes associated with the gauge-invariant quantities along any outgoing null cone. Notably, our method does not rely on the Chandrasekhar transformation or on the use of the Teukolsky equations governing these perturbations. Instead, the argument is grounded in a novel conservation law, for the system of linearised Einstein–Maxwell equations, which arises from the canonical energy.

\subsection{Canonical Energy and Conservation Laws}\label{sec:introduction-canonical}
As mentioned above, the Teukolsky formalism for studying linearised gravity on the Reissner–Nordström background has no known conservation law at the level of first derivatives of the Teukolsky variables. However, if one admits the context of the full system of linearized Einstein–Maxwell equations then indeed one does inherit conservation laws from the symmetries of the Reissner–Nordström background.

A particularly well-known conservation law for stationary backgrounds is the canonical energy which originates in a work of Friedman~\cite{Friedman78} and was developed by Hollands and Wald \cite{HollandsWald13}. Encoded in this work is a systematic method for constructing conserved currents for a linearised Einstein-matter system from a given Einstein--Hilbert action coupled to matter (see for example~\cite{Keir14}).

For the Einstein--Maxwell system, the conserved currents can be constructed by considering antisymmetrized second variations of the Einstein–Hilbert Lagrangian coupled with electromagnetism:
\begin{align}
    L[\mathrm{g},A]\doteq \Big[\mathrm{Scal}[\mathrm{g}]-\frac{\kappa}{4}|\mathrm{F}|^2_{\mathrm{g}}\Big]\eps,
\end{align}
where $\mathrm{F}=dA$ is the usual electromagnetic tensor, $\kappa$ is a coupling constant and $\eps$ is the volume form.
Assume that the background fields $\mathrm{g}$ and $A$ are embedded into a two-parameter family of perturbations $\mathrm{g}=\mathrm{g}(\lambda_1,\lambda_2)$ and $A=A(\lambda_1,\lambda_2)$ and assume that $\mathrm{g}(0)$ satisfies the Einstein–Maxwell equations while the linear variations $h_i=\partial_{\lambda_i}\mathrm{g}(0)$ and $a_i=\partial_iA(0)$ satisfy the linearized system. Under these assumptions, the antisymmetrized second variation of the Lagrangian,
\begin{align*}
    0&=\big[\partial_{\lambda_1}\partial_{\lambda_2} L-\partial_{\lambda_2}\partial_{\lambda_1} L\big]\big|_{\lambda_1=0,\lambda_2=0},
\end{align*}
gives rise to a divergence-free current involving $h_1$ and $h_2$. If one has a stationary background then selecting $(h_1,a_1)=(h,a)$ and $(h_2,a_2)=(L_Th,L_Ta)$ corresponds to the canonical energy current of~\cite{HollandsWald13} (see also~\cite{Col23}).

The obstruction that one runs into when attempting to successfully extract bounds from conservation laws for a linearised Einstein-matter system is the opacity of coercivity (on a given foliation) of the conservation law, even in a well-chosen chosen gauge. This is to be expected given the coupled nature of the Einstein equations.

In~\cite{HolCL16}, Holzegel found a resolution to this coercivity problem in the Schwarzschild case. The conservation laws there were established “by direct inspection” from the system of gravitational perturbations expressed in double null gauge. It was later shown in~\cite{Col23} that Holzegel's conservation laws were in direct correspondence with the canonical energy. The conservation laws in~\cite{HolCL16} are expressed in terms of flux integrals over regions bounded by outgoing and ingoing null hypersurfaces, as illustrated in the Penrose diagram below.
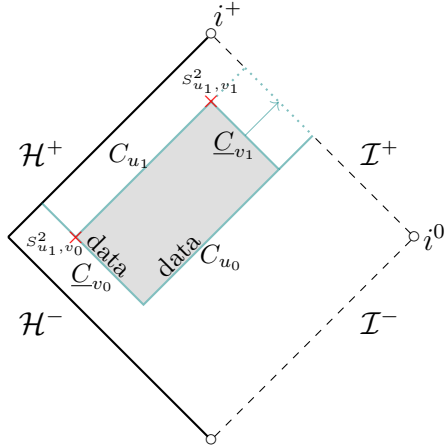
\begin{figure}[H]\label{figure1}
\centering
\begin{tikzpicture}[scale=0.9]
      \draw [thick] (6,3) -- (8.95,5.95);
      \draw [dashed]  (11.95,3.05)--(10.5,4.5);
       \draw [dashed] (9.05,5.95) -- (9.5,5.5);
       \draw [dashed] (9.05,0.05) -- (11.95,2.95);
    \draw [thick] (6,3) -- (8.95,0.05);

    \draw [thick,teal,fill=lightgray,opacity=0.5] plot  coordinates {(7,3)(9,5)(10,4)(8,2)(7,3)};
    
     \draw [thick,teal,opacity=0.5] (6.5,3.5) -- (7,3);
     \draw [dotted, thick,teal,opacity=0.5] (9,5) -- (9.5,5.5);
      \draw [thick,teal,opacity=0.5] (10,4) -- (10.5,4.5);
       \draw [dotted,thick,teal,opacity=0.5] (9.5,5.5) -- (10.5,4.5);
       
           \draw[teal,->,opacity=0.5] (9.5,4.5) -- (10,5);

      
      \node[mark size=2pt,inner sep=2pt] at (12,3) {$\circ$};
      \node[mark size=2pt] at (9,6) {$\circ$};
          \node[mark size=2pt] at (9,0) {$\circ$};

      \node (scriplus) at (11.5,4.25) {\large $\mathcal{I}^{+}$};
        \node (Hplus) at (6.5,4.25) {\large $\mathcal{H}^{+}$};   
        \node (scriplus) at (11.5,1.75) {\large $\mathcal{I}^{-}$};
        \node (Hplus) at (6.5,1.75) {\large $\mathcal{H}^{-}$};   
      \node (iplus) at (9.25,6.3) {\large $i^{+}$};
      \node (inaught) at (12.3,3) {\large $i^{0}$};
       \node at (7.8,4.2) {$C_{u_1}$};
            \node  at (9.15,2.7) {$C_{u_0}$};
            \node at (7.25,2.4) {$\underline{C}_{v_0}$};
            \node at (9.35,4.35) {$\underline{C}_{{v_1}}$}; 
            \node[rotate=45] at (8.5,2.75) {$\mathrm{data}$};
            \node[rotate=-45] at (7.5,2.75) {$\mathrm{data}$};

            \node at (9,5) {\color{red}{$\times$}};
            \node at (7,3) {\color{red}{$\times$}};
             \node at (9,5.3) {{\tiny $S^2_{u_1,v_1}$}};
            \node at (6.7,2.9) {{\tiny $S^2_{u_1,v_0}$}};

      \end{tikzpicture}  
      \caption{Initial data is prescribed on intersecting outgoing and ingoing null cones $C_{u_0}$ and $\underline{C}_{v_0}$, and the region of interest is bounded by later null cones $C_{u_1}$ and $\underline{C}_{v_1}$. The red crosses denote the spheres $S_{u_1,v_1}^2$ and $S_{u_1,v_0}^2$.}
\end{figure}

In~\cite{HolCL16}, the fluxes on these cones are shown to be gauge invariant up to boundary terms on spheres. By adding an appropriate pure gauge solution normalised to the outgoing cone $C_{u_1}$, the work in~\cite{HolCL16} establishes positivity of the fluxes after sending also the ingoing cone $\underline{C}_{v_1}$ to null infinity. This gives control on the linearised shear of the outgoing null cone $C_{u_1}$ and on the flux radiated to null infinity. A simplifying feature of double null gauge, which aids in establishing coercivity, is that one can readily use the linearised constraint equations.

In~\cite{ColHol24}, this work was extended to show how one can ascend a hierarchy in the system of linearised gravity to control on the gauge-invariant Teukolsky quantities from the coercive estimates obtained from these conservation laws rather than the transformation theory to Regge--Wheeler quantities~\cite{DHR}.

The present paper extends~\cite{HolCL16} and~\cite{ColHol24} to the charged case, namely the non-vacuum Reissner–Nordström spacetime. In particular, as in~\cite{ColHol24}, we demonstrate that without appealing to the full structure of the Teukolsky or Regge–Wheeler systems, one can obtain uniform boundedness for the Teukolsky variables $\bflin$, $\fflin$ and $\alin$, along outgoing null cones. Whilst our result is restricted to the range of charge parameter $|Q|<\frac{\sqrt{15}}{4}M$, our approach simplifies the argument from~\cite{HolCL16} for coercivity of the canonical energy conservation law by \textit{avoiding the use of gauge adjustments along the outgoing null cone}, which becomes particularly simple in the case of Schwarzschild.

\subsection{The Uniform Boundedness Theorem}

Our main result establishes a uniform energy boundedness
for the gauge invariant quantities $\bflin$, $\fflin$ and $\alin$, and their negative spin counterparts, which capture electromagnetic and gravitational perturbations respectively. As mentioned above, the proof relies only on conservation laws and transport estimates in double null gauge — avoiding any use of the Teukolsky or Regge–Wheeler equations. 
Despite the minimal toolkit, we obtain a uniform energy bound along outgoing null hypersurfaces, provided the charge-to-mass ratio satisfies $\frac{|Q|}{M}<\frac{\sqrt{15}}{4}$.

We now state a rough version of the main result:

\begin{theorem}\label{main-theorem} Let $\bflin$, $\fflin$, $\alin$ and  $\underline{\bflin}$, $\underline{\fflin}$, $\ablin$ denote the gauge-invariant Teukolsky variables (defined through equations~\eqref{eq:curvature},~\eqref{eq:definitions-ff} and~\eqref{eq:definitions-bb}). Given a smooth solution of the system of linearised perturbations arising from characteristic initial data on $C_{u_0} \cup\underline{C}_{v_0}$ on the Reissner--Nordstr\"om exterior, with charge-to-mass satisfying $\frac{|Q|}{M}<\frac{\sqrt{15}}{4}$, the following energy estimate holds:
\begin{align}\label{eq:final-statement-theorem}
   \mathrm{E}_{u}[\bflin]+ \mathrm{E}_{u}[\fflin]+\mathrm{E}_{u}[\alin]+ \mathrm{E}_{u}[\underline{\bflin}]+ \mathrm{E}_{u}[\underline{\fflin}]+\mathrm{E}_{u}[\aalin] \lesssim E_{data}(u),
\end{align}
on any outgoing null cone $C_u$, where the $\mathrm{E}_{u}$ denotes standard $L^2$-based coercive energy fluxes involving tangential derivatives of the Teukolsky variables along the outgoing null cone (see \eqref{eq:definition-norms-positive-spin}-\eqref{eq:definition-norms-negative-spin} for the exact definition) and $E_{data}(u)$ denotes a suitable energy norm of the initial data which is uniformly controlled. 

A suitable higher derivative version of the statement also holds upon commutation with the symmetries of the spacetime.
\end{theorem}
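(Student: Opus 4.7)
The plan is to adapt to Reissner--Nordström the strategy pioneered by Holzegel \cite{HolCL16} and extended by Collingbourne--Holzegel \cite{ColHol24} in vacuum Schwarzschild. First I would set up the full system of linearised Einstein--Maxwell equations in double null gauge, writing the perturbations of the metric, the null connection coefficients, the Weyl curvature components and the Maxwell field with respect to a suitable null frame. From this system I would extract a divergence-free current, most cleanly by taking the antisymmetrised second variation of the coupled Einstein--Hilbert--Maxwell Lagrangian $L[\mathrm{g},A]$ described in Section~\ref{sec:introduction-canonical}, with the two variations corresponding to $(h,a)$ and $(\mathcal{L}_T h,\mathcal{L}_T a)$ where $T$ is the stationary Killing field. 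Integrating this conservation law over the region bounded by $C_{u_0}$, $\underline{C}_{v_0}$, $C_{u_1}$ and $\underline{C}_{v_1}$ yields a flux identity relating the data on $C_{u_0}\cup\underline{C}_{v_0}$ to the fluxes on $C_{u_1}$ and $\underline{C}_{v_1}$ up to sphere boundary terms on $S^2_{u_1,v_0}$ and $S^2_{u_1,v_1}$.

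The second step, which I expect to be the main obstacle, is to establish \emph{coercivity} of the canonical energy flux on an outgoing cone $C_u$ in terms of the gauge-invariant Teukolsky quantities. The raw flux is a quadratic form in a redundant list of linearised connection, curvature and Maxwell components, and is not manifestly positive. Following the philosophy of \cite{HolCL16}, I would use the linearised null structure and Codazzi-type equations to rewrite it, modulo sphere boundary terms, as a quadratic expression in $\xlin$, $\xblin$, the linearised Maxwell components, and the Weyl curvature components. The Reissner--Nordström background introduces cross terms, coming from the couplings of the perturbed $\mathrm{Ric}$--$\mathrm{F}^2$ equation, whose size is governed by the charge $Q$. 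The threshold $\frac{|Q|}{M}<\frac{\sqrt{15}}{4}$ is expected to emerge precisely as the range in which the resulting quadratic form remains pointwise positive after completing squares against the gravitational terms. Unlike \cite{HolCL16}, no gauge adjustment along the outgoing cone would be performed, following the simplification of \cite{ColHol24}.

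Third, once the flux on $C_u$ is shown to coercively dominate an $L^2$ norm of $\xlin, \xblin$ and the Maxwell/Weyl components, I would ascend the hierarchy in the linearised system as in \cite{ColHol24}. The definitions~\eqref{eq:definitions-ff} and~\eqref{eq:definitions-bb} express $\fflin$ and $\bflin$ as first-order combinations of the controlled quantities, so $\mathrm{E}_u[\fflin]$ and $\mathrm{E}_u[\bflin]$ (together with their negative-spin analogues) follow by angular and transport estimates on $C_u$. Control of $\mathrm{E}_u[\alin]$ and $\mathrm{E}_u[\aalin]$ is then recovered via the linearised Bianchi identity relating $\alin$ to angular derivatives of $\blin$ plus Maxwell couplings of the form $\fflin$, integrated along $C_u$. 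Poincaré and div-curl type estimates on the spheres $S^2_{u,v}$ absorb the tangential derivatives appearing in \eqref{eq:final-statement-theorem}, and the initial boundary terms combine into the quantity $E_{data}(u)$.

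Finally, the higher-derivative version of the theorem follows by commuting the entire argument with the Killing symmetries $T$ and the generators of the spherical isometry of Reissner--Nordström: these preserve both the canonical energy current and the hierarchy used to pass from $\xlin,\xblin$ to $\bflin,\fflin,\alin$. The hardest single step remains the coercivity argument of the second paragraph, where the interplay between the gravitational and electromagnetic contributions to the canonical flux selects the sharp constant $\frac{\sqrt{15}}{4}$ in the charge-to-mass ratio.
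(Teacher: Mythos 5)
Your overall architecture --- conservation law, coercivity of the outgoing flux, hierarchy of transport estimates, commutation with $T$ and the angular Killing fields --- is the same as the paper's, and your Steps 1, 3 and 4 are essentially the route taken there (modulo the minor difference that the paper controls $\alin$ through the $\ns_4$-equation for $\Omega\chilin$ rewritten in terms of $\ns_T$-derivatives, reserving the Bianchi-type identity \eqref{eq:Bianchi-aa-ff-bb} for $\aalin$, and that the hierarchy is driven throughout by the transport Lemma~\ref{lem:tplemma} rather than by div-curl estimates). The genuine gap is in Step 2, which you rightly identify as the crux but whose mechanism you misdescribe, and this matters because it is precisely where the constant $\tfrac{\sqrt{15}}{4}$ comes from.

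The flux $E_{u_1}$ in \eqref{eq:def-Eu} fails to be manifestly coercive only because of the single cross term $r\otx(\rlin+\divs\eblin)$. The paper does \emph{not} remove it by completing squares pointwise against the gravitational terms; instead it introduces the auxiliary quantities $\lambda$ and $X$ of \eqref{eq:definition-lambda} and \eqref{definition-X}, whose transport equations $\partial_v\lambda=-\otx$ and $\partial_v X=6M\big(1-\tfrac{r_c}{r}\big)\otx-8Q\Omega^2\rhoFlin$ (Lemma~\ref{lemma:notable-transport-equations}, with $r_c=\tfrac{4Q^2}{3M}$) allow one to write $\tfrac12 X\otx$ as a total $\partial_v$-derivative plus $2\big|r\Omega\rhoFlin-\tfrac{Q\Omega\lambda}{r}\big|^2-2|r\Omega\rhoFlin|^2$. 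After this rewriting the \emph{bulk} of the outgoing flux is positive for every subextremal $Q$; what remains is an indefinite boundary term on the final sphere $S^2_{u_1,v_1}$ containing $-\tfrac{1}{24M}|X+6M\lambda|^2$. That term is controlled by integrating the transport equation \eqref{eq:criticaltransporteq} for $X+\big(1-\tfrac{r_c}{r}\big)6M\lambda$ along $C_{u_1}$ via Lemma~\ref{lem:tplemma}, and the resulting bulk error $\tfrac{32Q^2}{24Mr_+}\int 2r^2\Omega^2|\tfrac{Q\lambda}{r^2}-\rhoFlin|^2$ can be absorbed into the coercive flux if and only if $r_+\geq r_c$, i.e. $|Q|\leq\tfrac{\sqrt{15}}{4}M$. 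So the charge restriction is an \emph{absorption condition in a transport estimate for a sphere boundary term}, not a pointwise positivity condition on the quadratic form of the flux; a plan that searches for $\tfrac{\sqrt{15}}{4}$ by completing squares in the bulk will not find it, and as written your proposal omits the treatment of the final-sphere boundary term on which the coercivity argument actually hinges. You would need to supply the quantities $\lambda$, $X$ (and $\lb$, $\Xb$ for the incoming flux, whose boundary term is instead shown to vanish as $v_1\to\infty$ using the decay assumptions \eqref{decreten} and the Gauss equation) together with their transport equations before Step 2 can be carried out.
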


This yields uniform $L^2$ control on spheres by standard arguments (using Cauchy–Schwarz and the fundamental theorem of calculus), and further implies pointwise bounds via angular commutation and Sobolev embedding: 
\begin{align*}
\sup|r^4\Omega\bflin|,\qquad \sup|r^2\Omega\fflin|, \qquad \sup |r\Omega^2\alin|, \qquad \sup|r\Omega^2\underline{\fflin}|, \qquad \sup |r^3\Omega^2\underline{\bflin}|, \qquad \sup |r\Omega^3 \aalin| \lesssim \text{initial energy.}
\end{align*}

\subsubsection{Key Steps in the Proof}
The proof proceeds in two main steps:

\paragraph{Step 1: A Coercive Energy Estimate from the Canonical Energy Conservation Law.}

The linearised Einstein–Maxwell system admits a conservation law on the characteristic rectangle shown in Figure 1, see Proposition~\ref{prop:conservationlaws}. Schematically, the conservation law is given by
    \begin{align*}
E_{u_1}[\mathcal{S}](v_0,v_1)+E_{v_1}[\mathcal{S}](u_0,u_1)&=E_{u_0}[\mathcal{S}](v_0,v_1)+E_{v_0}[\mathcal{S}](u_0,u_1).
\end{align*}
where $E_{u_i}[\mathcal{S}](v_0,v_1)$ denotes the flux on the incoming cone $C_{u_i}$ and $E_{v_i}[\mathcal{S}](u_0,u_1)$ denotes the flux on the outgoing cone $\underline{C}_{v_i}$. These energy fluxes are given in terms of Christoffel symbols, electromagnetic and curvature components, see \eqref{eq:def-Eu}-\eqref{eq:def-Ev} for their definition.

We emphasize that the mere existence of a conserved energy does not imply that this energy is positive or coercive, and therefore useful to obtain boundedness statements. Indeed, for general linearized Einstein–Maxwell perturbations, there is no a priori reason to expect positivity, particularly in the presence of coupling between gravitational and electromagnetic radiations. 
The expressions of the energy fluxes $E_{u_i}$ and $E_{v_i}$ are not \textit{manifestly} coercive due to the presence of mixed term coupling curvature and connection components (see the last term in equations~\eqref{eq:def-Eu}-\eqref{eq:def-Ev}). To overcome this, we rewrite the outgoing energy flux as a sum of (partially) gauge-invariant positive terms and two (indefinite sign) boundary terms on the final future sphere $S^2_{u_1,v_1}$ and the initial data sphere $S^2_{u_1,v_0}$ (denoted with red crosses in Figure~\ref{figure1}). Then, to extract a coercive estimate one establishes the following:
\begin{itemize}
    \item The limit of the incoming flux on $\underline{C}_{v_1}$ is coercive as $v_1 \to \infty$ (towards null infinity).
    \item The boundary term final future sphere $S^2_{u_1,v_1}$ can be absorbed into the flux via a transport estimate. It is in this estimate where the $|Q|<\frac{\sqrt{15}}{4}M$ charge restriction comes into the analysis.\footnote{The case $|Q|=\frac{\sqrt{15}}{4}M$ can be dealt with at this stage. However, one loses control of the flux of some quantities which prohibits implementing step 2.} In the case of Schwarzschild, one does not require an estimate at all; the boundary term can be shown to be manifestly positive.
\end{itemize}

Thus, coercivity of the conserved energy does hold in the Reissner–Nordström spacetime under the charge restriction, yielding uniform control on the coercive energy along $C_{u_1}$. This is obtained in Theorem \ref{thm:mastercontrol}.

\paragraph{Step 2: A Hierarchy of Transport Estimates.}

Once control is obtained for certain curvature and connection components from the conservation law, we ascend a hierarchy of transport equations to recover control on the Teukolsky variables $\bflin$, $\underline{\bflin}$, $\fflin$, $\underline{\fflin}$, $\alin$, $\ablin$ and their first derivatives. The derivation of these estimates rely on transport and elliptic estimates, allowing us to propagate them without any further restriction on $Q$. This is obtained in Section \ref{section-hierarchy}.

The combination of the coercive energy of Step 1 and transport estimates here yields the full energy estimate of Theorem~\ref{main-theorem}.

\begin{remark}
Theorem \ref{main-theorem} holds under the suboptimal charge restriction $|Q|<\frac{\sqrt{15}}{4}M$, which is needed in the Step 1 explained above to obtain a coercive estimate from the conservation law (see the proof of Theorem~\ref{thm:mastercontrol}).

While previous results (e.g., in \cite{Giorgi7a}) apply throughout the full subextremal range $|Q|<M$ by analyzing the Teukolsky and Regge–Wheeler systems, the simplicity and robustness of our method — which avoids any use of decoupling or second-order system of wave equations — comes at the cost of this more restrictive condition. It is not clear whether coercivity of the canonical flux can be extended to all subextremal charges, and the threshold $|Q|=\frac{\sqrt{15}}{4}M$ may signal a genuine transition, as extra gauge invariant quantities appear at this critical charge. Moreover, as mentioned above, our method refines the proof of coercivity by expressing the conserved fluxes in terms of gauge invariant quantities, up to boundary terms, and therefore is only affected by gauge choices in a limited manner. For more details on the optimality of the charge-to-mass ratio see Appendix \ref{app:optimality}.
\end{remark}

\subsubsection*{Acknowledgments} M.A.A. acknowledges support by the Alexander von Humboldt Foundation in the framework of the Alexander von Humboldt Professorship of Gustav Holzegel endowed by the Federal Ministry of Education and
Research, as well as Germany’s Excellence Strategy EXC 2044 390685587, Mathematics Münster: Dynamics-Geometry-Structure.  S.C.C acknowledges support from the Royal Society University Research Fellowship URF\textbackslash R1\textbackslash 211216. E.G. acknowledges the support of NSF Grants DMS-2306143, DMS-2336118  and of a grant of the Sloan Foundation.

\section{Preliminaries}

Here we provide preliminaries regarding the Reissner-Nordstr\"om spacetime, the linearised Einstein-Maxwell equations and the conservation laws in this setting.

\subsection{The Reissner--Nordstr\"om Spacetime}

In this section, we summarize the geometric structures of the Reissner–Nordström spacetime that will be used throughout the paper. Our treatment closely follows \cite{DHR, Giorgi6}, to which we refer the reader for further details and background.

The Reissner–Nordström family describes static, spherically symmetric solutions to the Einstein–Maxwell equations, representing charged, non-rotating black holes. 
The metric and electromagnetic tensor are explicitly given by:
\begin{align*}
    g_{M, Q}=-D(r)(dt)^2 +\frac{1}{D(r)}(dr)^2+r^2\gamma_{S^2},\qquad \mathrm{F}=\frac{Q}{r^2}dt\wedge dr,\qquad D(r)\doteq 1-\frac{2M}{r}+\frac{Q^2}{r^2},
\end{align*}
where $\gamma_{S^2}$ is the standard metric on the unit 2-sphere. The coordinates $(t,r,\theta,\phi)$ cover the exterior region $r>r_+$, where $r_+\doteq M+\sqrt{M^2-Q^2}$ the largest root of $r^2D(r)$ and corresponds to event horizon.

We restrict attention to the exterior region of the black hole, where the metric is expressed conveniently in double null Eddington–Finkelstein coordinates
$(u,v,\theta,\phi)\in \mathbb{R}^2\times{S}^2_{u,v}$. These are related to the standard Boyer–Lindquist coordinates $(t,r,\theta,\phi)$ by
\begin{align*}
u=\frac{1}{2}(t-r_{\star}),\qquad v=\frac{1}{2}(t+r_{\star}) , 
\end{align*}
where the Regge–Wheeler (or tortoise) coordinate $r_{\star}(r)$ satisfies
\begin{align*}
\frac{dr_{\star}(r)}{dr}=\frac{1}{1-\frac{2M}{r}+\frac{Q^2}{r^2}} \, .
\end{align*}
 Here, $M$ and $Q$ denote the mass and charge of the black hole, respectively, with the subextremal and extremal range given by $|Q|\leq M$. When $Q=0$, the Reissner–Nordström spacetime reduces to the Schwarzschild solution.

In $(u,v,\theta,\phi)$ coordinates, the Reissner--Nordstr\"om metric  takes the form
\begin{align*}
g=-2\Omega^2(u,v)(du\otimes dv+dv\otimes du)+\slashed{g},\qquad \slashed{g}=r^2(u,v){\gamma},
\end{align*}
where $\Omega^2(u,v)=1-\frac{2M}{r(u,v)}+\frac{Q^2}{r(u,v)^2}$ and~${{\gamma}}$ is the standard metric on the unit~$2$-sphere. 

The level sets of $u$ and $v$, denoted $C_u$ and $\underline{C}_{v}$ respectively, are outgoing and ingoing null hypersurfaces. It is important to note that the double null coordinates $(u,v)$ do not extend to cover the 
future event horizon, $\mathcal{H}^+$, or future null infinity, $\mathcal{I}^+$, but these can still be parametrized formally as $(\infty,v,\theta,\phi)$ and $(u,\infty,\theta,\phi)$.

To analyze tensorial quantities, we introduce a double null frame adapted to this coordinate system:
\begin{align}\label{eq:null-frame}
e_3\doteq\frac{1}{\Omega}\partial_u,\quad e_4\doteq\frac{1}{\Omega}\partial_v
\end{align}
together with a local orthonormal frame $\{e_A\}_{A=1,2}$ tangent to the spheres $S^2_{u,v}$. We primarily consider tensors tangent to $S^2_{u,v}$, which vanish upon contraction with $e_3$ or $e_4$.

Following \cite{CK93, DHR, Giorgi6}, we define the Ricci coefficients:
\begin{align*}
    \chi_{AB}\doteq g(\nabla_{e_A}e_4, e_B), \qquad \underline{\chi}_{AB}\doteq g(\nabla_{e_A}e_3, e_B), \\
    \eta_A\doteq -\frac 1 2 g(\nabla_{e_3}e_A, e_4), \qquad \underline{\eta}_A\doteq -\frac 1 2 g(\nabla_{e_4}e_A, e_3), \\
   \omega\doteq-\frac{\Omega}{2}g(\nabla_{e_4}e_4,e_3) , \qquad \underline{\omega}\doteq-\frac{\Omega}{2}g(\nabla_{e_3}e_3,e_4),
\end{align*}
as well as the null Weyl curvature components:
\begin{equation}
\begin{aligned}\label{eq:curvature}
    \alpha_{AB}\doteq\mathrm{W}(e_A, e_4, e_B, e_4), \qquad \underline{\alpha}_{AB}\doteq\mathrm{W}(e_A, e_3, e_B, e_3), \\
    \beta_A\doteq\frac 1 2 \mathrm{W}(e_A, e_4, e_3, e_4), \qquad   \underline{\beta}_A\doteq\frac 1 2 \mathrm{W}(e_A, e_3, e_3, e_4),\\
    \rho\doteq\frac{1}{4}\mathrm{W}(e_3,e_4,e_3,e_4), \quad \sigma\doteq\frac{1}{4}[\star\mathrm{W}](e_3,e_4,e_3,e_4),
\end{aligned}
\end{equation}
and the null electromagnetic components:
\begin{align*}
    \bF\doteq \mathrm{F}(e_A, e_4), \qquad \bbF\doteq \mathrm{F}(e_A, e_3), \\
    \rhoF\doteq \frac{1}{2}\mathrm{F}(e_3,e_4), \qquad \sigmaF\doteq \frac 1 2 [\star \mathrm{F}](e_3,e_4).
\end{align*}
We also denote $\hat{\om}=\Omega^{-1}\om$ and $\hat{\omb}=\Omega^{-1}\omb$ for convenience. 

In the frame \eqref{eq:null-frame}, the only non-vanishing Ricci coefficients are:
\begin{align*}
\Omega\mathrm{tr}\chi\doteq\Omega \slashed{g}^{AB}\chi_{AB}=\frac{2\Omega^2}{r},\qquad  \Omega\mathrm{tr}\underline{\chi}\doteq\Omega \slashed{g}^{AB}\underline{\chi}_{AB}=-\frac{2\Omega^2}{r}, \qquad 
\omega=\frac{M}{r^2}-\frac{Q^2}{r^3},\qquad  \underline{\omega}=-\frac{M}{r^2}+\frac{Q^2}{r^3},
\end{align*}
and the only non-zero null curvature and electromagnetic components are
\begin{align*}
\rho=-\frac{2M}{r^3}+\frac{2Q^2}{r^4}, \qquad \rhoF=\frac{Q}{r^2}.
\end{align*}
The Gaussian curvature $K$ of each sphere $(S^2_{u,v}, \slashed{g})$ satisfies
\begin{align*}
    K=\frac{1}{r^2}.
\end{align*}

Several important relations in double null gauge will be frequently used: 
\begin{align*} 
\ns_3\Omega=\omb, \qquad \ns_4\Omega=\om,\qquad \partial_v r= \Omega^2, \qquad \partial_u r = -\Omega^2.
\end{align*} 
The spacetime possesses symmetries encoded by a static Killing vector field $T$, given in double null coordinates by
\begin{align}\label{eq:definition-T}
T=\frac{1}{2}(\partial_u+\partial_v)=\frac{\Omega}{2}(e_3+e_4).
\end{align}

Finally, associated to the spherical symmetry of the Schwarzschild spacetime, we have three angular momentum operators generating the Lie algebra $\mathfrak{so}(3)$, which can be expressed explicitly expressed in standard $(\theta,\phi)$ coordinates by
\begin{align*}
\Omega_1=\sin\phi\partial_{\theta}+\cot\theta\cos\phi\partial_{\phi},\qquad\Omega_2=\cos\phi\partial_{\theta}-\cot\theta\sin\phi\partial_{\phi},\qquad \Omega_3=\partial_{\phi}.
\end{align*}

\subsection{The $S^2_{u,v}$-Tensor Algebra}

We begin by recalling the main conventions and norms associated with the algebra of $S^2_{u,v}$-tensor fields. Throughout, let $\Theta$ and $\Phi$ denote $n$-covariant $S^2_{u,v}$-tensors. We define the pointwise inner product and associated norms via
\begin{align*}
\langle \Theta,\Phi\rangle&\doteq\slashed{g}^{A_1B_1}...\slashed{g}^{A_nB_n}\Theta_{A_1...A_n}\Phi_{B_1...B_n},\qquad
|\Theta|^2\doteq\langle\Theta,\Theta\rangle, 
\end{align*}
where $\slashed{g}$ denotes the induced metric on $S^2_{u,v}$. Integrating over the spheres, we define the $L^2$ inner product and corresponding norm by
\begin{align*}
\langle \Theta,\Phi\rangle_{u,v}&\doteq\int_{S^2}\langle \Theta,\Phi\rangle(u,v,\theta,\phi)\varepsilon_{S^2}, \qquad ||\Theta||_{u,v}\doteq \left(\int_{S^2}|\Theta|^2(u,v,\theta,\phi)\varepsilon_{S^2}\right)^{\frac 12},
\end{align*}
where $\varepsilon_{S^2}$ denotes the induced volume form on $S^2$.

To handle derivatives intrinsic to the spheres, we introduce the projected covariant derivatives $\ns_3$, $\ns_4$, and $\ns_A$. These are defined by extending $\Theta$ trivially to a spacetime covariant tensor field and subsequently projecting the spacetime covariant derivatives onto the sphere frame ${e_A}$. A direct computation yields
\begin{align*}
\ns_3\Theta_{A_1...A_n}&=e_3(\Theta_{A_1...A_n})-\frac{1}{2}\mathrm{tr}\underline{\chi}\Theta_{A_1...A_n} \, , \qquad \ns_4\Theta_{A_1...A_n}=e_4(\Theta_{A_1...A_n})-\frac{1}{2}\mathrm{tr}{\chi}\Theta_{A_1...A_n} \,.
\end{align*}
We also define a projected derivative along the vector field $T$, consistent with \eqref{eq:definition-T}, as
\begin{align*}
\ns_T\Theta&\doteq\frac{\Omega}{2}\big(\ns_3\Theta+\ns_4\Theta\big).
\end{align*}
In a similar spirit, we define the projected Lie derivatives $\slashed{\mathcal{L}}_T$ and $\slashed{\mathcal{L}}_{\Omega_k}$, corresponding to the vector fields $T$ and $\Omega_k$, respectively, by
\begin{align*}
(\slashed{\mathcal{L}}_T\Theta)_{A_1...A_n}&\doteq(\mathcal{L}_T\Theta)_{A_1...A_n}\qquad(\slashed{\mathcal{L}}_{\Omega_k}\Theta)_{A_1...A_n}\doteq({\mathcal{L}}_{\Omega_k}\Theta)_{A_1...A_n}
\end{align*}
Since $T$ is a linear combination of $e_3$ and $e_4$, it follows that
\begin{align*}
(\slashed{\mathcal{L}}_T\Theta)_{A_1...A_n}=(\ns_T\Theta)_{A_1...A_n}=T(\Theta_{A_1...A_n}),
\end{align*}
where $T(\Theta)$ denotes the standard directional derivative along $T$.

For divergence and curl operations on $S^2_{u,v}$, we define the $(n-1)$-covariant tensor fields $\slashed{\mathrm{div}}\Theta$ and $\slashed{\mathrm{curl}}\Theta$ by
\begin{align*}
(\slashed{\mathrm{div}}\Theta)_{A_1...A_{n-1}}&\doteq\slashed{g}^{BC}(\slashed{\nabla}_B\Theta)_{CA_1...A_{n-1}},\qquad
(\slashed{\mathrm{curl}}\Theta)_{A_1...A_{n-1}}\doteq\slashed{\varepsilon}^{BC}(\slashed{\nabla}_B\Theta)_{CA_1...A_{n-1}},
\end{align*}
where $\slashed{\varepsilon}$ is the induced volume form on ${S}^2_{u,v}$. 

Several useful identities govern these angular operators, as discussed in \cite{ColHol24}. In particular, for functions, $1$-forms, and symmetric traceless $2$-tensors $\Theta_1,\Theta_2$, one has
\begin{align}
\sum_{k=1}^3 \left\langle \slashed{\mathcal{L}}_{\Omega_k} \Theta_1, \slashed{\mathcal{L}}_{\Omega_k} \Theta_2 \right\rangle =
\begin{cases}
\left\langle r\ns \Theta_1, r\ns \Theta_2 \right\rangle, & \text{functions} \\
\left\langle r\ns \Theta_1, r\ns \Theta_2 \right\rangle + \left\langle \Theta_1, \Theta_2 \right\rangle, & \text{1-forms} \\
\left\langle r\ns \Theta_1, r\ns \Theta_2 \right\rangle + 4 \left\langle \Theta_1, \Theta_2 \right\rangle, & \text{symm-traceless 2-tensors}
\end{cases}\label{eq:LieComId}
\end{align}
Moreover, the angular momentum vector fields satisfy the algebraic identities
\begin{align*}
\sum_k(\slashed{\mathrm{curl}}\Omega_k)^2=4,\qquad \sum_k(\slashed{\mathrm{curl}}\Omega_k)\Omega_k=0,\qquad \sum_{k}\Omega_k^A\Omega_k^B&=r^2\slashed{g}^{AB}.
\end{align*}

We now introduce further differential operators on $S^2_{u,v}$. For a one-form $\xi$, we define the symmetric trace-free angular derivative operator
\begin{align*}
\slashed{\mathcal{D}}_2^{\star}\xi&\doteq-\frac{1}{2}\Big[(\slashed{\nabla}_A\xi)_B+(\slashed{\nabla}_B\xi)_A-(\slashed{\mathrm{div}}\xi)\slashed{g}_{AB}\Big] \, ,
\end{align*}
which is the formal $L^2$-adjoint of $\slashed{\mathrm{div}}$. Similarly, for a pair of scalar functions $(f_1, f_2)$ on $S^2_{u,v}$, we define
\begin{align*}
    [\slashed{\mathcal{D}}_1^{\star}(f_1,f_2)]_A&\doteq-\slashed{\nabla}_Af_1+\slashed{\varepsilon}_{AB}\slashed{\nabla}^Bf_2,
\end{align*}
which is the formal $L^2$-adjoint of the map $\xi \mapsto (\slashed{\mathrm{div}}\xi, \slashed{\mathrm{curl}}\xi)$.

The Laplacian $\ds$ acting on tensor fields over $S^2_{u,v}$ is defined as
\begin{align*}
\ds\Theta\doteq\divs(\ns\Theta),
\end{align*}
and satisfies the scaling relation $\Delta_{S^2} = r^2 \ds$, where $\Delta_{S^2}$ denotes the standard Laplacian on the unit sphere.

The following fundamental identity for $S^2_{u,v}$-vector fields $X$ will be important.
\begin{lemma}\label{lem:nabslashedtodivcurlest} Let $X$ be an $S^2_{u,v}$ one-form. Then we have the identity
\begin{align*}
    \int_{S^2}{|r\ns X|^2}\eps_{S^2}=\int_{S^2}\Big[|r\divs X|^2+|r\curls X|^2-|X|^2\Big]\eps_{S^2}.
    \end{align*}
\end{lemma}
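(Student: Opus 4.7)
\medskip

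\noindent\textbf{Proof proposal.} The identity is a Bochner-type relation on the round sphere $(S^2_{u,v},\slashed{g})$ of radius $r$, whose Gaussian curvature satisfies $K = 1/r^2$. The plan is to obtain a pointwise identity for $|\ns X|^2$, then integrate and use the Ricci commutation formula to trade second derivatives for the curvature term.

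First, I would decompose $\ns X$ into its pointwise-orthogonal symmetric-tracefree, pure-trace, and antisymmetric parts. Using the definitions of $\slashed{\mathcal D}_2^{\star}$, $\slashed{\mathrm{div}}$, and $\slashed{\mathrm{curl}}$, one checks
\begin{align*}
\ns_A X_B \;=\; -(\Dst X)_{AB} + \tfrac{1}{2}(\divs X)\,\slashed{g}_{AB} + \tfrac{1}{2}(\curls X)\,\slashed{\varepsilon}_{AB},
\end{align*}
and since $\slashed{g}^{AB}\slashed{\varepsilon}_{AB}=0$, $\slashed{g}^{AB}\slashed{g}_{AB}=\slashed{\varepsilon}^{AB}\slashed{\varepsilon}_{AB}=2$, squaring gives the pointwise decomposition
\begin{align*}
|\ns X|^2 \;=\; |\Dst X|^2 + \tfrac{1}{2}(\divs X)^2 + \tfrac{1}{2}(\curls X)^2.
\end{align*}

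Next, integrating over $S^2_{u,v}$ and using that $\Dst$ is the formal $L^2$-adjoint of $\divs$, I rewrite the $\Dst$-term via $\int |\Dst X|^2\,\eps_{S^2} = \int \langle X,\divs\Dst X\rangle\,\eps_{S^2}$, and then compute $\divs\Dst X$ explicitly. A direct calculation using the Ricci identity on the 2-sphere, namely $[\ns^B,\ns_A]X_B = K X_A$ (which follows from $R_{ABCD}=K(\slashed{g}_{AC}\slashed{g}_{BD}-\slashed{g}_{AD}\slashed{g}_{BC})$), yields
\begin{align*}
\divs\Dst X \;=\; -\tfrac{1}{2}\bigl(\ds X + K X\bigr).
\end{align*}
Combined with the identity $\int |\ns X|^2\,\eps_{S^2} = -\int \langle X,\ds X\rangle\,\eps_{S^2}$ (valid on any closed 2-surface), this produces
\begin{align*}
\int_{S^2}|\Dst X|^2\,\eps_{S^2} \;=\; \tfrac{1}{2}\int_{S^2}|\ns X|^2\,\eps_{S^2} - \tfrac{1}{2}\int_{S^2} K\,|X|^2\,\eps_{S^2}.
\end{align*}

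Substituting this back into the integrated pointwise decomposition, the $\tfrac{1}{2}\int |\ns X|^2$ terms collect on the left, and one obtains
\begin{align*}
\int_{S^2}|\ns X|^2\,\eps_{S^2} \;=\; \int_{S^2}\Bigl[(\divs X)^2 + (\curls X)^2 - K\,|X|^2\Bigr]\eps_{S^2}.
\end{align*}
Multiplying through by $r^2$ and using $r^2 K = 1$ on $S^2_{u,v}$ delivers the claimed identity. The only technical point is getting the sign of the Ricci commutator in step~3 correct; the curvature contribution arising there is exactly the mechanism producing the negative $|X|^2$ term in the statement, as can be cross-checked by testing on $X=\ns f$ for a spherical harmonic eigenfunction $f$.
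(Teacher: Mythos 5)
Your proof is correct. The decomposition $\ns_A X_B = -(\Dst X)_{AB} + \tfrac{1}{2}(\divs X)\slashed{g}_{AB} + \tfrac{1}{2}(\curls X)\slashed{\varepsilon}_{AB}$ is the correct splitting into pointwise-orthogonal irreducible pieces, the norm computation $|\ns X|^2 = |\Dst X|^2 + \tfrac{1}{2}(\divs X)^2 + \tfrac{1}{2}(\curls X)^2$ follows, the identity $\divs\Dst X = -\tfrac{1}{2}(\ds X + K X)$ is the standard one (and your sign of the commutator $[\ns^B,\ns_A]X_B = K X_A$ is right for the round sphere), and the bookkeeping that collects the $\tfrac{1}{2}\int|\ns X|^2$ terms and converts $K$ to $r^{-2}$ closes the argument. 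Your route differs from the paper's, which is a one-line direct computation: there one expands $(\curls X)^2 = \slashed{\varepsilon}^{AB}\slashed{\varepsilon}^{CD}\ns_A X_B \ns_C X_D$ using $\slashed{\varepsilon}_{AB}\slashed{\varepsilon}_{CD} = \slashed{g}_{AC}\slashed{g}_{BD} - \slashed{g}_{AD}\slashed{g}_{BC}$ to get $|\ns X|^2 - \ns_A X_B\ns^B X^A$, then integrates by parts once and applies the Ricci identity so that the cross term cancels against $(\divs X)^2$ and leaves $\int K|X|^2$. The paper's computation is shorter and never introduces $\Dst$; yours is slightly longer but exposes the underlying structure (the $L^2$-adjointness of $\Dst$ and $\divs$ and the elliptic identity $\divs\Dst = -\tfrac{1}{2}(\ds + K)$), which is reusable elsewhere and makes the provenance of the $-|X|^2$ term transparent. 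Both hinge on the same curvature input, so there is no gap either way.
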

\begin{proof}
This follows from a direct computation, exploiting the relation $\slashed{\varepsilon}_{AB}\slashed{\varepsilon}_{CD} = \slashed{g}_{AC}\slashed{g}_{BD} - \slashed{g}_{AD}\slashed{g}_{BC}$ and applying the Ricci identity.
\end{proof}

\subsubsection{Support on $\ell\geq 1$ Spherical Harmonics}

In this work, we consider functions, one-forms and symmetric-traceless $2$-tensors on $S^2_{u,v}$  that have support on $\ell\geq 1$. 
A function $f$ is said to be supported on $\ell\geq 1$ if its projection onto the $\ell=0$ spherical harmonics vanishes. We note that one-forms on $S^2_{u,v}$ are automatically supported on $\ell \geq 1$, while symmetric-traceless $2$-tensors on $S^2_{u,v}$ are automatically supported on $\ell \geq 2$.

For any function $f$ on $S^2_{u,v}$ supported on $\ell\geq 1$, or for any one-form $f$ on $S^2_{u,v}$, the Poincar\'e inequality holds:
\begin{align*}
 \|f\|_{u,v}\leq \|r\ns f\|_{u,v}.
\end{align*}

\subsection{Transport Lemma Estimate}

We conclude with a crucial energy-type estimate for tensors transported along the outgoing null cones.

\begin{lemma} \label{lem:tplemma}
Let $\xi$ and $\Xi$ be $S^2_{u,v}$ tensor fields satisfying
\begin{align*}
    \Omega \slashed{\nabla}_4 \xi = \Omega^2 \Xi
\end{align*}
along a cone $C_{u}$ for $u_0 \leq u \leq \infty$, and assume $\xi = \mathcal{O}(\Omega)$ if $Q<M$
as $u\rightarrow \infty$. Then,
        \begin{align*}  \|\xi\|^2(u,\infty) & \leq  \frac{1}{1-\frac{r_+}{r(u,v_0)}}\|\xi\|^2(u,v_0)+\frac{1}{r_+}\int_{v_0}^{\infty}\big|\Omega r\Xi\big|^2dv\eps_{S^2},
    	\end{align*}
        and
        \begin{align*}   \int_{v_0}^{\infty}\frac{\Omega^2}{r^2}|\xi|^2dv\eps_{S^2} & \lesssim \|\xi\|^2(u,v_0)+\frac{1}{r_+}\int_{v_0}^{\infty}\big|\Omega r\Xi\big|^2dv\eps_{S^2},
    	\end{align*}
        where we denote $f \lesssim g$ if there exists a uniform constant $C$ such that $f \leq Cg$.
\end{lemma}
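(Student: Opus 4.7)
The plan is to reduce both claims to a first-order differential inequality for the sphere $L^2$ norm $F(v) := \|\xi\|^2(u,v)$ along the outgoing cone $C_u$, and then close it by a sup-bootstrap argument. The key explicit computation is $\int_{v_0}^{\infty} r_+\Omega^2/r^2\,dv' = r_+\big(1/r(u,v_0)-1/r(u,\infty)\big) = r_+/r(u,v_0)$, which is strictly less than $1$ on the black-hole exterior and produces the factor $\frac{1}{1-r_+/r(u,v_0)}$.

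First, I would exploit the transport equation $\Omega\slashed{\nabla}_4 \xi = \Omega^2 \Xi$ together with the convention $\slashed{\nabla}_4 \xi = e_4 \xi - \tfrac12 \tr\chi\,\xi$, the identities $\Omega\tr\chi = 2\Omega^2/r$, $\partial_v \slashed{g}^{AB} = -\tfrac{2\Omega^2}{r}\slashed{g}^{AB}$, and $\partial_v r = \Omega^2$ to derive a pointwise equation for $\partial_v |\xi|^2$, in which the trace term from $\slashed{\nabla}_4$ cancels against the $\slashed{g}^{AB}$-variation (this cancellation is particularly clean in Reissner--Nordstr\"om because $\chi$ is pure trace). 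Integrating against $\eps_{S^2}$ on $S^2_{u,v}$ and applying Cauchy--Schwarz in the form $2ab \leq r_+ a^2 + r_+^{-1}b^2$ with $a = \Omega|\xi|/r$ and $b = \Omega r|\Xi|$, using that $\Omega^2/r^2$ is spherically symmetric to pull it out of the spherical integral, then yields a differential inequality of the schematic form
\[
F'(v) \;\leq\; \frac{r_+\Omega^2}{r^2}\,F(v) + \frac{1}{r_+}\int_{S^2}\big(\Omega r |\Xi|\big)^2\,\eps_{S^2}.
\]

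For the first claim, I would integrate this inequality in $v$ from $v_0$ to any $v \leq \infty$ and apply a sup-bootstrap. Setting $M := \sup_{v \in [v_0,\infty]} F(v)$ one obtains
\[
M \;\leq\; F(v_0) + M\cdot \frac{r_+}{r(u,v_0)} + \frac{1}{r_+}\int_{v_0}^{\infty}\int_{S^2}\big(\Omega r|\Xi|\big)^2\,\eps_{S^2}\,dv,
\]
which, since $1-r_+/r(u,v_0)>0$ outside the horizon, can be solved for $M$ to yield the claimed pointwise-in-$v$ bound. The hypothesis $\xi = \mathcal{O}(\Omega)$ as $u\to\infty$ ensures that $M$ is a priori finite in the subextremal case $Q<M$ (where $\Omega$ vanishes on the horizon), making the bootstrap legitimate. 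For the second claim, I would substitute the bound from the first into the $\frac{r_+\Omega^2}{r^2}F$ term of the differential inequality and integrate once more in $v$ from $v_0$ to $\infty$, using again that the integrating weight $\int r_+\Omega^2/r^2\,dv'$ is bounded by $1$.

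The main obstacle I expect is verifying the precise rank- and convention-dependent cancellations needed to reach the clean form of $F'(v)$ above — one must track both the trace contribution of $\slashed{\nabla}_4$ and the variation of the spherical measure — and choosing the Cauchy--Schwarz weight so that the forcing term ends up with exactly the coefficient $\frac{1}{r_+}$ stated, without acquiring an extra $\frac{1}{1-r_+/r(u,v_0)}$ multiplier. A secondary subtlety is justifying the a priori finiteness of $M$ required for the bootstrap, which is precisely where the boundary assumption $\xi = \mathcal{O}(\Omega)$ along the horizon enters.
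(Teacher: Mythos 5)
Your overall strategy (pointwise transport identity for $|\xi|^2$, Cauchy--Schwarz, then absorb the linear term using $\int_{v_0}^{\infty}r_+\Omega^2 r^{-2}\,dv=r_+/r(u,v_0)<1$) is in the right spirit, and the obstacle you flag at the end is real --- but it is not a technicality, it is the crux, and your argument as written does not overcome it. A linear Gr\"onwall/sup-bootstrap applied to $F'(v)\leq \frac{r_+\Omega^2}{r^2}F(v)+\frac{1}{r_+}\int_{S^2}|\Omega r\Xi|^2\eps_{S^2}$ yields
\begin{align*}
\sup_v F(v)\ \leq\ \frac{1}{1-\frac{r_+}{r(u,v_0)}}\Big[F(v_0)+\frac{1}{r_+}\int_{v_0}^{\infty}\int_{S^2}|\Omega r\Xi|^2\,dv\,\eps_{S^2}\Big],
\end{align*}
i.e.\ the prefactor $(1-r_+/r(u,v_0))^{-1}$ lands on \emph{both} terms. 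The lemma asserts the inhomogeneous term with coefficient exactly $\frac{1}{r_+}$ and no such prefactor, and this sharpness is load-bearing: in the proof of Theorem~\ref{thm:mastercontrol} the resulting term $\frac{32Q^2}{24Mr_+}\int(\cdots)$ must be absorbed into the flux under the condition $r_+\geq r_c$, which gives the threshold $|Q|\leq\frac{\sqrt{15}}{4}M$. With your extra factor the absorption condition degenerates as $r(u,v_0)\to r_+$ (which occurs as $u\to\infty$), and the uniform charge restriction is lost. No choice of constant in the elementary Cauchy--Schwarz $2ab\leq r_+a^2+r_+^{-1}b^2$ repairs this within a linear Gr\"onwall framework.

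The paper's device, which you would need to import, is nonlinear: multiply $|\xi|^2$ by an integrating factor $F=F(r)$ and complete the square so that $\partial_v(F|\xi|^2)=\Omega^2\big[F'+\frac{r_+F^2}{r^2}\big]|\xi|^2+\frac{1}{r_+}|\Omega r\Xi|^2-|\cdots|^2$; choosing $F$ to solve the Riccati equation $F'+\frac{r_+}{r^2}F^2=0$ with $F(\infty)=1$ gives $F=(1-r_+/r)^{-1}$, and integrating from $v_0$ to $\infty$ produces precisely $\|\xi\|^2(u,\infty)\leq F(r(u,v_0))\|\xi\|^2(u,v_0)+\frac{1}{r_+}\int|\Omega r\Xi|^2$, with the $F$-weight appearing only on the data term. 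Two further points. First, for the second estimate your plan of substituting the first bound back in inherits the non-uniform constant $(1-r_+/r(u,v_0))^{-1}$; the paper instead relaxes to $F'+\frac{\delta r_+}{r^2}F^2=0$ with $\delta<1$ (equivalently, a slightly off-balance Cauchy--Schwarz), which yields a genuinely uniform constant together with the coercive bulk term $\int\Omega^2 r^{-2}|\xi|^2$. Second, the hypothesis $\xi=\mathcal{O}(\Omega)$ is not about a priori finiteness of $\sup_vF$ for fixed $u$; it guarantees that $F|\xi|^2=\Omega^{-2}(1-\frac{r_-}{r})|\xi|^2$ extends smoothly to the horizon, so that the data term $(1-r_+/r(u,v_0))^{-1}\|\xi\|^2(u,v_0)$ remains finite in the limit $u\to\infty$ where it is actually used.
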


\begin{proof}
The proof proceeds by introducing an auxiliary function $F=F(r)$ satisfying a differential inequality and optimizing the choice of $F$ to control the growth of $|\xi|^2$ along the cone. In fact, for any function $F=F(r)$, we compute
    	\begin{align*}
    	    \partial_v \left(F |\xi|^2\right) & =  \Omega^2F'|\xi|^2  + 2 \Big(\frac{\sqrt{\eps}\Omega r\Xi }{\sqrt{r_+}}\Big)\cdot \Big(\frac{\sqrt{r_+}F\Omega\xi}{r\sqrt{\eps}}\Big)\\
            &=\Omega^2\Big[F'+\frac{r_+F^2}{r^2\eps}\Big]|\xi|^2  + \Big|\frac{\sqrt{\eps}\Omega r\Xi }{\sqrt{r_+}}\Big|^2-\Big|\frac{\sqrt{\eps}\Omega r\Xi }{\sqrt{r_+}}-\frac{\sqrt{r_+}F\Omega\xi}{r\sqrt{\eps}}\Big|^2,
    	\end{align*}
      where we use $ \partial_vF = \Omega^2 F' $. The idea is to find $F$ such that, for the smallest possible $\epsilon>0$, the prefactor of $|\xi|^2$ on the right-hand side is non-positive; it suffices to be zero, i.e. 
      \begin{align*}
     F'+\frac{r_+}{\eps r^2}F^2=0.
\end{align*}
For $\delta\in \mathbb{R}$, we look for solutions to the ODE
\begin{align*}
     F'+\frac{\delta r_+}{ r^2}F^2=0.
\end{align*}
Requiring the boundary condition $F(r=\infty)=1$, we have the following explicit solution of the ODE 
\begin{align*}
   F(r)=
   \frac{1}{1-\frac{\delta r_+}{r}}
\end{align*}
We obtain that
\begin{align*}
    	   |\xi|^2(u,\infty)+\int_{v_0}^{\infty}\Big[\Big|\frac{\sqrt{\eps}\Omega r\Xi }{\sqrt{r_+}}-\frac{\sqrt{r_+}F\Omega\xi}{r\sqrt{\eps}}\Big|^2+(\delta\eps-1)\Big|\frac{\sqrt{r_+}F\Omega \xi}{ r\sqrt{\eps}}\Big|^2\Big]dv & = \frac{1}{1-\frac{\delta r_+}{r}}|\xi|^2(u,v_0)+\frac{\eps}{r_+}\int_{v_0}^{\infty}\big|\Omega r\Xi\big|^2dv,
    	\end{align*}
    One can then integrate on the spheres to obtain the estimate,
    \begin{align*}  \|\xi\|^2(u,\infty)+\int_{v_0}^{\infty}\frac{r_+(\delta\eps-1)}{\eps}\Big|\frac{\Omega \xi}{r-\delta r_+}\Big|^2dv\eps_{S^2} & \leq \frac{1}{1-\frac{\delta r_+}{r(u,v_0)}}\|\xi\|^2(u,v_0)+\frac{\eps}{r_+}\int_{v_0}^{\infty}\big|\Omega r\Xi\big|^2dv\eps_{S^2},
    	\end{align*}
    modulo understanding the restriction on $(\delta,\eps)$. This comes from smoothness of $F|\xi|^2$. We can guarantee smoothness of $F|\xi|^2$ for any $r(u,v_0)> r_+$ by picking $\delta\leq 1$. For $\delta=1$, notice that in view of $\Omega^2 = \frac{1}{r^2}(r-r_+)(r-r_{-})$ we can re-express $ F = \Omega^{-2}\left(1-\frac{r_{-}}{r}\right)$, so if $\xi = \mathcal{O}(\Omega)$, if $Q<M$
    as $u\to\infty$ 
 along the initial cone $\underline{C}_{v_0}$ then $F\cdot |\xi|^2$ extends smoothly on the horizon. Picking $\delta=1$ requires $\eps\geq 1$. The equality case gives the first estimate. Picking $\delta>0$ sufficiently small and $\eps$ sufficiently large gives the second estimate. 
    \end{proof}

\subsection{The Linearised Einstein--Maxwell Equations in Double Null Gauge}
The system of linearised Einstein--Maxwell equations in double null gauge is encoded by the linearised metric quantities
\begin{align*}
\glinto, \glinh_{AB}, \bmlin_A,\Olin,
\end{align*}
the linearised connection coefficients
\begin{align*}
\otx,\otxb,\olin,\olinb,\elin_A,\eblin_A, \xlin_{AB},\xblin_{AB} ,
\end{align*}
the linearised curvature components
\begin{align*}
\rlin,\slin,\blin_A,\bblin_A, \alin_{AB},\ablin_{AB},
\end{align*}
and the linearised electromagnetic components
\begin{align*}
   \rhoFlin, \sigmaFlin,  \bFlin_A, \bbFlin_A.
\end{align*}
The above quantities can be interpreted as the linear perturbations of the respective quantities with respect to the Reissner-Nordstr\"om background value.

We will denote $\mathcal{S}$ a solution to the system of linearised electro-gravitational perturbations intended as a collection of quantities 
\begin{align*} 
\mathcal{S}=\bigg(\, \glinh \, , \, \glinto \, , \, \Olino \, , \,  \bmlin\, , \,  \otx \, , \,  \otxb\, , \,  \xlin\, , \, \xblin\, , \,  \eblin \, , \,  \elin \, , \, \olin \, , \,  \olinb \, , \,  \alin \, , \,  \blin \, , \,  \rlin \, , \,  \slin \, , \,  \bblin \, , \,  \ablin \, , \, \Klin\, , \, \bFlin\, , \,\rhoFlin\, , \, \sigmaFlin \, , \, \bbFlin \bigg)
\end{align*}
satisfying the system \eqref{eq:first}--\eqref{nabb-4-check-sigma} below, which we call the system of linearised Einstein--Maxwell equations on the Reissner--Nordstr\"om background.  It comprises of the following system of equations.

\paragraph{Linearised Null Structure Equations:} 
    \begin{align}
\partial_u (\frac{\glinto}{\sqrt{\slashed{g}}})&= \otxb, \qquad
\partial_v (\frac{\glinto}{\sqrt{\slashed{g}}})=\otx - \div \bmlin, \label{eq:first}\\
\sqrt{\slashed{g}} \partial_u (\frac{\glinh}{\sqrt{\slashed{g}}})&=  2\Omega \chiblin, \qquad
\sqrt{\slashed{g}} \partial_v (\frac{\glinh}{\sqrt{\slashed{g}}})= 2\Omega \chilin +2\DDs_2 \bmlin,\\
\partial_u\bmlin&=2\Omega^2(\elin-\eblin) \\
\partial_v(\Olin)&=\olin,\qquad
\partial_u(\Olin)= \oblin ,\qquad 2\nabb(\Olin)=\elin+\eblin,\label{eq:pr-Olin}\\
\partial_v\otxb &= \Omega^2 (2\divs  \eblin + 2 \rlin + 4 \rho \Olin) -\frac 1 2 \Omega \trch (\otxb-\otx)\label{D-otxb}\\
\partial_u\otx &= \Omega^2 (2\divs  \elin + 2 \rlin + 4 \rho \Olin) -\frac 1 2 \Omega \trch (\otxb-\otx)  \label{Db-otx}\\
\partial_v\otx&=  -(\Omega \trch) \otx + 2\om \otx +2(\Omega \trch)\olin \label{D-otx}\\
\partial_u\otxb&=  -(\Omega \trchb) \otxb + 2\omb \otxb +2(\Omega \trchb)\oblin \label{Db-otxb}\\
\ns_3(\Omega^{-1} \chiblin)&+\Omega^{-1}(\trchb)\chiblin=  -\Omega^{-1}\aalin  \label{nabb-3-chibh}\\
\ns_4(\Omega^{-1} \chilin)&+\Omega^{-1}(\trch) \chilin=  -\Omega^{-1}\alin \label{nabb-4-chih}\\
\ns_3(\Omega\chilin)&+\frac 1 2 (\Omega \trchb) \chilin +\frac 12 (\Omega \trch) \chiblin   =   -2 \Omega\slashed{\mathcal{D}}_2^\star \elin  \label{nabb-3-chih}\\
\ns_4(\Omega\chiblin)&+\frac 1 2 (\Omega \trch) \chiblin +\frac 12 (\Omega \trchb) \chilin   =   -2 \Omega\slashed{\mathcal{D}}_2^\star \eblin  \label{nabb-4-chibh}\\
 \ns_3 \eblin&= \frac 1 2 (\trchb) (\elin-\eblin) + \bblin  + \rhoF\bbFlin \label{nabb-3-eblin}\\
 \ns_4 \elin&= -\frac 1 2 (\trch) (\elin-\eblin) - \blin- \rhoF\bFlin  \label{nabb-4-elin}\\
\ns_4 \eblin&= \frac{2}{\Omega}\ns\olin- \trch\eblin + \blin  + \rhoF\bFlin \label{nabb-4-eblin}\\
\ns_3 \elin&= \frac{2}{\Omega}\ns\olinb- \trchb\elin - \bblin  - \rhoF\bbFlin \label{nabb-3-elin}\\
\slashed{\curl} \elin&=\slin, \qquad \slashed{\curl}\eblin=-\slin\label{eq:curl-elin-slin}\\
 \partial_v \oblin&= -\Omega^2\big(\rlin+2\rhoF \rhoFlin +2(\rho+\rhoF^2) \Olin\big)\label{eq:dvoblin}\\
\partial_u \olin&= -\Omega^2\big(\rlin +2\rhoF \rhoFlin+2(\rho+\rhoF^2) \Olin\big)\\
  \divs  \chiblin&= -\frac 1 2(\trchb) \elin +\bblin -\rhoF\bbFlin +\frac {1}{ 2\Omega} \nabb\otxb  \label{Codazzi-chib}\\
  \divs  \chilin&= -\frac 1 2(\trch) \eblin -\blin +\rhoF\bFlin +\frac {1}{ 2\Omega} \nabb\otx\label{Codazzi-chi}\\
\Klin&=  -\rlin+2\rhoF \rhoFlin -\frac 1 4 \frac{\trch}{\Omega} (\otxb-\otx)+\frac 1 2 \Olin (\trch \trchb),\label{eq:Gauss}
\end{align}
\paragraph{Linearised Maxwell Equations:}
\begin{align}
\ns_3 \bFlin+\Big(\frac 1 2 \trchb+\hat{\omb}\Big) \bFlin&=  -\DDs_1(\rhoFlin, \sigmaFlin) +2\rhoF\elin \label{nabb-3-bF}\\
\ns_4 \bbFlin+\Big(\frac 1 2 \trch +\hat{\om} \Big)\bbFlin&= \DDs_1(\rhoFlin, -\sigmaFlin)-2\rhoF\eblin\label{nabb-4-bbF}\\
\ns_3 \rhoFlin +\trchb \rhoFlin &=  -\divs  \bbFlin -\frac{\rhoF}{\Omega} \otxb \label{eq:nabb3-rhoFlin}\\
\ns_4 \rhoFlin +\trch \rhoFlin &=  \divs  \bFlin -\frac{\rhoF}{\Omega} \otx  \label{eq:nabb4-rhoFlin}\\
\ns_3 \sigmaFlin +\trchb \sigmaFlin &=  \slashed{\curl} \bbFlin \label{eq:nabb3-sigmaFlin} \\
\ns_4 \sigmaFlin +\trch \sigmaFlin &=  \slashed{\curl} \bFlin ,\label{eq:nabb4-sigmaFlin}
\end{align}
\paragraph{Linearised Bianchi Identities:}
\begin{align}
 \ns_3\alin+\Big(\frac 1 2 \trchb+2\hat{\omb} \Big)\alin&= -2 \DDs_2\, \blin-3\rho\chilin -2\rhoF \ \Big(\DDs_2\bFlin +\rhoF\chilin \Big) \label{nabb-3-a}\\
 \ns_4\aalin+\Big(\frac 1 2 \trch+2\hat{\om} \Big)\aalin&= 2 \DDs_2\, \bblin-3\rho\chiblin +2 \rhoF \ \Big(\DDs_2\bbFlin -\rhoF\chiblin \Big) , \label{nabb-4-aa}\\
 \ns_3 \blin+\left(\trchb +\hat{\omb}\right) \blin &= \DDs_1(-\rlin, \slin) +3 \rho \elin +\rhoF\Big(-\DDs_1(\rhoFlin, \sigmaFlin)  - \trch\bbFlin-\frac 1 2 \trchb \bFlin    \Big) , \label{nabb-3-b}\\
 \ns_4 \bblin+\Big(\trch +\hat{\om}\Big) \bblin &= \DDs_1(\rlin, \slin) -3 \rho \eblin +\rhoF\left(\DDs_1(\rhoFlin, -\sigmaFlin)  - \trchb\bFlin-\frac 1 2 \trch \bbFlin    \right),  \label{nabb-4-bb}\\
 \ns_3 \bblin+ \left( 2\trchb  -\hat{\omb}\right) \bblin &= -\divs \aalin  +\rhoF\Big(\ns_3\bbFlin-\hat{\omb} \bbFlin\Big), \label{nabb-3-bb}\\
 \ns_4 \blin+ \left(2\trch  -\hat{\om}\right)\blin &= \divs \alin  +\rhoF\left(\ns_4\bFlin-\hat{\om} \bFlin \right),\label{nabb-4-b}\\
 \ns_3 \rlin +\frac 32 \trchb \rlin &=  -\divs  \bblin-\rhoF \divs \bbFlin - \big(\frac 3 2\rho+\rhoF^2\big)\frac{1}{\Omega} \otxb-2\trchb \rhoF \rhoFlin, \label{eq:nabb-3-rlin}\\
\ns_4 \rlin +\frac 3 2 \trch \rlin &=  \divs  \blin+\rhoF \divs  \bFlin- \big(\frac 3 2\rho+\rhoF^2\big)\frac{1}{\Omega} \otx-2\trch \rhoF \rhoFlin, \label{eq:nabb-4-rlin}\\
 \ns_3 \slin+\frac 3 2 \trchb \slin&= -\slashed{\curl}\bblin - \rhoF \ \slashed{\curl}\bbFlin, \label{nabb-3-check-sigma}\\
\ns_4 \slin+\frac 3 2 \trch \slin&= -\slashed{\curl}\blin - \rhoF  \ \slashed{\curl}\bFlin. \label{nabb-4-check-sigma}
 \end{align}

 For a derivation of the system we refer to \cite{DHR, Giorgi6}.

\subsubsection{Additional Quantities}
We define the mass aspect functions as 
\begin{align}\label{eq:definition-mu}
    \mu&\doteq\divs\elin + \rlin, \qquad \underline{\mu}\doteq\divs\eblin+\rlin.
\end{align}
We also denote
\begin{align}
\lambda&\doteq \frac{r}{\Omega^2}[\otx-2\Omega\tr\chi\Olin]=\frac{r}{\Omega^2}\otx-4\Olin, \label{eq:definition-lambda} \\ \underline{\lambda}&\doteq\frac{r}{\Omega^2}[\otx-2\Omega\tr\chib\Olin]= \frac{r}{\Omega^2}\otxb+4\Olin,
\end{align}
and
\begin{align}
    X&\doteq2r^3(\rlin-\divs\eblin)+\frac{r^3\Omega\tr\chi}{2\Omega^4}[r^2\ds]\otx=2r^3(\rlin-\divs\eblin)+\frac{r^2}{\Omega^2}[r^2\ds]\otx=r^3\big(2\mu+\ds \lambda\big), \label{definition-X}\\
    \quad \quad  
    \underline{X}&\doteq2r^3(\rlin-\divs\elin)+\frac{r^3\Omega\tr\chib}{2\Omega^4}[r^2\ds]\otxb= 2r^3(\rlin-\divs\elin)-\frac{r^2}{\Omega^2}[r^2\ds]\otxb=r^3\big(2\underline{\mu}-\ds \underline{\lambda}\big).\label{definition-Xb}
\end{align}

\begin{lemma}\label{lemma:notable-transport-equations} As a consequence of the system of linearised Einstein-Maxwell equations, the above quantities satisfy
    \begin{align}
    \partial_v\lambda &= - \otx,\label{eq:partial-v-lambda} \\
      \partial_v\Big(\frac{1}{6M}X\Big)&=\Big(1 -\frac{r_c}{r}\Big)\otx-\frac{4Q}{3M}\Omega^2\rhoFlin,
    \end{align}
    where $r_c\doteq\frac{4Q^2}{3M}$ denotes the radius critical value.
\end{lemma}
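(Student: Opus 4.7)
The plan is to verify both identities by direct computation, using the linearised transport equations together with the background identities $\partial_v r = \Omega^2$, $\partial_v \Omega^2 = 2\omega\Omega^2$ with $\omega = M/r^2 - Q^2/r^3$, and $\om = \omega$, $\Omega\tr\chi = 2\Omega^2/r$, $\rho = -2M/r^3 + 2Q^2/r^4$, $\rhoF = Q/r^2$ on the Reissner--Nordstr\"om background.

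For the first identity, I would differentiate $\lambda = \frac{r}{\Omega^2}\otx - 4\Olin$ along $v$, using $\partial_v(r/\Omega^2) = 1 - 2r\omega/\Omega^2$, equation \eqref{D-otx} for $\partial_v\otx$, and $\partial_v\Olin = \olin$ from \eqref{eq:pr-Olin}. The resulting expression contains: (i) $\om$- and $\omega$-terms which combine to zero since $\om = \omega$ on the background, and (ii) a contribution $\frac{r}{\Omega^2}\cdot 2(\Omega\tr\chi)\olin = 4\olin$ from $\partial_v\otx$ which cancels exactly against $-4\partial_v\Olin = -4\olin$. What remains is $\partial_v\lambda = \otx - 2\otx = -\otx$.

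For the second identity, I would work with the alternative representation $X = 2r^3(\rlin - \divs\eblin) + \frac{r^4}{\Omega^2}\ds\otx$, and compute $\partial_v X$ in two pieces. For $\partial_v[2r^3(\rlin - \divs\eblin)]$, one applies \eqref{eq:nabb-4-rlin} to $\partial_v\rlin$ and \eqref{nabb-4-eblin} to $\partial_v\eblin$, with $\divs$ commuted through $\partial_v$ on the background (producing a clean $-\Omega\tr\chi\,\divs\eblin$ correction since $\hat\chi = 0$, so that $\ns_A\ns_B$ pulls back from the unit sphere). The Leibniz $6r^2\Omega^2\rlin$ and $-6r^2\Omega^2\divs\eblin$ terms cancel against the $-\tfrac{3}{2}\Omega\tr\chi\rlin$ and $-\tr\chi\eblin$ contributions; crucially, the $\divs\blin$ and $\divs\bFlin$ terms from the two evolution equations pair up with opposite signs and cancel. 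After these cancellations, one is left with
\begin{align*}
\partial_v[2r^3(\rlin - \divs\eblin)] = -4r^3\ds\olin - 2r^3\big(\tfrac{3}{2}\rho + \rhoF^2\big)\otx - 8r^2\Omega^2\rhoF\rhoFlin.
\end{align*}
A separate computation for $\partial_v\big[\frac{r^4}{\Omega^2}\ds\otx\big]$, using \eqref{D-otx}, $\om = \omega$, and $\partial_v\ds\otx = -\frac{2\Omega^2}{r}\ds\otx + \ds\partial_v\otx$, yields precisely $+4r^3\ds\olin$, which cancels the surviving $\olin$-term. Summing and dividing by $6M$ produces the algebraic identity
\begin{align*}
-\frac{r^3}{3M}\Big(\tfrac{3}{2}\rho + \rhoF^2\Big) = -\frac{r^3}{3M}\Big({-}\tfrac{3M}{r^3} + \tfrac{4Q^2}{r^4}\Big) = 1 - \frac{4Q^2}{3Mr} = 1 - \frac{r_c}{r},
\end{align*}
while the electromagnetic coupling reduces to $-\frac{4r^2\Omega^2\rhoF}{3M} = -\frac{4Q}{3M}\Omega^2$, as claimed.

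The main obstacle is the coordinated bookkeeping of cancellations across the two pieces of $X$. In particular, the non-trivial pairing of $-4r^3\ds\olin$ from $\partial_v[2r^3(\rlin - \divs\eblin)]$ against $+4r^3\ds\olin$ from $\partial_v\bigl[\frac{r^4}{\Omega^2}\ds\otx\bigr]$ must hold identically, and this pairing reflects the consistency of the two equivalent forms of $X$ exhibited in \eqref{definition-X}. The appearance of the critical radius $r_c = 4Q^2/(3M)$ is then forced by the identity $-\frac{r^3}{3M}(\tfrac{3}{2}\rho + \rhoF^2) = 1 - r_c/r$, which encodes the precise balance between background curvature and electromagnetic contributions on Reissner--Nordstr\"om.
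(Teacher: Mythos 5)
Your computation is correct: I verified both identities independently, and all of the claimed cancellations check out — the $\om$-terms and the $4\olin$ versus $-4\partial_v\Olin$ cancellation in the first identity, and the pairing of the $\divs\blin$ and $\rhoF\divs\bFlin$ terms from \eqref{eq:nabb-4-rlin} and \eqref{nabb-4-eblin}, the Leibniz-versus-$\trch$ cancellations, and the $\mp 4r^3\ds\olin$ cancellation between the two pieces of $X$ in the second. The paper states this lemma without proof, so your direct verification by transport equations and the background identities is precisely the intended ``direct computation'' and fills the omitted step.
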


\subsection{Pure Gauge Solutions}

It is well known that adopting a double null form for the metric does not fully fix the gauge freedom inherent in the Einstein equations. In particular, there exist special solutions to the linearised Einstein–Maxwell system that arise purely from infinitesimal coordinate transformations that leave the double null structure of the metric intact. These solutions, which do not correspond to physical perturbations but rather to changes in the coordinate description, are referred to as pure gauge solutions.

\begin{lemma} \label{lem:exactsol} 
For any smooth function $f=f\left(v,\theta,\phi\right)$, the following is a pure gauge solution of the system of linearised Einstein--Maxwell equations on the Reissner--Nordstr\"om background \eqref{eq:first}--\eqref{nabb-4-check-sigma}:
\begin{align}
2\Olin &= \frac{1}{\Omega^2} \partial_v \left(f \Omega^2\right)  , \qquad \frac{\glinto}{\sqrt{\slashed{g}}} = \frac{2\Omega^2 f}{r} + \frac{2}{r} r^2 \slashed{\Delta}f ,\qquad  \glinh= - \frac{4}{r} r^2 \slashed{\mathcal{D}}_2^\star \slashed{\nabla} f , \qquad \bmlin =  -2r^2 \slashed{\nabla}_A \left[ \partial_v \left(\frac{f}{r}\right)\right] , \nonumber 
\end{align}
\begin{align}
  \elin = \frac{\Omega^2}{r^2} r \slashed{\nabla} f  , \qquad \eblin = \frac{1}{\Omega^2} r \slashed{\nabla} \left[\partial_v \left(\frac{\Omega^2}{r}f\right) \right],\qquad\chiblin = -2\frac{\Omega}{r^2} r^2 \slashed{\mathcal{D}}_2^\star \slashed{\nabla} f   \nonumber  ,
\nonumber 
\end{align}
\begin{align}
\otx = 2 \partial_v \left(\frac{f \Omega^2}{r}\right)  , \qquad \otxb =  \frac{2\Omega^2}{r^2} \left[\Delta_{S^2} f + f \left(1-\frac{4M}{r}+\frac{3Q^2}{r^2} \right) \right]  , \nonumber 
 \end{align}
 \begin{align}
\rlin = \left(\frac{6M}{r^4}-\frac{8Q^2}{r^5}\right) \Omega^2 f  ,\qquad  \bblin= \left(\frac{6M\Omega}{r^4} -\frac{6Q^2\Omega}{r^5} \right)r \slashed{\nabla} f ,\qquad  \Klin =- \frac{\Omega^2}{r^3}\left(\Delta_{S^2} f + 2f\right) \nonumber 
\end{align}
\begin{align}
\rhoFlin= -\frac{2Q}{r^3}\Omega^2f, \qquad \bbFlin=-\frac{2Q\Omega}{r^3}  r \slashed{\nabla} f \nonumber
\end{align}
and
\[
\chilin = \alin = \aalin = 0 \ \ \ , \ \ \ \blin= \bFlin = 0 \ \ \ , \ \ \ \slin=\sigmaFlin  = 0 \nonumber \, .
\]
We will call $f$ a gauge function. If a quantity vanishes for such a gauge function then we say it is $f(v)$-gauge invariant. 
\end{lemma}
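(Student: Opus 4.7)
The strategy is to recognise that pure gauge solutions of the linearised Einstein--Maxwell system arise as linearisations of infinitesimal diffeomorphisms of the Reissner--Nordstr\"om background that preserve the double null form of the metric. Concretely, I would identify a vector field $V_f$ on the background parametrised by $f=f(v,\theta,\phi)$, and verify that the listed quantities $\glinto/\sqrt{\slashed g}$, $\glinh$, $\bmlin$, $\Olino$ are precisely the frame components of $\mathcal{L}_{V_f} g$ (and similarly for the electromagnetic perturbations via $\mathcal{L}_{V_f}\mathrm{F} = d(\iota_{V_f}\mathrm{F})$, using $d\mathrm{F}=0$), with the connection and curvature perturbations following by linearising their definitions in terms of $g$, $\mathrm{F}$ and the null frame.

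Guided by the vanishing pattern in the claim --- namely that the outgoing quantities $\chilin, \alin, \blin, \bFlin, \slin, \sigmaFlin$ all vanish while their underlined counterparts do not --- and by analogy with the Schwarzschild constructions of \cite{HolCL16,ColHol24}, I expect $V_f$ to have leading component along $e_3$ plus an angular term in $r\ns f$, schematically
\begin{align*}
V_f = f\, e_3 \; + \; (\text{angular piece in } r\ns f) \; + \; (\text{small correction along } e_4),
\end{align*}
where the $e_4$ correction is uniquely pinned down by demanding that the $du\otimes du$ and $dv\otimes dv$ components of $\mathcal{L}_{V_f} g$ vanish, so that the double null form is preserved. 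Once $V_f$ is fixed, $\glinh$ comes from the trace-free part of $\mathcal{L}_{V_f}\slashed{g}$, $\Olino$ from the conformal factor of the $du\otimes dv$ block of $\mathcal{L}_{V_f}g$, $\bmlin$ from its shift component, $\chiblin$ from the linearisation of $\chi_{AB}=g(\nabla_{e_A}e_4,e_B)$, and $\bbFlin, \rhoFlin$ from frame components of $\mathcal{L}_{V_f}\mathrm{F}$. The vanishing of $\chilin, \alin, \blin, \bFlin$ is then automatic from the fact that $\mathrm{F}$ and $\chi$ have no $e_4$-transport term generated by an $e_3$-directed vector field on a background with $\aF=0$ and $\alpha=0$.

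In practice, rather than developing the diffeomorphism picture in detail, the most efficient route to a complete proof is direct verification: substitute all proposed expressions into each of \eqref{eq:first}--\eqref{nabb-4-check-sigma} and check each identity using the explicit background values $\Omega\tr\chi=-\Omega\tr\chib=2\Omega^2/r$, $\om=-\omb=M/r^2-Q^2/r^3$, $\rho=-2M/r^3+2Q^2/r^4$, $\rhoF=Q/r^2$, together with $\partial_v r=\Omega^2$ and $\partial_v\Omega^2=2\Omega^2\om$. Each equation then reduces to a purely algebraic identity in $r$ combined with elementary manipulations of $\Dst\ns f$ and $\Delta_{S^2} f$. The main obstacle is bookkeeping: there are on the order of forty equations to verify, and the $\rhoF$-coupled terms that mix gravitational and Maxwell perturbations --- for instance the $\rhoF\bbFlin$ term in \eqref{nabb-3-eblin}, the $2\rhoF\elin$ source in \eqref{nabb-3-bF}, and the $\rhoF$-terms in \eqref{nabb-3-b} and \eqref{nabb-3-bb} --- have no analogue in \cite{HolCL16,ColHol24} and require the most care. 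Two useful internal consistency checks are: specialising $Q=0$ must recover the Schwarzschild pure gauge lemma of \cite{HolCL16,ColHol24}; and the linearised Gauss equation \eqref{eq:Gauss}, which nontrivially combines $\rlin$, $\rhoFlin$, $\otxb$, $\otx$, and $\Olin$, must hold as an algebraic identity among the proposed formulas.
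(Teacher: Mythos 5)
Your proposal matches the paper's own (one-line) proof, which simply defers to adapting the direct verification of Lemma 6.1.1 of \cite{DHR} to the charged setting: the route you describe --- motivating the formulas by a double-null-preserving diffeomorphism and then substituting them into each of \eqref{eq:first}--\eqref{nabb-4-check-sigma} using the explicit background values --- is exactly what that adaptation amounts to, and your emphasis on the $\rhoF$-coupled terms (e.g.\ in \eqref{nabb-3-eblin}, \eqref{nabb-3-bF}, \eqref{nabb-3-b}) as the genuinely new checks, together with the $Q=0$ and Gauss-equation consistency tests, is well placed. One correction to your heuristic picture: for $f=f(v,\theta,\phi)$ the generating vector field has leading part $f\partial_v=f\Omega e_4$ rather than $f e_3$ --- as one reads off from $\rlin=f\,\partial_v\rho$ and $\rhoFlin=f\,\partial_v\rhoF$, with the angular and $\partial_u$ corrections determined by preserving the double null form --- so the vanishing of $\chilin,\alin,\blin,\bFlin$ is not explained by an ``$e_3$-directed'' generator; this does not affect the direct-verification route you ultimately propose, but it would matter if you developed the Lie-derivative computation in earnest.
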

\begin{proof}
    The proof of Lemma 6.1.1 in \cite{DHR} can be easily adapted to the case of linearised Einstein-Maxwell equations.
\end{proof}
\begin{remark}\label{rem:fugaugeinvariant}
    One can obtain a similar lemma to the above for a function $f(u,\theta,\phi)$ and make a definition of $f(u)$-gauge invariance. 
\end{remark}

 \subsection{The Teukolsky Variables in Linear Perturbations of Reissner-Nordstr\"om}

In the vacuum case, the study of linear perturbations has traditionally relied on analyzing the decoupled Teukolsky equations \cite{Teukolsky}, which govern the evolution of certain components of the curvature, namely $\alin$ and $\ablin$. However, when considering perturbations of the Reissner-Nordström spacetime, a charged, nonvacuum solution, the situation becomes more intricate. In this context, the curvature components $\alin$ and $\ablin$ no longer evolve independently; instead, they are coupled to electromagnetic quantities, reflecting the interaction between the gravitational and electromagnetic fields.

 In \cite{Giorgi4, Giorgi5}, in addition to $\alin$, $\ablin$, the third author has identified a set of three pairs of gauge invariant quantities which satisfy a system of coupled Teukolsky-like equations. The Teukolsky variables are defined as
\begin{align}
\fflin&\doteq \DDs_2 \bFlin +\rhoF \chilin, \qquad \underline{\fflin}\doteq \DDs_2 \bbFlin -\rhoF \chiblin, \label{eq:definitions-ff}\\
\bflin&\doteq 2\rhoF \blin - 3 \rho \bFlin, \qquad \underline{\bflin}\doteq 2\rhoF \bblin -3\rho \bbFlin,\label{eq:definitions-bb}
\end{align} 
and
\begin{align}
\xflin \doteq \ns_4 \bFlin +\big( \frac 3 2 \trch -\hat{\omega}\big)\bFlin ,\qquad \xfblin \doteq \ns_3 \bbFlin +\big(\frac 3 2 \trchb-\hat{\omb}\big) \bbFlin .\label{eq:definitions-xflin}
\end{align}
The quantities $\alin, \aalin, \fflin, \underline{\fflin}, \bflin, \underline{\bflin}, \xflin, \xfblin$ satisfy the following relations \cite{Giorgi4, Giorgi5}:
\begin{align} 
\rhoF \left(\ns_3 \a+\left( \frac 1 2 \trchb - 4\omb \right)\a \right)&= -\DDs_2\bflin -\left(2\rhoF^2+3\rho\right)\fflin, \\
-\rhoF \left(\ns_4 \aa+\left( \frac 1 2 \trch - 4\om \right)\aa \right)&= -\DDs_2\underline{\bflin} -\left(2\rhoF^2+3\rho\right)\underline{\fflin}, \label{eq:Bianchi-aa-ff-bb}
\end{align}
\begin{align*}
\ns_4\fflin + 2 \trch \fflin&= -\rhoF \alin + \DDs_2 \xflin,\\
\ns_3\underline{\fflin} + 2 \trchb \underline{\fflin}&= \rhoF \aalin + \DDs_2 \xfblin, \\
\ns_4 \bflin+ 3 \trch \bflin&=  \rhoF \div \alin - \big( 3\rho -2\rhoF^2\big) \xflin, \\
\ns_3 \underline{\bflin}+ 3 \trchb \underline{\bflin}&=  -\rhoF \div \aalin - \big( 3\rho -2\rhoF^2\big) \xfblin\\ 
\ns_3 \xflin+\frac 1 2 \trchb \xflin &=  - \div \fflin - 2\bflin, \\
\ns_4 \xfblin+\frac 1 2 \trch \xfblin &=  - \div \underline{\fflin} - 2\underline{\bflin}.
\end{align*} 
As a consequence of the above, the quantities $\alin, \aalin, \fflin, \underline{\fflin}, \bflin, \underline{\bflin}$ can be shown to satisfy a system of coupled Teukolsky-like wave equations,
\begin{align} 
\ns_4 \ns_3(r\Omega^2 \alin)-\Big(\frac{2\omega}{\Omega}-\frac{4\Omega}{r}\Big)\ns_3(r\Omega^2 \alin) +\Ds _2\divs[ r\Omega^2\alin]-\left(3\rho-2\rhoF^2\right)r \Omega^2\alin&=-\frac{4Q\Omega^2}{r}\ns_4\fflin,\label{eq:teuk-eq-a}\\
    \ns_3\ns_4(r^4\Omega\fflin)-\frac{\omega}{\Omega}\ns_3[r^4\Omega\fflin]-\Big(\frac{\omega}{\Omega}+ \frac{2\Omega}{r}\Big)\ns_4[\Omega r^4\fflin]+\Ds _2\divs[r^4\Omega\fflin]+\left(5\rhoF^2+6\rho\right)r^4\Omega\fflin&=\frac{r^2\rhoF}{\Omega}\ns_3(r^2\Omega^2\alin),\\
    \ns_3\ns_4 (r^6\Omega\bflin)-\frac{\omega}{\Omega}\ns_3 [r^6\Omega\bflin]-\Big(\frac{\omega}{\Omega}+ \frac{2\Omega}{r}\Big)\ns_4 (r^6\Omega\bflin)+\divs\Ds _2[r^6\Omega\bflin]-3\big(\rho+\rhoF^2\big) r^6\Omega\bflin&=\frac{4r^4\rhoF^2}{\Omega}\ns_3(r ^2\Omega^2\xflin),
\end{align}
and 
\begin{align}
    &\ns_4\ns_3(r \Omega^2\xflin)- \frac{4\Omega}{r}\ns_3(r\Omega^2\xflin)+\divs\Ds _2[r\Omega^2\xflin]-r\big( 3\rho -2\rhoF^2\big) \Omega^2\xflin= - \frac{\Omega^2}{r^5}\ns_4(r^8\bflin),\label{eq:teuk-eq-xx}
\end{align}
as originally shown in \cite{Giorgi4, Giorgi5}. Decay estimates for these Teukolsky variables have been obtained in \cite{Giorgi7a, Apetroaie24} for $|Q|<M$ and $|Q|=M$ respectively.

\subsection{The Class of Solutions}

We consider smooth solutions $\mathcal{S}$ to the system of linearised Einstein--Maxwell equations on a fixed Reissner--Nordstr\"om background. These solutions arise from prescribing characteristic initial data on a pair of intersecting null cones:
\begin{align*} 
C_{u_0}=\{ u_0\} \times \{v \geq v_0\} \times S^2 \qquad \qquad \underline{C}_{v_0}= \{u \geq u_0\}  \times \{ v_0\}  \times S^2,
\end{align*} 
for some fixed values $u_0, v_0$. These null cones, which represent outgoing and incoming null hypersurfaces respectively, are illustrated in Figure~\ref{figure1}.

Even after imposing the double null gauge, one retains a residual gauge freedom in the Einstein equations. In particular, one can add pure gauge solutions—solutions generated by infinitesimal coordinate transformations that preserve the double null form of the metric—to modify the initial data. This freedom can be leveraged to impose certain desirable conditions on the initial data along the hypersurfaces $C_{u_0} \cup \underline{C}_{v_0}$.

To formalize this, we adopt the notion of a partially initial data normalised solution, introduced in \cite{HolCL16} and further developed in \cite{ColHol24}. In contrast to the vacuum case, where one often restricts attention to perturbations supported on angular frequencies $\ell\geq 2$, in the Einstein–Maxwell setting this is not sufficient: electromagnetic radiation contributes already at the level of $\ell=1$ (see \cite{Giorgi6}). Thus, our analysis must include modes with $\ell\geq 1$.

\begin{definition}\label{def:initial-data} We say that a solution $\mathcal{S}$ is a partially initial data normalised solution supported on $\ell \geq 1$ if it satisfies the following: \begin{itemize} \item $\mathcal{S}$ is supported on angular modes with $\ell \geq 1$, 
\item The initial data prescribed on the union of null cones $C_{u_0} \cup \underline{C}_{v_0}$
 satisfies \begin{align*} &\otx(\infty, v_0, \theta, \phi) = 0, \quad (\divs \elin + \rlin)(\infty, v_0, \theta, \phi) = 0 \quad \text{on } S^2_{\infty, v_0}, \ &r^2 \Klin(u_0, \infty, \theta, \phi) = 0. \end{align*} \end{itemize} \end{definition}

Note that partial initial data normalisation as above can be achieved for any Reissner–Nordström spacetime in the subextremal range $|Q|< M$.

It is worth noting that the first two conditions in the definition above are preserved under evolution along the event horizon, as can be seen directly from the evolution equations~\eqref{D-otx},~\eqref{nabb-4-elin} and~\eqref{eq:nabb-4-rlin}. Similarly, the third condition involving $\Klin$ is propagated along future null infinity.

To control the asymptotic behaviour of solutions, we impose mild decay assumptions toward null infinity. Specifically, we assume that the following weighted components of $\mathcal{S}$ admit well-defined, finite limits as $r\to \infty$, for some $s \in (0,1)$:
\begin{equation}\label{decreten} 
\begin{split}
r^{3+s} \alin  , \quad r^{3+s}\blin  , \quad   r^3\rlin , \quad r^3 \slin  , \quad  r^2 \bblin , \quad r\ablin , \quad r^{2+s}\bFlin,\quad r^2\rhoFlin,  \quad r^2\sigmaFlin ,\quad r\bbFlin, \\
 r^2 \xlin , \quad r\xblin , \quad r \elin , \quad  r^2 \eblin , \quad  r^2 \divs  \elin , \quad r^3 \divs  \eblin  , \quad  r^{2+s} \olin , \quad  \olinb , \quad   r^2 \otx  , \quad  r \otxb  , \quad \Olin , \quad  r^3 \Klin.   
\end{split}
\end{equation}

Furthermore, letting $\mathcal{Q}$ denote any of the above quantities, we require that for all multi-indices $i+j\leq 2$, the derivatives $[r\slashed\nabla]^i[\slashed\nabla_T]^i\mathcal{Q}$ also admit well-defined limits at null infinity, and that for each fixed $u_1>u_0$, the supremum estimate
\begin{align} \label{decreten2}
\sum_{i+j\leq 2} \sup_{\left[u_0,u_1\right] \times \{ v\geq v_0\} \times S^2} |[r\slashed{\nabla}]^i [\slashed{\nabla}_T]^j \mathcal{Q}| \leq C\left[u_1\right]
\end{align}
holds, where $C\left[u_1\right]$ depends only on $u_1$ and the initial data but not on $v$.

These assumptions are compatible with the well-posedness theory developed for gravitational perturbations in \cite{DHR}, and they extend naturally to the Einstein–Maxwell setting considered here.

Finally, for regularity at the event horizon $\mathcal{H}^+ \cap \{ v \geq v_0\}$, we recall that the following set of renormalised quantities extends smoothly:
\begin{align*} 
 \Olin, \quad \otx, \quad \Omega^{-2} \otxb, \quad \olin, \quad \Omega^{-2} \oblin, \quad \elin, \quad \eblin, \quad \Omega \chilin, \quad \Omega^{-1} \chiblin, \quad \Klin,\\
 \Omega \blin, \quad \Omega^{-1} \bblin, \quad \Omega^2 \alin, \quad \Omega^{-2} \aalin, \quad \rlin, \quad \slin, \quad \Omega \bFlin, \quad \Omega^{-1} \bbFlin, \quad \rhoFlin, \quad \sigmaFlin.  
\end{align*} 
\begin{lemma}\label{lemma:lambda-X}
    Suppose $\mathcal{S}$ is partially initial data normalised and supported on $\ell\geq 1$. Then, as $u\to \infty$ along the initial cone $\underline{C}_{v_0}$, for $|Q|<M$, we have \begin{align*}
        \lambda(u,v_0,\theta,\phi) \ =\  \mathcal{O}(\Omega^2) \qquad \text{and} \qquad X(u,v_0,\theta,\phi) \ =\  \mathcal{O}(\Omega^2)
    \end{align*} 
    In particular, $\Omega^{-2}\lambda$ and $\Omega^{-2}X$ extend smoothly to the event horizon $\mathcal{H}^+ \cap \{ v \geq v_0\}$. 
\end{lemma}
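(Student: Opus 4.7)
Since $|Q|<M$, the function $\Omega^2$ vanishes simply at $\mathcal{H}^+$ (as $r-r_+$), so on the initial cone $\underline{C}_{v_0}$ the assertion $\lambda=O(\Omega^2)$ (respectively $X=O(\Omega^2)$) as $u\to\infty$ is equivalent to $\lambda$ (respectively $X$) being a smooth function on $\underline{C}_{v_0}$ whose restriction to the sphere $S^2_{\infty,v_0}$ vanishes identically. The plan is to exploit the horizon regularity of the renormalised quantities (in particular the smoothness of $\otx$ and of $\Omega^{-2}\otxb$, which together force $\otx$ and $\otxb$ to vanish on $\mathcal{H}^+$) together with the partial data normalisations $\otx(\infty,v_0)=0$ and $\mu(\infty,v_0)=0$ to verify precisely this vanishing.

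\textbf{Main computation.} Along $\underline{C}_{v_0}$, the transport equation~\eqref{Db-otx} reads
\begin{align*}
\partial_u \otx \,=\, \Omega^2\, G, \qquad G \doteq 2\mu + 4\rho\,\Olin - \tfrac{1}{r}\bigl(\otxb - \otx\bigr),
\end{align*}
with $G$ smooth up to the horizon. Since $\otx(\infty, v_0)=0$ and $\partial_u\Omega^2 = 2\Omega^2\omb$ with $\omb|_{r_+}\neq 0$, L'H\^opital's rule gives
\begin{align*}
\lim_{u\to\infty}\frac{\otx(u,v_0,\theta,\phi)}{\Omega^2(u,v_0)} \,=\, \frac{G(\infty,v_0,\theta,\phi)}{2\,\omb|_{r=r_+}}.
\end{align*}
The partial data forces $\mu(\infty,v_0)=0=\otx(\infty,v_0)$, while horizon regularity of $\Omega^{-2}\otxb$ forces $\otxb(\infty,v_0)=0$, so $G(\infty,v_0) = 4\,\rho|_{r_+}\,\Olin(\infty,v_0)$. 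The Reissner--Nordstr\"om algebraic identity $r\rho = 2\omb$ (immediate from $\rho=-2M/r^3+2Q^2/r^4$ and $\omb=-M/r^2+Q^2/r^3$) then yields
\begin{align*}
\lim_{u\to\infty}\frac{r\,\otx}{\Omega^2}\bigg|_{v_0} \,=\, \frac{2\,r_+\,\rho|_{r_+}\,\Olin(\infty,v_0)}{\omb|_{r_+}} \,=\, 4\,\Olin(\infty,v_0),
\end{align*}
which is precisely $\lambda(\infty,v_0,\theta,\phi)=0$, giving $\lambda=O(\Omega^2)$. For $X$, using the representation $X = r^3(2\mu + \ds\lambda)$ from~\eqref{definition-X}, the vanishing of $\lambda$ as a function on $S^2_{\infty,v_0}$ implies $\ds\lambda(\infty,v_0)=0$, and combined with $\mu(\infty,v_0)=0$ this gives $X(\infty,v_0)=0$, hence $X=O(\Omega^2)$. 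Smooth extension of $\Omega^{-2}\lambda$ and $\Omega^{-2}X$ along the rest of $\mathcal{H}^+\cap\{v\geq v_0\}$ then follows by propagation in $v$: restricting~\eqref{D-otx} to $\mathcal{H}^+$ gives $\partial_v\otx|_{\mathcal{H}^+}=2\om\,\otx|_{\mathcal{H}^+}$, so with $\otx(\infty,v_0)=0$ one has $\otx|_{\mathcal{H}^+}\equiv 0$, and the transport equations for $\lambda$ and $X$ from Lemma~\ref{lemma:notable-transport-equations} propagate the vanishing along the horizon.

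\textbf{Main obstacle.} The crux is the algebraic identity $r\rho = 2\omb$, a Reissner--Nordstr\"om-specific coincidence without which the $\Olin$-contribution inside $G(\infty,v_0)$ would fail to cancel exactly the explicit $-4\Olin$ in the definition of $\lambda$; a residual boundary term proportional to $\Olin(\infty,v_0)$ would then survive and obstruct $\lambda$ from being $O(\Omega^2)$. The only additional input required is the horizon regularity of $\Omega^{-2}\otxb$ to kill the $\otxb$-term in $G(\infty,v_0)$.
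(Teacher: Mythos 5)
Your computation correctly isolates the two essential inputs — the transport equation \eqref{Db-otx} restricted to $\underline{C}_{v_0}$ together with the normalisations $\otx(\infty,v_0)=\mu(\infty,v_0)=0$ and the horizon regularity of $\Omega^{-2}\otxb$, and the algebraic cancellation $r\rho=2\omb$ that kills the $\Olin$--contribution — and these are exactly the ingredients the paper uses. However, there is a genuine gap at the final step. L'H\^opital's rule only shows that $\Omega^{-2}\otx$ (hence $\lambda$) extends \emph{continuously} to $\mathcal{H}^+$ with limit $4\Olin(\infty,v_0)$ (resp.\ $0$); that gives $\lambda=o(1)$, which is strictly weaker than $\lambda=\mathcal{O}(\Omega^2)$. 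Your opening ``equivalence'' between $\lambda=\mathcal{O}(\Omega^2)$ and ``$\lambda$ smooth on $\underline{C}_{v_0}$ with vanishing trace on $S^2_{\infty,v_0}$'' silently assumes that $\lambda$, i.e.\ $\Omega^{-2}\otx$, is smooth (at least $C^1$ in $r$) up to the horizon. That is not among the listed regular quantities (only $\otx$ itself is, not $\Omega^{-2}\otx$), and for generic data it is false; it holds here only because of the normalisation, and establishing it is essentially the content of the lemma. So the argument as written is circular at the point where the rate is extracted.

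To close the gap one needs a quantitative remainder, and this is where the paper's route differs from yours. The paper first upgrades the pointwise normalisation $\mu(\infty,v_0)=0$ to the rate $r\mu=\mathcal{O}(\Omega^2)$ by integrating $\partial_u(r\mu)=\Omega^2 R$ (with $R$ regular, using \eqref{nabb-3-elin} and \eqref{eq:nabb-3-rlin}) from the horizon, via the substitution $\Omega^2\,du=(\omb)^{-1}\Omega\,d\Omega$. Feeding this into $\partial_u\big(r\otx-4\Omega^2\Olin\big)=2\Omega^2 r\mu-\Omega^4\tfrac{\otxb}{\Omega^2}+4\Omega^4\tfrac{\oblin}{\Omega^2}$ (where the $\Olin$--term has already cancelled by your identity $r\rho=2\omb$) gives $\partial_u(\Omega^2\lambda)=\mathcal{O}(\Omega^4)$, and integrating from $u=\infty$ (where $\Omega^2\lambda\to 0$) yields $\Omega^2\lambda=\mathcal{O}(\Omega^4)$, i.e.\ $\lambda=\mathcal{O}(\Omega^2)$. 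Note that your argument only uses the zeroth-order information $\mu(\infty,v_0)=0$ to evaluate $G(\infty,v_0)$, whereas the proof genuinely needs the first-order vanishing $\mu=\mathcal{O}(\Omega^2)$. The same issue affects your treatment of $X=r^3(2\mu+\ds\lambda)$: vanishing of $\mu$ and $\ds\lambda$ on $S^2_{\infty,v_0}$ does not by itself give $X=\mathcal{O}(\Omega^2)$ without the corresponding rates. Your final remark on propagation along $\mathcal{H}^+$ in $v$ is fine as a sketch but is secondary to this point.
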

\begin{proof}
Recall the definition of $\mu$ and $\lambda$ in \eqref{eq:definition-mu}-\eqref{eq:definition-lambda}.  Using equations (\ref{eq:pr-Olin}), (\ref{Db-otx}) we obtain \begin{align*}
      \partial_u \big( r\otx - 4\Omega^2 \Olin)\big) &= \left(4\rho \Omega^2 + 8\partial_u(\Omega)\right)\Olin  + 2\Omega^2(r\divs\elin + r\rlin) - \Omega^4 \frac{\otxb}{\Omega^2} + 4\Omega^4 \frac{\oblin}{\Omega^2} \\
      &=2\Omega^2r\mu - \Omega^4 \frac{\otxb}{\Omega^2} + 4\Omega^4 \frac{\oblin}{\Omega^2}.
  \end{align*}
Using equations \eqref{nabb-3-elin} and \eqref{eq:nabb-3-rlin} we obtain
\begin{align*}
      \partial_u(r\mu)\  = \ & 2\Omega^2 \mu - 2\Omega^2 r\divs \Omega^{-1}\bblin - 2\Omega^2 \rhoF r\divs \Omega^{-1} \bbFlin + 4\Omega^2 \rhoF \rhoFlin \\ &- (\frac{3}{2}\rho + \rhoF^2) r\Omega^2 \cdot\frac{\otxb}{\Omega^2} + \frac{2}{r}\Omega^2 [r^2\ds] \frac{\oblin}{\Omega^2} \ \doteq\  \Omega^2 \cdot R  \end{align*}
      where $R$ is regular at the event horizon. 
     Thus, in view of $\mu|_{\mathcal{H}^+} =0 $ in the partially initial data normalised gauge, and $d\Omega = (-\frac{M}{r^2}+\frac{Q^2}{r^3}) \Omega du$, after we integrate along the initial cone $\underline{C}_{v_0}$ we obtain \begin{align*}
         \int_u^\infty \partial_u(r\mu) (\bar{u},v_0,\theta,\phi) d\bar{u} =    \int_u^\infty \Omega^2 R(\bar{u},v_0,\theta,\phi) d\bar{u} = \int_{\Omega}^0 \Omega \cdot \frac{R}{\big(-\frac{M}{r^2}+\frac{Q^2}{r^3}\big)}d\Omega 
      \end{align*}
      or, \begin{align*}
          |r\mu|(u,v_0,\theta,\phi) \lesssim_Q  \sup_{u\leq\bar{u}\leq \infty}| R(\bar{u}) | \int_0^{\Omega}\Omega d\Omega
      \end{align*}
for $Q<M$,    which shows that $r\mu(u,v) = \mathcal{O}(\Omega^2)$, along $\underline{C}_{v_0}$, as $u\to \infty$. 
Therefore, going back to the expression of $\Omega^2\lambda$ we have \begin{align*}
           \partial_u \big( r\otx - 4\Omega^2 \Olin)\big) = \mathcal{O}(\Omega^4), \qquad \text{along} \quad \underline{C}_{v_0}, \quad \text{as} \quad u \to \infty,
      \end{align*}
      from which we deduce that $\lambda(u,v) = \mathcal{O}(\Omega^2)$, along $\underline{C}_{v_0}$, as $u\to \infty$. 
      Lastly, as in (\ref{definition-X}) we write  \begin{align*}
          X = r^3 (2\mu + \ds\lambda) 
      \end{align*}
     from which we readily obtain using previous results that $X(u,v_0,\theta,\phi) =  \mathcal{O}(\Omega^2)$, along $\underline{C}_{v_0}$, as $u\to \infty$. 
\end{proof}

\section{The Conservation Law and the Basic Coercive Estimate}
\subsection{The Conservation Law}

In this section, we state the novel conservation law in the case of Einstein-Maxwell equations, which extends the one known in vacuum \cite{HolCL16,Col23}.

Let $\mathcal{S}$ be a smooth solution of the system of linearised Einstein-Maxwell equations.
For any $-\infty\leq u_0\leq u_1 \leq \infty$ and $-\infty\leq v_0\leq v_1 \leq \infty$ we define the following fluxes:
\begin{align}
E_{u}[\mathcal{S}](v_0,v_1)&\doteq \int_{v_0}^{v_1}\int_{S^2}\Big[|\Omega\xlin|^2+2|\Omega\eblin|^2+2|\Omega\bFlin|^2+2|\Omega\sigmaFlin|^2+2|\Omega\rhoFlin|^2+r\otx(\rlin+\divs\eblin)\Big]r^2 dv \varepsilon_{S^2},\label{eq:def-Eu}\\
E_{v}[\mathcal{S}](u_0,u_1)&\doteq \int_{u_0}^{u_1}\int_{S^2}\Big[|\Omega\xblin|^2+2|\Omega\elin|^2+2|\Omega\bbFlin|^2+2|\Omega\sigmaFlin|^2+2|\Omega\rhoFlin|^2-r\otxb(\rlin+\divs\elin)\Big]r^2 du \varepsilon_{S^2}.\label{eq:def-Ev}
\end{align}

For these fluxes we have the following conservation law:
\begin{proposition}\label{prop:conservationlaws}
For any $-\infty\leq u_0\leq u_1 \leq \infty$ and $-\infty\leq v_0\leq v_1 \leq \infty$ we have the conservation law
    \begin{align}\label{eq:CL}
E_{u_1}[\mathcal{S}](v_0,v_1)+E_{v_1}[\mathcal{S}](u_0,u_1)&=E_{u_0}[\mathcal{S}](v_0,v_1)+E_{v_0}[\mathcal{S}](u_0,u_1).
\end{align}
\end{proposition}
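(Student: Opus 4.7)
\medskip

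\noindent\textbf{Proof proposal.} The identity \eqref{eq:CL} is to be established as a characteristic divergence identity on the rectangle $R = [u_0,u_1] \times [v_0,v_1] \times S^2$. Writing $F_u(u,v,\theta,\phi)$ and $F_v(u,v,\theta,\phi)$ for the scalar integrands of \eqref{eq:def-Eu} and \eqref{eq:def-Ev} (including the factor $r^2$), the plan is to exhibit an $S^2_{u,v}$ one-form $\mathcal{J}$ such that the pointwise equality
\begin{align*}
\partial_u F_u + \partial_v F_v \;=\; \divs \mathcal{J}
\end{align*}
holds for every solution of the linearised system \eqref{eq:first}--\eqref{nabb-4-check-sigma}. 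Integrating over $R$, Fubini together with the vanishing of $\int_{S^2}\divs \mathcal{J}\,\eps_{S^2}$ would immediately yield \eqref{eq:CL}.

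\smallskip

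\noindent The pointwise identity is then to be verified term by term by differentiating each quadratic piece of $F_u$ along $\partial_u$ and each piece of $F_v$ along $\partial_v$, using the linearised equations. Differentiating $r^2|\Omega\xlin|^2$ via \eqref{nabb-3-chih} produces a symmetric cross-term $\propto \langle\Omega\chilin,\Omega\chiblin\rangle$ and an angular contribution $\propto \langle\Omega\chilin,\Dst\elin\rangle$; the cross-term cancels against the analogous one from $\partial_v(r^2|\Omega\xblin|^2)$ computed via \eqref{nabb-4-chibh}, and the angular contribution can be integrated by parts and rewritten in terms of $\blin$, $\rhoF\bFlin$ and $\nabb\otx$ using the Codazzi relation \eqref{Codazzi-chi}. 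The electromagnetic quadratic terms $|\Omega\bFlin|^2$, $|\Omega\bbFlin|^2$, $|\Omega\rhoFlin|^2$, $|\Omega\sigmaFlin|^2$ are handled in the same spirit using \eqref{nabb-3-bF}--\eqref{nabb-4-bbF} and \eqref{eq:nabb3-rhoFlin}--\eqref{eq:nabb4-sigmaFlin}, while the connection quadratic terms $|\Omega\elin|^2$, $|\Omega\eblin|^2$ are treated via \eqref{nabb-4-elin}, \eqref{nabb-3-eblin} together with \eqref{nabb-4-eblin}, \eqref{nabb-3-elin}. Finally, the mixed coupling $r^3\otx(\rlin+\divs\eblin)$ is differentiated using \eqref{D-otx}, \eqref{nabb-4-elin} and the Bianchi identities \eqref{eq:nabb-3-rlin}--\eqref{eq:nabb-4-rlin}, and similarly for $-r^3\otxb(\rlin+\divs\elin)$. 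All contributions are then collected, and using only the background identities $\rho = -\frac{2M}{r^3}+\frac{2Q^2}{r^4}$, $\rhoF = \frac{Q}{r^2}$, $\partial_v r = \Omega^2$, $\partial_u r = -\Omega^2$, $e_3\Omega = \omb$, $e_4\Omega=\om$, one verifies that the non-divergence terms cancel pairwise.

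\smallskip

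\noindent The structural reason these cancellations must work is that $F_u, F_v$ are expected to coincide, after identifying the null foliation with the orbits of $T = \frac{1}{2}(\partial_u+\partial_v)$, with the double null components of the canonical Hollands--Wald current on the stationary background discussed in Section~\ref{sec:introduction-canonical}; for $Q=0$ the identity reduces to the Schwarzschild conservation law proved by direct inspection in \cite{HolCL16} and identified as a canonical energy in \cite{Col23}. The role of the indefinite mixed terms $r\otx(\rlin+\divs\eblin)$ and $-r\otxb(\rlin+\divs\elin)$ is precisely to absorb the bulk contributions arising from the $\chilin$--$\Dst\elin$ coupling and from the new $\rhoF$-weighted couplings present in the Bianchi system \eqref{nabb-3-b}--\eqref{nabb-4-bb} and in Codazzi. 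The principal obstacle is therefore purely combinatorial rather than analytic: the charged setting introduces many new $\rhoF$-mediated cross-terms compared to \cite{HolCL16}, and the verification reduces to checking that each such term has a matching partner produced either by the Maxwell equations \eqref{nabb-3-bF}--\eqref{nabb-4-bbF} or by the mass-aspect contribution hidden in the $\otx$, $\otxb$ couplings. Bookkeeping the $Q$-linear and $Q^2$ contributions separately, as organised by the background decomposition $\rho + \rhoF^2 = -2M/r^3 + 3Q^2/r^4$, should render the cancellations transparent.
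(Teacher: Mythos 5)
Your proposal follows essentially the same route as the paper: Proposition~\ref{prop:conservationlaws} is proved there precisely by the direct computation you describe (citing Proposition 4.2 of \cite{HolCL16}), i.e.\ verifying that $\partial_u F_u+\partial_v F_v$ is a total angular divergence via the linearised null structure, Maxwell and Bianchi equations, so that integration over the characteristic rectangle yields~\eqref{eq:CL}. Like the paper, you leave the term-by-term cancellation unexecuted, but your strategy, the equations invoked, and the identified cancellation mechanisms are correct --- the only slip is that the $\partial_u$-derivative of the mixed term $r\otx(\rlin+\divs\eblin)$ requires~\eqref{Db-otx} (and the $\partial_v$-derivative of $-r\otxb(\rlin+\divs\elin)$ requires~\eqref{D-otxb}) rather than~\eqref{D-otx}.
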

\begin{proof}
    Direct computation as in Proposition 4.2 in \cite{HolCL16}.
\end{proof}

\begin{remark}
We note that the fluxes
\begin{align*}
E^{\mathrm{aux}}_u[\mathcal{S}](v_0,v_1)&\doteq\int_{v_0}^{v_1}\int_{S^2}\Big[\frac{1}{2}\otx^2+r\otx(\rlin+\divs\eblin)+2\olin\otxb-4\omega\Olin\otx\Big]r^2dv\varepsilon_{S^2},\\
E^{\mathrm{aux}}_v[\mathcal{S}](u_0,u_1)&\doteq\int_{u_0}^{u_1}\int_{S^2}\Big[\frac{1}{2}\otxb^2-r\otxb(\rlin+\divs\elin)+2\olinb\otx+4\omega\Olin\otxb\Big]r^2du\varepsilon_{S^2},
\end{align*}
also satisfy the conservation law
\begin{align*}
    E^{\mathrm{aux}}_{u_1}[\mathcal{S}](v_0,v_1)+E^{\mathrm{aux}}_{v_1}[\mathcal{S}](u_0,u_1)&=E^{\mathrm{aux}}_{u_0}[\mathcal{S}](v_0,v_1)+E^{\mathrm{aux}}_{v_0}[\mathcal{S}](u_0,u_1).
\end{align*}
    This conservation law can be combined with~\eqref{eq:CL} to recover the form from Holzegel's original work~\cite{HolCL16}, which excludes curvature. 
    \end{remark}
\begin{remark}
There are further conservation laws encoded in the linearised Einstein--Maxwell equations. In particular, as in Schwarzschild there is a conservation law at the level of curvature involving $\blin$ and $\bblin$. As was shown in~\cite{Col23}, this arises from manipulating the conservation law resulting from commutating~\eqref{eq:CL} with the angular Killing symmetries. Here, we avoid deriving the exact conservation law but instead produce control by using the commutation of~\eqref{eq:CL}.
\end{remark}

\subsection{The Coercive Estimate}

The fluxes $E_{u}[\mathcal{S}](v_0,v_1)$ and $E_{v}[\mathcal{S}](u_0,u_1)$ are not obviously coercive, but can be manipulated in a form where coercivity is manifest. 

Define the following (coercive) master energies for $\mathcal{S}$ along the final outgoing cone $C_{u_1}$: 
\begin{align}\label{eq:definition-master-energy}
\begin{split}
    \mathbb{E}^{i,j}[\mathcal{S}](u_1)&\doteq \sum_{k\leq i,m\leq j}\int_{v_0}^{\infty}\int_{S^2}\Big[|[\nabla_T]^k[r\ns]^m\xlin|^2+\Big|[\nabla_T]^k[r\ns]^m\Big[\eblin-[r\ns]\frac{\otx}{\Omega^2}\Big]\Big|^2\Big]\Omega^2r^2dv\varepsilon_{S^2}\Big|_{u=u_1}\\
        & +\sum_{k\leq i,m\leq j}\int_{v_0}^{\infty}\int_{S^2}\Big[|[\nabla_T]^k[r\ns]^m\bFlin|^2+|[\nabla_T]^k[r\ns]^m\sigmaFlin|^2\Big|_{u=u_1} \\
 & +\sum_{k\leq i,m\leq j}\int_{v_0}^{\infty}\int_{S^2}\Big[\Big|[\nabla_T]^k[r\ns]^m\Big[\rhoFlin-\frac{Q\lambda}{r^2}\Big]\Big|^2\Big]\Omega^2r^2dv\varepsilon_{S^2}\Big|_{u=u_1} +\sum_{k\leq i,m\leq j}\|[\nabla_T]^k[r\ns]^mX\|^2_{u_1,\infty}.
    \end{split}
\end{align}

\begin{theorem}\label{thm:mastercontrol}
Let $\mathcal{S}$ a partially initial data normalized solution supported on $\ell \geq 1$ of the system of linearised Einstein--Maxwell equations that satisfies the conditions~\eqref{decreten}-\eqref{decreten2} and $|Q|<\frac{\sqrt{15}}{4}M$. Then 
\begin{align*}
    \mathbb{E}^{0,0}[\mathcal{S}](u_1)+\lim_{v\to \infty }\int_{u_0}^{u_1}\int_{S^2}\Big[|r\Omega\xblin|^2+|r\Omega\bbFlin|^2\Big] du \varepsilon_{S^2}\lesssim \mathbb{E}_{data}[\mathcal{S}](u_1)
\end{align*}
where, 
\begin{align*}
    \mathbb{E}_{data}[\mathcal{S}](u_1)\doteq E_{u_0}[\mathcal{S}](v_0,\infty)+E_{v_0}[\mathcal{S}](u_0,u_1)+\|\Omega^{-1}\lambda\|_{u_1,v_0}^2+\|\Omega^{-1}X\|_{u_1,v_0}^2,
\end{align*}
with $\sup_{u\in[u_0,\infty)}\mathbb{E}_{data}[\mathcal{S}](u)<\infty$ for regular initial data.  
\end{theorem}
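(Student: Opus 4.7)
My plan is to begin from the conservation law \eqref{eq:CL} of Proposition~\ref{prop:conservationlaws} and to re-express the outgoing flux $E_{u_1}[\mathcal{S}](v_0,v_1)$ as a sum of manifestly coercive square terms plus boundary contributions on the spheres $S^2_{u_1,v_1}$ and $S^2_{u_1,v_0}$. The obstruction to manifest coercivity is the sign-indefinite cross term $r\otx(\rlin+\divs\eblin)=r\otx\,\underline{\mu}$ in \eqref{eq:def-Eu}. Integration by parts on the sphere (moving $\divs$ off $\eblin$ onto $r\otx$), combined with the transport identities $\partial_v\lambda=-\otx$ and $\partial_v(X/6M)=(1-r_c/r)\otx-\frac{4Q}{3M}\Omega^2\rhoFlin$ from Lemma~\ref{lemma:notable-transport-equations}, should allow me to recast this cross term, after integration along $C_{u_1}$, as complete squares in the partially gauge-invariant combinations $\eblin-r\ns(\otx/\Omega^2)$ and $\rhoFlin-Q\lambda/r^2$, at the cost of boundary sphere integrals involving $\lambda$ and $X$. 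The master energy \eqref{eq:definition-master-energy} has been structured precisely to match these coercive contributions, including the asymptotic term $\|X\|^2_{u_1,\infty}$.

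Second, I would take the limit $v_1\to\infty$. The decay assumptions \eqref{decreten}--\eqref{decreten2} guarantee that $E_{v_1}[\mathcal{S}](u_0,u_1)$ admits a finite limit: the $r^2$ weight in \eqref{eq:def-Ev} combines with the asymptotic behaviour of the integrand to reduce the flux at future null infinity to $\int[|r\Omega\xblin|^2+|r\Omega\bbFlin|^2]\,du\,\varepsilon_{S^2}$, producing the second summand on the left-hand side of the estimate. The conservation law then equates the coercive part of the flux on $C_{u_1}$, modulo boundary sphere contributions, with $E_{u_0}[\mathcal{S}](v_0,\infty)+E_{v_0}[\mathcal{S}](u_0,u_1)$ plus the $S^2_{u_1,v_0}$ terms controlled by $\|\Omega^{-1}\lambda\|_{u_1,v_0}^2+\|\Omega^{-1}X\|_{u_1,v_0}^2$, which are finite by Lemma~\ref{lemma:lambda-X}. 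The higher-derivative statement then follows by commuting with the Killing field $T$ (which preserves the conservation law) and with the angular momentum operators $\Omega_k$ (which commute with frame operators), using \eqref{eq:LieComId} to convert the resulting Lie derivatives back into $r\ns$ norms.

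The main obstacle will be in the first step: producing a positive quadratic form and, in particular, absorbing the sign-indefinite boundary term on $S^2_{u_1,v_1}$ that emerges at the end of the integration by parts. For this I would apply the transport estimate of Lemma~\ref{lem:tplemma} to $\lambda$ and to $X$, along their transport equations from Lemma~\ref{lemma:notable-transport-equations}. The absorption generates a coefficient depending on $r(u_1,v_0)$, and it must be balanced against the coupling between the $\rhoFlin$-renormalisation (which introduces a $Q^2/r^4$ weight via the $Q\lambda/r^2$ correction) and the $\eblin$-renormalisation. A direct algebraic analysis of the resulting quadratic form — taking into account the $r_c=\tfrac{4Q^2}{3M}$ appearing in the transport equation for $X$ — reveals that the form is positive definite exactly in the range $|Q|<\tfrac{\sqrt{15}}{4}M$, whose optimality is treated in Appendix~\ref{app:optimality}. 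Note that in the Schwarzschild case ($Q=0$) the boundary term has favourable sign without absorption, as in \cite{HolCL16}, which explains the simplification in that setting.
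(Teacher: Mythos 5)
Your overall architecture coincides with the paper's proof: start from the conservation law \eqref{eq:CL}; rewrite the indefinite cross term $r\otx(\rlin+\divs\eblin)$ via the decomposition $r^3(\rlin+\divs\eblin)=\frac12 X-\frac{r^2}{2\Omega^2}[r^2\ds]\otx+2r^3\divs\eblin$ and the transport identities of Lemma~\ref{lemma:notable-transport-equations}, so that $E_{u_1}$ becomes a sum of squares of the $f(v)$-gauge-invariant combinations $r\Omega\eblin-\frac{1}{2}[r\ns]\frac{r\otx}{\Omega}$ and $r\Omega\rhoFlin-\frac{Q\Omega\lambda}{r}$ plus sphere boundary terms in $\lambda$ and $X$; send $\underline{C}_{v_1}$ to null infinity, where $E_{v_1}$ reduces to $\int[|r\Omega\xblin|^2+2|r\Omega\bbFlin|^2]$; and handle the remaining boundary term at $S^2_{u_1,v_1}$ by a transport estimate, with the data terms at $S^2_{u_1,v_0}$ controlled via Lemma~\ref{lemma:lambda-X}. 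This is exactly the paper's route.

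The one place your plan, as literally stated, would not close is the critical absorption step. You propose to apply the transport estimate of Lemma~\ref{lem:tplemma} ``to $\lambda$ and to $X$'' separately. Their individual transport equations, $\partial_v\lambda=-\otx$ and $\partial_v(X/6M)=(1-\frac{r_c}{r})\otx-\frac{4Q}{3M}\Omega^2\rhoFlin$, have sources containing $\otx$, and $\otx$ by itself is \emph{not} controlled by the coercive part of the outgoing flux (only the combinations $\eblin-\frac{1}{2}[r\ns]\frac{\otx}{\Omega^2}$ and $\rhoFlin-\frac{Q\lambda}{r^2}$ are); applying the lemma to either quantity alone therefore produces an error of the form $\int\frac{r^2}{\Omega^2}|\otx|^2$ that cannot be absorbed. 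The paper's resolution is to first split the boundary term as $-\big(1-\frac{r_c}{r}\big)\frac{3M}{2}\lambda^2-\frac12 X\lambda=\frac{X^2}{24M}+\frac{2Q^2}{r}\lambda^2-\frac{|X+6M\lambda|^2}{24M}$ and then to exploit the exact cancellation
\begin{align*}
\partial_v\Big[X+\Big(1-\frac{r_c}{r}\Big)6M\lambda\Big]=8Q\,\Omega^2\Big[\frac{Q\lambda}{r^2}-\rhoFlin\Big],
\end{align*}
i.e.\ the $\otx$ contributions cancel precisely in this combination, leaving a source proportional to the flux-controlled quantity. Only then does Lemma~\ref{lem:tplemma} bound $\frac{1}{24M}\|X+6M\lambda\|^2_{u_1,\infty}$ by data plus $\frac{32Q^2}{24Mr_+}\int 2r^2\Omega^2\big|\frac{Q\lambda}{r^2}-\rhoFlin\big|^2$, and absorbing the latter into the flux requires $r_+\geq r_c$, i.e.\ $|Q|\leq\frac{\sqrt{15}}{4}M$. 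Your quadratic-form heuristic and the appendix's Sylvester argument do recover the same threshold, but identifying the particular combination $X+(1-\frac{r_c}{r})6M\lambda$ is the missing ingredient that makes this step rigorous.
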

\begin{remark}
    Theorem~\ref{thm:mastercontrol} can be proved using the method of~\cite{HolCL16}. In that work, Holzegel uses the residual gauge freedom of Lemma~\ref{lem:exactsol} to change gauge only on the final outgoing cone to achieve $\otx[\mathcal{S}-\mathcal{S}_{pg}]=0$, where $\mathcal{S}_{pg}$ denotes a pure gauge solution. Observe that $E_{u}[\mathcal{S}]$ is then manifestly coercive and $E_{u}^{\mathrm{aux}}[\mathcal{S}]$ vanishes. The price to pay is an estimate of the difference $E_{u}[\mathcal{S}]-E_{u}[\mathcal{S}-\mathcal{S}_{pg}]$, which is a boundary term involving the pure gauge solution. This boundary term can be controlled by the flux. Since $\xlin$ is $f(v)$-invariant, one obtains control of it in the initial data gauge. The argument in this work refines the procedure for obtaining a coercive estimate by avoiding the use of gauge manipulations on the final outgoing cone.  
\end{remark}
\begin{proof}[Proof of Theorem~\ref{thm:mastercontrol}]
The starting point for the coercive estimate is the conservation law
\begin{align*}
    E_{u_1}[\mathcal{S}](v_0,v_1)+E_{v_1}[\mathcal{S}](u_0,u_1)=E_{u_0}[\mathcal{S}](v_0,v_1)+E_{v_0}[\mathcal{S}](u_0,u_1).
\end{align*}
We now manipulate the fluxes $E_{u_1}[\mathcal{S}](v_0,v_1)$ and $E_{v_1}[\mathcal{S}](u_0,u_1)$ into manifestly $f(v)$-gauge invariant and $f(u)$-gauge invariant forms respectively (see Lemma~\ref{lem:exactsol} and Remark~\ref{rem:fugaugeinvariant}). To do this we write
    \begin{align*}
    r^3(\rho+\divs\eblin)&=\frac{1}{2}X-\frac{r^2}{2\Omega^2}[r^2\ds]\otx+2r^3\divs\eblin,\\
    r^3(\rho+\divs\elin)&=\frac{1}{2}\Xb+\frac{r^2}{2\Omega^2}[r^2\ds]\otxb+2r^3\divs\elin.
\end{align*}
This yields,
\begin{align*}
E_{u_1}[\mathcal{S}](v_0,v_1)&=\int_{v_0}^{v_1}\int_{S^2}\Big[|r\Omega\xlin|^2+2|r\Omega\bFlin|^2+2|r\Omega\sigmaFlin|^2+2|r\Omega\rhoFlin|^2+2\Big|r\Omega\eblin-\frac{1}{2}[r\ns]\frac{r\otx}{\Omega}\Big|^2\Big]dv\varepsilon_{S^2}\\
&\qquad +\int_{v_0}^{v_1}\int_{S^2}\frac{1}{2}X\otx dv\varepsilon_{S^2},\\
    E_{v_1}[\mathcal{S}](u_0,u_1)&\doteq \int_{u_0}^{u_1}\int_{S^2}\Big[|r\Omega\xblin|^2+2|r\Omega\bbFlin|^2+2|r\Omega\sigmaFlin|^2+2|r\Omega\rhoFlin|^2+2\Big|r\Omega\elin+\frac{1}{2}[r\ns]\frac{r\otxb}{\Omega}\Big|^2\Big] du \varepsilon_{S^2}\\
    &\qquad+\int_{u_0}^{u_1}\int_{S^2}\Big[-\frac{1}{2}\Xb \otxb\Big] du \varepsilon_{S^2}.
\end{align*}
Note that all terms, except $\rhoFlin$-terms, in the first line of each flux are $f(v)$- or $f(u)$-gauge invariant respectively. 

In fact, one can write the last terms as pure boundary terms as follows. From Lemma \ref{lemma:notable-transport-equations}, we know that the quantities $X$ and $\Xb$ obey nice transport equations,
\begin{align*}
       \partial_vX&=6M\Big(1 -\frac{r_c}{r}\Big)\otx-8Q\Omega^2\rhoFlin\\
       \partial_u\Xb&=6M\Big(1 -\frac{r_c}{r}\Big)\otxb+8Q\Omega^2\rhoFlin
    \end{align*}
    in any gauge\footnote{Note that in Schwarzschild ($Q=0$) this transport equation allows one can simply write $\partial_v\frac{X^2}{6M}=X\otx$ to obtain a coercive energy on the outgoing cone.}. So, one can compute 
    \begin{align*}
        \frac{1}{2}X\otx&=\partial_v\Big[-\Big(1-\frac{r_c}{r}\Big)\frac{3M}{2}\lambda^2-\frac{1}{2}X\lambda\Big]+2\Big|r\Omega\rhoFlin-\frac{Q\Omega\lambda}{r}\Big|^2-2|r\Omega\rhoFlin|^2,\\
        \frac{1}{2}\Xb\otxb&=\partial_u\Big[-\Big(1-\frac{r_c}{r}\Big)\frac{3M}{2}\lb^2+\frac{1}{2}\Xb\lb\Big]-2\Big|r\Omega\rhoFlin+\frac{Q\Omega\lb}{r}\Big|^2+2|r\Omega\rhoFlin|^2.
    \end{align*}
    Substituting in the above, we obtain
    \begin{align}
E_{u_1}[\mathcal{S}](v_0,v_1)&=\int_{v_0}^{v_1}\int_{S^2}\Big[|r\Omega\xlin|^2+2|r\Omega\bFlin|^2+2|r\Omega\sigmaFlin|^2+2\Big|r\Omega\rhoFlin-\frac{Q\Omega\lambda}{r}\Big|^2+2\Big|r\Omega\eblin-\frac{1}{2}[r\ns]\frac{r\otx}{\Omega}\Big|^2\Big]dv\varepsilon_{S^2} \nonumber\\
&\qquad +\int_{S^2}\Big[-\Big(1-\frac{r_c}{r}\Big)\frac{3M}{2}\lambda^2-\frac{1}{2}X\lambda\Big]\varepsilon_{S^2}\Big|_{(u_1,v_0)}^{(u_1,v_1)},\label{eq:E_u1-manipulated}\\
    E_{v_1}[\mathcal{S}](u_0,u_1)&= \int_{u_0}^{u_1}\int_{S^2}\Big[|r\Omega\xblin|^2+2|r\Omega\bbFlin|^2+2|r\Omega\sigmaFlin|^2+2\Big|r\Omega\rhoFlin+\frac{Q\Omega\lambda}{r}\Big|^2+2\Big|r\Omega\elin+\frac{1}{2}[r\ns]\frac{r\otxb}{\Omega}\Big|^2\Big] du \varepsilon_{S^2}\nonumber\\
    &\qquad+\int_{S^2}\Big[\Big(1-\frac{r_c}{r}\Big)\frac{3M}{2}\lb^2-\frac{1}{2}\Xb\lb\Big]\varepsilon_{S^2}\Big|_{(u_0,v_1)}^{(u_1,v_1)}.
\end{align}
In this form, energies $E_{v_1}[\mathcal{S}](u_0,u_1)$ and $E_{u_1}[\mathcal{S}](v_0,v_1)$ are manifestly coercive except boundary terms on the final sphere at $(u_1,v_1)$ and on the initial spheres at $(u_1,v_0)$ and at $(u_0,v_1)$. The latter are data terms and so are unconcerning. The argument to establish a coercive estimate from this conservation law proceeds in two steps to deal with the two boundary terms:
\begin{enumerate}
\item If one pushes $E_{v_1}[\mathcal{S}](u_0,u_1)$ to null infinity the boundary term vanishes in the limit as $v_1\rightarrow\infty$ for $u_1<\infty$. Therefore, the limit of $E_{v_1}[\mathcal{S}](u_0,u_1)$ is manifestly coercive. To establish that the boundary term vanishes, we write the Gauss equation as
     \begin{align*}
        2 r^{3}\Klin&=  2r^{3}[-\rlin+2\rhoF \rhoFlin] -\Omega^2r\lb+r^{2}\otx.
     \end{align*}
     From~\eqref{decreten} we note that the $r^{3}\Klin$, $r^3\rlin$, $r^2\rhoFlin$ and $r^2\otx$ have well-defined limits for $v_1\rightarrow \infty$ for $u_1<\infty$, we infer $r\lb$ has a well-defined limit and $\lb\rightarrow 0$ for $v_1\rightarrow \infty$ for $u_1<\infty$. One can write $r^{-1}\Xb$ as
    \begin{align*}
        r^{-1}\Xb=2r^2[\rho-\divs\elin]-\frac{r}{\Omega^2}[r^2\ds]\otxb=2r^2[\rlin+\divs\eblin]-[r^2\ds]\lb.
    \end{align*}
    From the decay of $\lb$ and the conditions~\eqref{decreten} we infer that $r^{-1}\Xb$ vanishes in the limit $v\rightarrow \infty$. This means $\Xb\lb=r^{-1}\Xb r\lb$ vanishes as $v\rightarrow \infty$. Therefore, in the limit as $v\rightarrow \infty$, 
    \begin{align*}
         E_{v_1}[\mathcal{S}](u_0,u_1)&= \int_{u_0}^{u_1}\int_{S^2}\Big[|r\Omega\xblin|^2+2|r\Omega\bbFlin|^2\Big] du \varepsilon_{S^2},
    \end{align*}
    where we use that
    \begin{align*}
        r\Omega\elin+\frac{1}{2}[r\ns]\frac{r\otxb}{\Omega}=\frac{\Omega}{2}[r\ns]\lb-\Omega r\eblin\rightarrow 0,
    \end{align*}
    as $v_1\rightarrow \infty$ for $u_1<\infty$.
    \item To deal with the boundary term on the final sphere at $(u_1,v_1)$ from the outgoing flux $E_{u_1}[\mathcal{S}](v_0,v_1)$ we manipulated it into the form:
\begin{align*}
    -\Big(1-\frac{r_c}{r}\Big)\frac{3M}{2}\lambda^2-\frac{1}{2}X\lambda=\frac{1}{24M}X^2+\frac{2Q^2}{r}\lambda^2-\frac{|X+6M\lambda|^2}{24M}.
\end{align*}
The bad boundary term $|X+6M\lambda|^2$ can be estimated as follows. From Lemma~\ref{lemma:notable-transport-equations}, we have\footnote{In Schwarzschild ($Q=0$), the equation~\eqref{eq:criticaltransporteq} implies that the boundary term can be exchanged for a manifestly positive one with no estimate required. Thus this proof simplifes the argument in~\cite{HolCL16} and~\cite{ColHol24}.}
        \begin{align}
            \partial_v\Big[\frac{1}{\sqrt{24M}}\Big(X+\Big(1-\frac{r_c}{r}\Big)6M\lambda\Big)\Big]=\Omega^2\frac{4\sqrt{2}Q}{\sqrt{24M}}\sqrt{2}\Big[\frac{Q\lambda}{r^2}- \rhoFlin\Big].\label{eq:criticaltransporteq}
        \end{align}
       Lemma~\ref{lem:tplemma} implies
        \begin{align*}
          \frac{\|X+6M\lambda\|^2_{u_1,\infty}}{24M}&\leq \frac{\big\|X+\big(1-\frac{r_c}{r(u,v_0)}\big)6M\lambda \big\|^2_{u_1,v_0}}{24M(1-\frac{r_+}{ r(u,v_0)})}+\frac{32Q^2 }{24Mr_+}\int_{v_0}^{\infty}2r^2\Omega^2\Big|\Big[\frac{Q\lambda}{r^2}-\rhoFlin\Big]\Big|^2dv\eps_{S^2}.
        \end{align*}
        where we used that $\lambda$ is regular for $u_1<\infty$ and $v_1\rightarrow \infty$. Using Lemma \ref{lemma:lambda-X} to infer that $X+\big(1-\frac{r_c}{r}\big)6M\lambda$ is smooth on the exterior and behaves as $\mathcal{O}(\Omega^2)$ asymptotically towards the horizon for $Q<M$, we deduce that the initial data boundary term on the right hand side has a limit towards the event horizon. To absorb the second term on the right hand side into the energy flux in \eqref{eq:E_u1-manipulated} requires 
        \begin{align*}
            1-\frac{1}{24M}\frac{32Q^2}{r_+}\geq 0\Longleftrightarrow  r_+\geq  r_c\Longleftrightarrow |Q|\leq \frac{\sqrt{15}}{4}M.
        \end{align*}
        This is the only step in which we see the restriction on charge.\footnote{Note if $r(u,v_0)$ is sufficiently far from the horizon then we can revisit the proof of Lemma~\ref{lem:tplemma} and take $\eps<1$. This then gives coercive fluxes. For example in the ERN case $Q=M$ we picking $\eps=\frac{3}{4}$ to close the estimate gives a restriction that $r(u,v_0)\geq \frac{4M}{3}$.}\footnote{See appendix~\ref{app:optimality} for a argument about the optimality of this restriction on charge.} Note that the strict inequality in statement of the theorem arises from requiring that one continues to control $\rhoFlin-\rhoF \lambda$.
\end{enumerate}

Note that at data we collect the contributions on $S^{2}_{u_1,v_0}$ of
\begin{align*}
    \int_{S^2}\Big[\Big(1-\frac{4Q^2}{3Mr}\Big)\frac{3M}{2}\lambda^2+\frac{1}{2}X\lambda\Big]\eps_{S^2}\Big|_{(u_1,v_0)}\qquad \text{and}\qquad \Big\|\Omega^{-1}\Big(X+\Big(1-\frac{r_c}{r}\Big)6M\lambda \Big)\Big\|^2_{u_1,v_0},
\end{align*}
which is controlled by $\mathbb{E}_{data}[\mathcal{S}](u_1)$.
\end{proof}

\begin{corollary}\label{corollary:higher-derivatives}
    Let $\mathcal{S}$ a partially initial data normalized solution supported on $\ell \geq 1$ of the system of linearised Einstein--Maxwell equations that satisfies the conditions~\eqref{decreten} and $|Q|<\frac{\sqrt{15}}{4}M$. Then 
\begin{align*}
    \mathbb{E}^{i,j}[\mathcal{S}](u_1)+\int_{u_0}^{u_1}\int_{S^2}\Big[[\ns_T]^i[r\ns]^jr\Omega\xblin|^2+\|[\ns_T]^i[r\ns]^jr\Omega\bbFlin|^2\Big] du \varepsilon_{S^2}\lesssim \mathbb{E}^{i,j}_{data}[\mathcal{S}]
\end{align*}
where, 
\begin{align*}
    \mathbb{E}^{i,j}_{data}[\mathcal{S}]\doteq E^{i,j}_{u_0}[\mathcal{S}](v_0,\infty)+E^{i,j}_{v_0}[\mathcal{S}](u_0,u_1)+\|[\ns_T]^i[r\ns]^j\Omega^{-1}\lambda\|_{u_1,v_0}^2+\|[\ns_T]^i[r\ns]^j\Omega^{-1}X\|_{u_1,v_0}^2.
\end{align*}
and
\begin{align*}
E_{u_0}^{i,j}[\mathcal{S}](v_0,v_1)&\doteq\sum_{k\leq i,m\leq j}\int_{v_0}^{v_1}\int_{S^2}\Big[|[\ns_T]^k[r\ns]^mr\Omega\xlin|^2+2|[\ns_T]^k[r\ns]^mr\Omega\bFlin|^2\Big]dv\varepsilon_{S^2}\\
&\qquad+\int_{v_0}^{v_1}\int_{S^2}\Big[\frac{1}{2}\langle[\ns_T]^i[r\ns]^jX,[\ns_T]^i[r\ns]^j\otx\rangle+2|[\ns_T]^i[r\ns]^jr\Omega\sigmaFlin|^2\\
&\qquad\qquad\qquad+2|[\ns_T]^i[r\ns]^jr\Omega\rhoFlin|^2+2\Big|[\ns_T]^i[r\ns]^j\Big[r\Omega\eblin-\frac{1}{2}[r\ns]\frac{r\otx}{\Omega}\Big]\Big|^2\Big] dv\varepsilon_{S^2},\\
    E^{i,j}_{v_0}[\mathcal{S}](u_0,u_1)&\doteq \sum_{k\leq i,m\leq j}\int_{v_0}^{v_1}\int_{S^2}\Big[|[\ns_T]^k[r\ns]^mr\Omega\xblin|^2+2|[\ns_T]^k[r\ns]^mr\Omega\bbFlin|^2\Big]dv\varepsilon_{S^2}\\
&\qquad+\int_{v_0}^{v_1}\int_{S^2}\Big[-\frac{1}{2}\langle[\ns_T]^i[r\ns]^j\Xb,[\ns_T]^i[r\ns]^j\otxb\rangle+2|[\ns_T]^i[r\ns]^jr\Omega\sigmaFlin|^2\\
&\qquad\qquad\qquad+2|[\ns_T]^i[r\ns]^jr\Omega\rhoFlin|^2+2\Big|[\ns_T]^i[r\ns]^j\Big[r\Omega\elin+\frac{1}{2}[r\ns]\frac{r\otxb}{\Omega}\Big]\Big|^2\Big] dv\varepsilon_{S^2}
\end{align*}
\end{corollary}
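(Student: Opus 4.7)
The plan is to reduce the higher-derivative bound to the base estimate of Theorem~\ref{thm:mastercontrol} by commuting the linearised Einstein--Maxwell system with the Killing symmetries of the Reissner--Nordstr\"om background, namely the stationary vector field $T$ and the angular momentum vector fields $\Omega_1,\Omega_2,\Omega_3$. Since these are Killing and preserve both the double null foliation and the horizon/null infinity, for any solution $\mathcal{S}$ of the system~\eqref{eq:first}--\eqref{nabb-4-check-sigma} the Lie-differentiated collections $\slashed{\mathcal{L}}_T\mathcal{S}$ and $\slashed{\mathcal{L}}_{\Omega_k}\mathcal{S}$ again satisfy the full linearised system. The partial initial-data normalisation of Definition~\ref{def:initial-data} (supported on $\ell\geq 1$ and the vanishing conditions on $\otx$, $\divs\elin+\rlin$ and $r^2\Klin$) is preserved by Lie differentiation along these Killing fields, as are the decay assumptions~\eqref{decreten}--\eqref{decreten2}, and the restriction $|Q|<\tfrac{\sqrt{15}}{4}M$ carries over verbatim.

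For $T$-derivatives, the identification $\slashed{\mathcal{L}}_T\equiv\ns_T$ on $S^2_{u,v}$-tensors already noted in the preliminaries, together with $[\ns_T,\ns_3]=[\ns_T,\ns_4]=0$ on the background (since $T$ is Killing and a linear combination of $\tfrac{1}{2}\partial_u$, $\tfrac{1}{2}\partial_v$ with coefficients depending only on $r$), implies that applying Theorem~\ref{thm:mastercontrol} to $[\ns_T]^i\mathcal{S}$ yields directly the $(i,0)$-estimate. Crucially, the transport equation~\eqref{eq:criticaltransporteq} for $X+\bigl(1-\tfrac{r_c}{r}\bigr)6M\lambda$ commutes with $\ns_T$, so Lemma~\ref{lem:tplemma} applied to $[\ns_T]^i(X+\bigl(1-\tfrac{r_c}{r}\bigr)6M\lambda)$ produces the $(i,0)$ boundary contribution, controlled after the triangle inequality by $\|[\ns_T]^i\Omega^{-1}\lambda\|^2_{u_1,v_0}+\|[\ns_T]^i\Omega^{-1}X\|^2_{u_1,v_0}$ via Lemma~\ref{lemma:lambda-X}.

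For the angular derivatives, I would apply the same argument to $\slashed{\mathcal{L}}_{\Omega_{k_1}}\cdots\slashed{\mathcal{L}}_{\Omega_{k_j}}[\ns_T]^i\mathcal{S}$ and sum over $k_1,\ldots,k_j\in\{1,2,3\}$. Since each $\Omega_k$ is tangential to $S^2_{u,v}$ and Killing, it commutes with $\ns_3$, $\ns_4$, and the angular operators $\divs$, $\curls$, $\DDs_1$, $\DDs_2$ of the background. This produces the desired master energy but with each $|[r\ns]^m\Theta|^2$ replaced by $\sum_{k_1,\ldots,k_m}|\slashed{\mathcal{L}}_{\Omega_{k_1}}\cdots \slashed{\mathcal{L}}_{\Omega_{k_m}}\Theta|^2$. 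Iteratively applying identity~\eqref{eq:LieComId} converts these into $|[r\ns]^m\Theta|^2$ plus tensorial zeroth-order corrections (nonzero for $1$-forms and symmetric traceless $2$-tensors). These corrections are absorbed by downward induction on $m\leq j$, using the estimate at order $(i,j')$ with $j'<j$ already established, together with the Poincar\'e inequality available on the $\ell\geq 1$ sector which is consistent with our support assumption.

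The main obstacle is the bookkeeping in the commutation step: although the Killing property makes all geometric commutators vanish identically, the conversion between $\sum_k\slashed{\mathcal{L}}_{\Omega_k}$ and $r\ns$ requires carrying through~\eqref{eq:LieComId} for each component of $\mathcal{S}$ according to its tensorial type, and the resulting lower-order terms must be absorbed in the correct order. Once this induction is arranged, the manipulations of the proof of Theorem~\ref{thm:mastercontrol}---rewriting $E_{u_1}^{i,j}$ and $E_{v_1}^{i,j}$ into manifestly (partially) gauge-invariant form via the identities for $r^3(\rho+\divs\eblin)$ and $r^3(\rho+\divs\elin)$, vanishing of the $v_1\to\infty$ boundary term by the Gauss equation and the decay~\eqref{decreten}, and the absorption of the $|X+6M\lambda|^2$ boundary at $(u_1,v_1)$ using~\eqref{eq:criticaltransporteq} and Lemma~\ref{lem:tplemma} under the charge condition $|Q|<\tfrac{\sqrt{15}}{4}M$---apply verbatim at each derivative order, yielding the corollary.
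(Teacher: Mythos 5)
Your proposal is correct and follows essentially the same route as the paper, whose proof is simply the one-line observation that the result follows from Theorem~\ref{thm:mastercontrol} applied to the solutions obtained by commuting with $T$ and the $\Omega_k$, together with the identities~\eqref{eq:LieComId}. Your write-up merely makes explicit the details (preservation of the normalisation and decay assumptions under Killing commutation, and the inductive absorption of the zeroth-order terms arising from~\eqref{eq:LieComId}) that the authors leave implicit.
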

\begin{proof}
    This follows from Theorem~\ref{thm:mastercontrol} and the identities~\eqref{eq:LieComId}.
\end{proof}

\section{A Hierarchy of Estimates on Outgoing Cones}\label{section-hierarchy}

To prove Theorem \ref{main-theorem}, we want to establish boundedness of the energy flux associated with the Teukolsky variables $\bflin$, $\fflin$, $\alin$, $\underline{\bflin}$, $\underline{\fflin}$ and $\ablin$. Our strategy is to uncover a hierarchy within the system of gravitational and electromagnetic perturbations that enables us to leverage the control provided by Theorem~\ref{thm:mastercontrol} over the coercive master energy $\mathbb{E}^{i,j}[\mathcal{S}]$, as defined in~\eqref{eq:definition-master-energy}. In the following subsection we describe the hierarchy to prove boundedness of the energy flux for the positive spin Teukolsky variables (i.e. $\bflin$, $\fflin$, $\alin$) in Section \ref{sec:hierarchy-positive-spin}, and then for the negative spin variables (i.e. $\underline{\bflin}$, $\underline{\fflin}$, $\aalin$) in Section \ref{sec:hierarchy-negative-spin}.

Notice that the resulting estimates in the present section remain valid throughout the entire subextremal and extremal range $|Q| \leq M$. Consequently, an extension of Theorem \ref{thm:mastercontrol} to this full range would ensure that the derived hierarchy yields control over the Teukolsky variables in the same regime.

\subsection{The Hierarchy for Positive Spin Teukolsky Variables}\label{sec:hierarchy-positive-spin}

We define the following weighted energy fluxes associated to the Teukolsky variables $\bflin$, $\fflin$ and $\alin$:
\begin{align}\label{eq:definition-norms-positive-spin}
\begin{split}
        \mathrm{E}_{u}[\fflin]&\doteq \int_{v_0}^\infty \int_{S^2}\Big[|\Omega\ns_4(r^2\Omega \fflin)|^2+|[r\ns]r^2\Omega \fflin|^2+|r^2\Omega \fflin |^2\Big]dv\varepsilon_{S^2},\\
    \mathrm{E}_u[\bflin]&\doteq \int_{v_0}^\infty \int_{S^2}\Big[|\Omega\ns_4(r^4\Omega \bflin)|^2+|[r\ns]r^4\Omega \bflin|^2+|r^4\Omega \bflin|^2\Big]dv\varepsilon_{S^2},\\
    \mathrm{E}_{u}[\alin]&\doteq\int_{v_0}^{\infty}\int_{S^2}\Big[|\Omega\ns_4(r\Omega^2\alin)|^2+|[r\ns]r\Omega^2\alin|^2+|r\Omega^2\alin|^2\Big]dv\eps_{S^2},
\end{split}
\end{align}
and their higher derivative versions $\mathrm{E}^{i,j}_{u}[\fflin]$, $\mathrm{E}^{i,j}_u[\bflin]$, $\mathrm{E}^{i,j}_{u}[\alin]$ as above.

We obtain control in terms of the master energy and the initial data of $\mathrm{E}_{u}[\fflin]$, $\mathrm{E}_{u}[\bflin]$ and $\mathrm{E}_{u}[\alin]$ as follows.

\paragraph{Control of $\mathrm{E}_{u}[\mathfrak{f}]$:}

From the definition \eqref{eq:definitions-ff} of $\fflin\doteq \DDs_2 \bFlin +\rhoF \chilin$ and the fact that $\mathbb{E}^{i,j}[\mathcal{S}](u_1)$ contains $T$ and angular derivatives of $\chilin$ and $\bFlin$, we easily deduce that 
\begin{align*}
    \int_{v_0}^\infty \int_{S^2}|[\ns_T]^i[r\ns]^j r^2\Omega \fflin |^2 dv\varepsilon_{S^2} \lesssim \mathbb{E}^{i,j+1}[\mathcal{S}](u_1) \lesssim \mathbb{E}^{i,j+1}_{data}[\mathcal{S}](u),
\end{align*}
where for the last inequality we used Corollary \ref{corollary:higher-derivatives}.

Using \eqref{eq:definition-T} and equations \eqref{nabb-3-chih}-\eqref{nabb-3-bF}, we deduce the following equations in the $\ns_4$ direction:
\begin{align} \label{eq: e_4(chilin,bFlin)}
\begin{aligned}
\Omega\ns_4(r\Omega\xlin)&=2r\ns_T(\Omega\xlin)+2\Omega^2[r\slashed{\mathcal{D}}_2^{\star}]\Big(\elin+\frac{1}{2}\ns\lambda\Big)+\Omega^2(\Omega\xblin-\DDs_2[r\ns]\lambda]), \\
\Omega\ns_4(r\Omega\bFlin)&=2r\ns_T(\Omega\bFlin)+\Omega^2[r\slashed{\mathcal{D}}_1^{\star}](0,\sigmaFlin)-\Omega^2[r\ns]\Big(\rhoFlin-\frac{Q\lambda}{r^2}\Big)-2\frac{Q}{r}\Omega^2\Big(\elin+\frac{1}{2}\ns\lambda\Big),
\end{aligned}
\end{align}
where we manipulated the right hand side to have terms, such as $\rhoFlin-\frac{Q\lambda}{r^2}$, appearing in $\mathbb{E}^{i,j}[\mathcal{S}](u)$. For the other terms, observe that using the definition \eqref{eq:definition-lambda} of $\lambda$ and \eqref{eq:pr-Olin}, we can write 
\begin{align}\label{eq:expression-eta-nabla-lambda}
     \elin+   \frac{1}{2}\ns\lambda=-\eblin+\ns\frac{r\otx}{2\Omega^2},
    \end{align}
 which is controlled by   $\mathbb{E}^{i,j}[\mathcal{S}](u)$. For the remaining term, combining equation~\eqref{nabb-4-chibh} and \eqref{eq:partial-v-lambda} we obtain
     \begin{align}\label{eq:nab-4-chibh-lambda}
       \Omega\ns_4\left(r\Omega\xblin-[r\DDs_2][r\ns]\lambda\right)&=-2\Omega^2[r\DDs_2]\Big(\eblin-\frac{[r\ns]\otx}{2\Omega^2}\Big)-\Omega^2\Omega\xlin,
    \end{align}
and applying Lemma \ref{lem:tplemma} to the above this gives 
        \begin{align}\label{eq:control-intermediate-chibh-lambda}
        \begin{split}
        \int_{v_0}^{\infty}\int_{S^2}\frac{\Omega^2}{r^2}|[r\ns]^j[\ns_T]^i(r\Omega\xblin-[r\DDs_2][r\ns]\lambda)|^2dv\eps_{S^2}&\lesssim\|[\ns_T]^i[r\ns]^j(r\Omega\xblin-[r\DDs_2][r\ns]\lambda)\|_{u_1,v_0}^2\\
        &\qquad+\mathbb{E}^{i,j+1}[\mathcal{S}](u).
        \end{split}
    \end{align}
    By putting together the above, we deduce
    \begin{align*}
    \int_{v_0}^\infty \int_{S^2}| \Omega \ns_4[\ns_T]^i[r\ns]^jr^2\Omega \fflin |^2 dv\varepsilon_{S^2} &\lesssim \|[\ns_T]^i[r\ns]^j(r\Omega\xblin-[r\DDs_2][r\ns]\lambda)\|_{u_1,v_0}^2+\mathbb{E}^{i+1,j+1}[\mathcal{S}](u).
\end{align*}
Using Corollary \ref{corollary:higher-derivatives}, this proves that 
\begin{align}\label{eq:control-Eufflin}
    \mathrm{E}^{i,j}_{u}[\fflin] \lesssim \|[\ns_T]^i[r\ns]^j(r\Omega\xblin-[r\DDs_2][r\ns]\lambda)\|_{u_1,v_0}^2+\mathbb{E}^{i+1,j+2}_{data}[\mathcal{S}](u).
\end{align}

\paragraph{Control of $\mathrm{E}_{u}[\mathfrak{b}]$:}

From the definition \eqref{eq:definitions-bb} of $\bflin\doteq 2\rhoF \blin - 3 \rho \bFlin$, we are left to obtain control for $\blin$. By rewriting the Codazzi equation \eqref{Codazzi-chi} as
\begin{align*}
    r^2\Omega\blin&=-\Omega^2r \Big(\eblin-[r\ns]\frac{\otx}{2\Omega^2}\Big)-[r\divs](r\Omega\xlin)+r^2\rhoF\Omega\bFlin ,
\end{align*}
we deduce 
\begin{align}\label{eq:control-blin}
\int_{v_0}^{\infty}\int_{S^2}|[\ns_T]^i[r\ns]^jr^2\Omega\blin|^2dv\eps_{S^2}&\lesssim \mathbb{E}^{i,j+1}[\mathcal{S}](u_1) \lesssim \mathbb{E}^{i,j+1}_{data}[\mathcal{S}],
\end{align}
where for the last inequality we used Corollary \ref{corollary:higher-derivatives}.

{\color{black} Next, taking the $\ns_4$  derivative of the  Codazzi equation yields \begin{align*}
    \Omega\ns_4( r^2\Omega\blin)= &-\frac{\Omega^2}{r} \Omega\ns_4\Big(r^2\eblin -r^2[r\ns]\frac{\otx}{2\Omega^2}\Big) - \Omega\ns_4\left(\frac{\Omega^2}{r}\right)\Big(r^2\eblin -r^2[r\ns]\frac{\otx}{2\Omega^2}\Big) \\ &  -[r\divs]\Omega\ns_4(r\Omega\xlin)+\Omega\ns_4(r^2\rhoF\Omega\bFlin).
\end{align*}
However, using~\eqref{D-otx} and~\eqref{nabb-4-eblin}, it is immediate to check that \begin{align*}
    \Omega\ns_4\Big(r^2\eblin -r^2[r\ns]\frac{\otx}{2\Omega^2}\Big) = r^2\Omega\blin + r^2\rhoF \Omega\bFlin,
\end{align*}
and substituting $\Omega\ns_4(r\Omega\chilin), \Omega\ns_4(r\Omega\bFlin)$ with the expressions calculated in \eqref{eq: e_4(chilin,bFlin)}, we  readily obtain \begin{align*}
\int_{v_0}^{\infty}\int_{S^2}&\Big[|\Omega\ns_4([\ns_T]^i[r\ns]^jr^2\Omega\blin)|^2+| [r\ns][\ns_T]^i[r\ns]^jr^2\Omega\blin|^2+| [\ns_T]^i[r\ns]^jr^2\Omega\blin|^2\Big]dv\eps_{S^2}\\
&\lesssim \|[\ns_T]^i[r\ns]^{j+1}(r\Omega\xblin-[r\Ds_2][r\ns]\lambda)\|_{u_1,v_0}^2+\mathbb{E}^{i+1,j+2}[\mathcal{S}](u).
\end{align*}
Using Corollary \ref{corollary:higher-derivatives}, this proves that 
    \begin{align*}
\mathrm{E}^{i,j}_{u}[\bflin] &\lesssim \|[\ns_T]^i[r\ns]^{j+1}(r\Omega\xblin-[r\Ds_2][r\ns]\lambda)\|_{u_1,v_0}^2+\mathbb{E}_{data}^{i+1,j+2}[\mathcal{S}](u).
\end{align*}
}

\paragraph{Control of $\mathrm{E}_{u}[\alpha]$:}
{\color{black} In order to bound $[\ns_T]^i[r\ns]^jr\Omega^2\alin$ we work with equations \eqref{nabb-4-chih} and \eqref{eq: e_4(chilin,bFlin)} to write
\begin{align} \label{eq:alin alt-expr}
\begin{aligned}
         -r\Omega^2 \alin\  &=  \Omega\ns_4(r\Omega \chilin)-2r\Omega\hat{\om}\Omega\xlin+\frac{r}{2}\Omega\trch \Omega \chilin\\
      & = 2\ns_T (r\Omega\xlin)+2\Omega^2[r\slashed{\mathcal{D}}_2^{\star}]\Big(\elin+\frac{1}{2}\ns\lambda\Big)+\Omega^2(\Omega\xblin-\DDs_2[r\ns]\lambda]) +\left(\frac{1}{2}\Omega\trch -2\Omega\hat{\om}\right)(r\Omega  \chilin),
\end{aligned}
\end{align}
from which we deduce 
    \begin{align}\label{eq:control-alpha-1}
    \int_{v_0}^{\infty}\int_{S^2}\Big[|[\ns_T]^i[r\ns]^jr\Omega^2\alin|^2\Big]dv\eps_{S^2}&\lesssim \left\|[\ns_T]^i[r\ns]^j\left(r\Omega\xblin-[r\DDs_2][r\ns]\lambda\right)\right\|_{u_1,v_0}+\mathbb{E}^{i+1,j+1}[\mathcal{S}](u).
    \end{align} 
    It only remains to estimate the $\ns_4$ derivative of $\alin$, and working with \eqref{eq:alin alt-expr}, we calculate \begin{align*}
        - \Omega\ns_4 (r\Omega^2 \alin) &= 2\ns_T ( \Omega\ns_4(r\Omega\chilin)) + \frac{2\Omega^2}{r^2}[r\slashed{\mathcal{D}}_2^{\star}]\Omega\ns_4\Big(r^2\elin+\frac{r^2}{2}\ns\lambda\Big)+ 2 \Omega\ns_4 \left(\frac{\Omega^2}{r^2}\right)[r\slashed{\mathcal{D}}_2^{\star}]\Big(r^2\elin+\frac{r^2}{2}\ns\lambda\Big) \\ 
        &\qquad + \frac{\Omega^2}{r}\Omega\ns_4(r\Omega\xblin-r\DDs_2[r\ns]\lambda) + \Omega\ns_4\left(\frac{\Omega^2}{r}\right)(r\Omega\xblin-r\DDs_2[r\ns]\lambda) +\Omega\ns_4\Big[\Big(\frac{1}{2}\Omega\trch -r\Omega\hat{\om}\Big)(r\Omega  \chilin)\Big].
    \end{align*}
    Thus, using relations \eqref{eq: e_4(chilin,bFlin)}-\eqref{eq:nab-4-chibh-lambda} we arrive at \begin{align*}
    - \Omega\ns_4 (r\Omega^2 \alin) &= 4[\ns_T]^2(r\Omega\chilin)  +\frac{4\Omega^2}{r^2}[\ns_T][r\slashed{\mathcal{D}}_2^{\star}]\Big(r^2\elin+\frac{r^2}{2}\ns\lambda\Big)+\frac{2\Omega^2}{r}[\ns_T](r\Omega\xblin-r\DDs_2[r\ns]\lambda) \\
    & -\frac{2\Omega^2}{r^2} [r\Ds_2](r^2\Omega\blin +r^2\rhoF \Omega\bFlin) + 2 \Omega\ns_4 \left(\frac{\Omega^2}{r^2}\right)[r\slashed{\mathcal{D}}_2^{\star}]\Big(r^2\elin+\frac{r^2}{2}\ns\lambda\Big)-\frac{\Omega^4}{r^2}(r\Omega\xlin) \\
    &  -\frac{2\Omega^4}{r}[r\DDs_2]\Big(\eblin-\frac{[r\ns]\otx}{2\Omega^2}\Big) + \Omega\ns_4\left(\frac{\Omega^2}{r}\right)(r\Omega\xblin-r\DDs_2[r\ns]\lambda) +\Omega\ns_4\Big[\Big(\frac{1}{2}\Omega\trch -r\Omega\hat{\om}\Big)(r\Omega  \chilin)\Big]
    \end{align*} 
    from which we immediately see that \begin{align*}
\int_{v_0}^{\infty}\int_{S^2}|\Omega\ns_4(r\Omega^2\alin)|^2dv\eps_{S^2}
\lesssim \left\|[\ns_T]^{\leq 1}\left(r\Omega\xblin-[r\Ds_2][r\ns]\lambda\right)\right\|_{u_1,v_0}^2+\mathbb{E}^{2,2}[\mathcal{S}](u). 
    \end{align*}
    Using Corollary \ref{corollary:higher-derivatives} this proves that 
    \begin{align*}
\mathrm{E}^{i,j}_{u}[\alin] &\lesssim \left\|[\nab_T]^{i+1}[r\ns]^{j+1}\left(r\Omega\xblin-[r\Ds_2][r\ns]\lambda\right)\right\|_{u_1,v_0}^2+\mathbb{E}_{data}^{i+2,j+2}[\mathcal{S}](u).
\end{align*}
}

\subsection{The Hierarchy for Negative Spin Teukolsky Variables}\label{sec:hierarchy-negative-spin}

We define the following weighted energy fluxes associated to the Teukolsky variables $\underline{\bflin}$, $\underline{\fflin}$ and $\ablin$:
\begin{align}\label{eq:definition-norms-negative-spin}
\begin{split}
       \mathrm{E}_{u}[\underline{\fflin}]&\doteq \int_{v_0}^\infty \int_{S^2}\Big[|\Omega\ns_4(r^2\Omega \underline{\fflin})|^2+|[r\ns]r\Omega^2 \underline{\fflin}|^2+|r\Omega^2\underline{\fflin}|^2\Big]dv\varepsilon_{S^2}\\
    \mathrm{E}_u[\underline{\bflin}]&\doteq \int_{v_0}^\infty \int_{S^2}\Big[|\Omega\ns_4(r^4\Omega \underline{\bflin})|^2+|[r\ns]r^3\Omega^2\underline{\bflin}|^2+|r^3\Omega^2\underline{\bflin}|^2\Big]dv\varepsilon_{S^2}\\
    \mathrm{E}_{u}[\ablin]&\doteq\int_{v_0}^{\infty}\int_{S^2}\Big[|r\Omega\ns_4(r\Omega^2\ablin)|^2+|[r\ns]\Omega^3\ablin|^2+|\Omega^3\ablin|^2\Big]dv\eps_{S^2}, 
\end{split}
\end{align}
and their higher derivative versions $\mathrm{E}^{i,j}_{u}[\underline{\fflin}]$, $\mathrm{E}^{i,j}_u[\underline{\bflin}]$, $\mathrm{E}^{i,j}_{u}[\ablin]$ as above.

We obtain control in terms of the master energy and the initial data of $\mathrm{E}_{u}[\underline{\fflin}]$, $\mathrm{E}_{u}[\underline{\bflin}]$ and $\mathrm{E}_{u}[\ablin]$ as follows.

\paragraph{Control of $\mathrm{E}_{u}[\underline{\mathfrak{f}}]$:}

We can rewrite the definition \eqref{eq:definitions-ff} of $\underline{\fflin}\doteq \Ds_2\bbFlin-\rhoF\xblin$ as 
     \begin{align*}
        r^2\Omega\underline{\fflin}&=r[r\Ds_2]\Big(\Omega \bbFlin-\rhoF[r\ns]\lambda\Big)-r\rhoF\Big(r\Omega\xblin-[r\DDs_2][r\ns]\lambda\Big),
    \end{align*}
    where in the above we have the sum of two $f(v)$-gauge invariant quantities, according to Lemma \ref{lem:exactsol}. Now, from \eqref{nabb-4-bbF}-\eqref{eq:partial-v-lambda} and \eqref{eq:nab-4-chibh-lambda},  their $\ns_4$ derivatives are given by
    \begin{align}
        \Omega\ns_4\Big(r\Omega \bbFlin-r^2\rhoF \frac{1}{r}[r\ns]\lambda\Big)&=r\Omega^2\DDs_1(0, -\sigmaFlin)-\Omega^2[r\ns]\Big(\rhoFlin-\frac{Q\lambda}{r^2}\Big) \nonumber\\
        &\qquad-2\Omega^2\rhoF r\Big(\eblin- \frac{1}{2\Omega^2}[r\ns]\otx\Big), \label{eq:nab-4-bbbF-lambda}\\
                \Omega\ns_4\Big(r\rhoF[r\Omega\xblin-[r\DDs_2][r\ns]\lambda]\Big)&=-2\Omega^2r^2\rhoF[r\DDs_2]\Big(\eblin-\frac{[r\ns]\otx}{2\Omega^2}\Big)-\Omega^2\Omega r^2\rhoF\xlin \nonumber\\
        &\qquad -\frac{\Omega^2}{r^2}[r^2\rhoF]\big(r\Omega\xblin-[r\DDs_2][r\ns]\lambda\big),\nonumber
    \end{align}
    where we manipulated the right hand side to have terms, such as $\rhoFlin-\frac{Q\lambda}{r^2}$, appearing in $\mathbb{E}^{i,j}[\mathcal{S}](u)$, or terms, such as $r\Omega\xblin-[r\DDs_2][r\ns]\lambda$, which have already been estimated (see \eqref{eq:control-intermediate-chibh-lambda}).

    By applying Lemma~\ref{lem:tplemma}, we deduce
    \begin{align}\label{eq:control-bbF-lambda}
    \begin{split}
    &\int_{v_0}^{\infty}\int_{S^2}\frac{\Omega^2}{r^2}\Big[|r\big(\Omega \bbFlin-\rhoF[r\ns]\lambda\big)|^2+|r\rhoF\big(r\Omega\xblin-[r\DDs_2][r\ns]\lambda\big)|^2\Big]dv\eps_{S^2}\\
    &\lesssim \int_{v_0}^{\infty}\int_{S^2}\Big[\Big|\Omega\ns_4\Big(r(\Omega \bbFlin-\rhoF[r\ns]\lambda)\Big)\Big|^2+\Big|\Omega\ns_4\big(r\rhoF(r\Omega\xblin-[r\DDs_2][r\ns]\lambda)\big)\Big|^2\Big]dv\eps_{S^2}\\
    &\lesssim \left\|r(\Omega \bbFlin-\rhoF[r\ns]\lambda)\right\|_{u_1,v_0}^2+\left\|r\Omega\xblin-[r\DDs_2][r\ns]\lambda\right\|_{u_1,v_0}^2+\mathbb{E}^{0,1}[\mathcal{S}](u).
    \end{split}
\end{align}
Using Corollary \ref{corollary:higher-derivatives}, this proves that 
\begin{align}\label{eq:control-ffb}
\begin{split}
        \mathrm{E}^{i,j}_{u}[\underline{\fflin}] &\lesssim \|[\ns_T]^i[r\ns]^{j+1}\left(r(\Omega \bbFlin-\rhoF[r\ns]\lambda)\right)\|_{u_1,v_0}^2+\|[\ns_T]^i[r\ns]^{j+1}\left(r\Omega\xblin-[r\DDs_2][r\ns]\lambda\right)\|_{u_1,v_0}^2\\
    &\qquad+\mathbb{E}^{i,j+2}_{data}[\mathcal{S}](u).
\end{split}
\end{align}

    \paragraph{Control of $\mathrm{E}_{u}[\underline{\mathfrak{b}}]$:}

We can rewrite the definition \eqref{eq:definitions-bb} of $\underline{\bflin}\doteq 2\rhoF \bblin -3\rho \bbFlin$ as 
\begin{align*}
   r^4\Omega\underline{\bflin}=2r^2\rhoF \Big(r^2\Omega\bblin-\frac{3}{2}r^2\rho[r\ns]\lambda\Big)- 3r^3\rho\Big(r\Omega \bbFlin-r^2\rhoF \frac{1}{r}[r\ns]\lambda\Big),
\end{align*}
where we have written the above as the sum of two $f(v)$-gauge invariant quantities, according to Lemma \ref{lem:exactsol}. Now, from \eqref{eq:nab-4-bbbF-lambda} and \eqref{nabb-4-bb}-\eqref{eq:partial-v-lambda},  their $\ns_4$ derivatives are given by
\begin{align*}
    \Omega\ns_4\Big(r^3\rho\big(r\Omega \bbFlin-r^2\rhoF \frac{1}{r}[r\ns]\lambda\big)\Big)&=r^3\rho\Big[r\Omega^2\DDs_1(0, -\sigmaFlin)-\Omega^2[r\ns]\Big(\rhoFlin-\frac{Q\lambda}{r^2}\Big)\\
    &\qquad -2\Omega^2\rhoF r\Big(\eblin- \frac{1}{2\Omega^2}[r\ns]\otx\Big)\Big]-2r^2\rhoF^2\Omega^2\Big(r\Omega \bbFlin-r\rhoF [r\ns]\lambda\Big)\\
    \Omega\ns_4\Big(r^2\Omega\bblin-\frac{3}{2}r^2\rho[r\ns]\lambda\Big)&=r^2\Omega^2\DDs_1(0,\slin)-r\Omega^2[r\ns]\Big(\rlin-\big(\frac{3}{2}\rho+\rhoF^2\big)\lambda\Big)-3r^2\Omega^2\rho\Big(\eblin- \frac{1}{2\Omega^2}[r\ns]\otx\Big)\\
&\qquad-\rhoF r\Omega^2\Big(\Omega\bbFlin-\rhoF[r\ns]\lambda\Big)-r\rhoF\Omega^2[r\ns]\Big( \rhoFlin-\frac{Q\lambda}{r^2}\Big)\\
&\qquad+\rhoF r^2\Omega^2\Big(\DDs_1(0,-\sigmaFlin)+\tr\chi\bFlin\Big).
\end{align*}
    where we manipulated the right hand side to have terms, such as $\rhoFlin-\frac{Q\lambda}{r^2}$, appearing in $\mathbb{E}^{i,j}[\mathcal{S}](u)$, and terms, such as $r\big(\Omega \bbFlin-\rhoF[r\ns]\lambda\big)$, which have already been estimated (see \eqref{eq:control-bbF-lambda}), and additional terms given by $\slin$ and $\rlin-\big(\frac{3}{2}\rho+\rhoF^2\big)\lambda$.
    
To control those, we observe the following. From \eqref{eq:curl-elin-slin}, we deduce
\begin{align*} 
r^2|r\Omega\slin|^2&= r^2|\curls r\Omega\eblin|^2=r^2\Big|\curls r\Omega\Big(\eblin-\frac{[r\ns]\otx}{2\Omega^2}\Big)\Big|^2\lesssim \Big|[r\ns]r\Omega\Big(\eblin-\frac{[r\ns]\otx}{2\Omega^2}\Big)\Big|^2,
\end{align*} 
which yields
\begin{align*}
    \int_{v_0}^{\infty}\int_{S^2}\Big[|[\ns_T]^i[r\ns]^jr\Omega\slin|^2\Big]r^2dv\eps_{S^2}&\lesssim\mathbb{E}^{i,j+1}[\mathcal{S}](u).
\end{align*}
From \eqref{eq:nabb-4-rlin}-\eqref{eq:partial-v-lambda}, the $\ns_4$ derivative of $\rlin-\big(\frac{3}{2}\rho+\rhoF^2\big)\lambda$ is given by
\begin{align*}
        \Omega\ns_4\Big(r^3 \rlin - r^3(\frac{3}{2}\rho+\rhoF^2)\lambda \Big)&=\Omega\ns_4\Big(r^3 \rlin +3M\Big(1-\frac{r_c}{r}\Big)\lambda \Big)\\
        &=  \Omega[r\divs ] r^2\big(\blin+  \rhoF\bFlin\big)-4\Omega^2 Q \Big(\rhoFlin-\frac{Q\lambda}{r^2}\Big)
\end{align*}
  By applying Lemma~\ref{lem:tplemma} to the above, we deduce using \eqref{eq:control-blin}:
\begin{align*}
    \int_{v_0}^{\infty}\int_{S^2}\frac{\Omega^2}{r^2}\Big|r^3 \rlin - r^3(\frac{3}{2}\rho+\rhoF^2)\lambda\Big|^2dv\eps_{S^2}&\lesssim  \|r^3 \rlin - r^3(\frac{3}{2}\rho+\rhoF^2)\lambda\|_{u_1,v_0}^2+\mathbb{E}^{0,2}[\mathcal{S}](u).
\end{align*}
Finally, combining all the above and applying Lemma ~\ref{lem:tplemma}
we deduce using \eqref{eq:control-bbF-lambda}
    \begin{align}\label{eq:controlbbf-bb-lambda}
    \begin{split}
    &\int_{v_0}^{\infty}\int_{S^2}\frac{\Omega^2}{r^2}\Big[|r^3\rho\big(r\Omega \bbFlin-r^2\rhoF \frac{1}{r}[r\ns]\lambda\big)|^2+|r^2\Omega\bblin-\frac{3}{2}r^2\rho[r\ns]\lambda|^2\Big]dv\eps_{S^2}\\
        &\lesssim \int_{v_0}^{\infty}\int_{S^2}\Big[|\Omega\ns_4\Big(r^3\rho\big(r\Omega \bbFlin-r^2\rhoF \frac{1}{r}[r\ns]\lambda\big)\Big)|^2+|\Omega \ns_4\Big(r^2\Omega\bblin-\frac{3}{2}r^2\rho[r\ns]\lambda\Big)|^2\Big]dv\eps_{S^2}\\
    &\lesssim \|r^3\rho\big(r\Omega \bbFlin-r^2\rhoF \frac{1}{r}[r\ns]\lambda\big)\|_{u_1,v_0}^2+\|r^2\Omega\bblin-\frac{3}{2}r^2\rho[r\ns]\lambda\|_{u_1,v_0}^2\\
    &\qquad+ \|r\Omega^2[r\ns]\Big(\rlin-\big(\frac{3}{2}\rho+\rhoF^2\big)\lambda\Big)\|_{u_1,v_0}^2+\|\left(r\Omega\xblin-[r\DDs_2][r\ns]\lambda\right)\|_{u_1,v_0}^2+\mathbb{E}^{0,3}[\mathcal{S}](u).
    \end{split}
\end{align}
Using Corollary \ref{corollary:higher-derivatives}, this proves that 
\begin{align}\label{eq:control-bbf}
\begin{split}
        \mathrm{E}^{i,j}_{u}[\underline{\bflin}] &\lesssim \|[\ns_T]^i[r\ns]^{j}\left(r(\Omega \bbFlin-\rhoF[r\ns]\lambda)\right)\|_{u_1,v_0}^2+\|[\ns_T]^i[r\ns]^{j}\left(r^2\Omega\bblin-\frac{3}{2}r^2\rho[r\ns]\lambda\right)\|_{u_1,v_0}^2\\
    &\qquad+ \|[\ns_T]^i[r\ns]^{j+1}\left(r\Omega^2\Big(\rlin-\big(\frac{3}{2}\rho+\rhoF^2\big)\lambda\Big)\right)\|_{u_1,v_0}^2+\|[\ns_T]^i[r\ns]^{j}\left(r\Omega\xblin-[r\DDs_2][r\ns]\lambda\right)\|_{u_1,v_0}^2\\
    &\qquad+\mathbb{E}^{i,j+3}_{data}[\mathcal{S}](u).
\end{split}
\end{align}

\paragraph{Control of $\mathrm{E}_{u}[\underline{\alpha}]$:}

Rewriting the Bianchi identity \eqref{eq:Bianchi-aa-ff-bb} as 
\begin{align*}
     Q r\  \Omega\ns_4 \left( r\Omega^2 \ablin\right) = \left(-6M +\frac{7Q^2}{r} \right) \Omega (r\Omega^2\underline{\fflin}) +\Omega(r\Ds_2)( r^3 \Omega^2\underline{\bflin}),
\end{align*}
and noticing that we control the terms on the right hand side by \eqref{eq:control-ffb}, \eqref{eq:control-bbf}, we can once again use Lemma~\ref{lem:tplemma} and Corollary \ref{corollary:higher-derivatives} to prove that 
    \begin{align*}
\mathrm{E}^{i,j}_{u}[\underline{\alin}] &\lesssim  \|[\ns_T]^i[r\ns]^{j}\left(r(\Omega \bbFlin-\rhoF[r\ns]\lambda)\right)\|_{u_1,v_0}^2+\|[\ns_T]^i[r\ns]^{j}\left(r^2\Omega\bblin-\frac{3}{2}r^2\rho[r\ns]\lambda\right)\|_{u_1,v_0}^2\\
    &\qquad+ \|[\ns_T]^i[r\ns]^{j+1}\left(r\Omega^2\Big(\rlin-\big(\frac{3}{2}\rho+\rhoF^2\big)\lambda\Big)\right)\|_{u_1,v_0}^2+\|[\ns_T]^i[r\ns]^{j}\left(r\Omega\xblin-[r\DDs_2][r\ns]\lambda\right)\|_{u_1,v_0}^2\\
    &\qquad+\mathbb{E}^{i,j+3}_{data}[\mathcal{S}](u).
\end{align*}

\appendix
\section{A Remark on the Optimality of the Charge-to-Mass Ratio}\label{app:optimality}
Recall from the proof of Theorem~\ref{thm:mastercontrol} that we encounter a restriction on charge-to-mass from estimating the boundary term arising from $E_{u_1}$, which can be written as, 
\begin{align*}
    E_{u_1}[\mathcal{S}](v_0,v_1)&=\int_{v_0}^{v_1}\int_{S^2}\Big[|r\Omega\xlin|^2+2|r\Omega\bFlin|^2+2|r\Omega\sigmaFlin|^2+2\Big|r\Omega\eblin-[r\ns]\frac{r\otx}{2\Omega}\Big|^2\Big]dv\varepsilon_{S^2}\\
    &\qquad+E_{u_1}^{\mathrm{est}}[\mathcal{S}](v_0,v_1),
\end{align*}
where,
\begin{align}
    E_{u_1}^{\mathrm{est}}[\mathcal{S}](v_0,v_1)\doteq \int_{v_0}^{v_1}\int_{S^2}2|r\Omega[\rhoFlin-\rhoF\lambda]|^2dv\varepsilon_{S^2} +\int_{S^2}\Big[-\Big(1-\frac{r_c}{r}\Big)\frac{3M}{2}\lambda^2-\frac{1}{2}X\lambda\Big]\varepsilon_{S^2}\Big|_{(u_1,v_0)}^{(u_1,v_1)},\label{eqn:Eest}
\end{align}
are the terms involved in the estimation. 
Specifically the restriction on the charge-to-mass arises from the estimate through Lemma~\ref{lem:tplemma} from the equation, $\Omega\ns_4\xi=\Omega^2\Xi$ where, assuming $Q\neq 0$,
\begin{align*}
    \Xi=\frac{Q\lambda}{r^2}- \rhoFlin,\qquad \xi=\frac{1}{8Q}\Big(X+\Big(1-\frac{r_c}{r}\Big)6M\lambda\Big)
\end{align*}
Assume $F$ is a differentiable function such that $F\rightarrow a>0$ as $r\rightarrow \infty$. We compute that 
\begin{align*}
    \partial_v(F|\xi|^2)=r^2F'\Big|\frac{\Omega\xi}{r}\Big|^2+2F\Big\langle\Omega r\Xi,\frac{\Omega\xi}{r}\Big\rangle.
\end{align*}
Integrating from $(v_0,\infty)$ yields
\begin{align*}
a|\xi|^2(u_1,\infty)-\int_{v_0}^{\infty}\Big[r^2F'\Big|\frac{\Omega\xi}{r}\Big|^2+2F\Big\langle\Omega r\Xi,\frac{\Omega\xi}{r}\Big\rangle\Big]dv-F|\xi|^2(u_1,v_0)=0.
\end{align*}
Adding $0$ in this form to equation~\eqref{eqn:Eest} gives,
\begin{align}
    E_{u_1}^{\mathrm{est}}[\mathcal{S}](v_0,\infty)&= \int_{v_0}^{\infty}\int_{S^2}\Big[2|r\Omega\Xi|^2-r^2F'\Big|\frac{\Omega\xi}{r}\Big|^2-2F\Big\langle\Omega r\Xi,\frac{\Omega\xi}{r}\Big\rangle \Big]dv\varepsilon_{S^2}\label{eq:QuadForm}\\
    &\qquad +\int_{S^2}\Big[\frac{3M}{2}\Big[\frac{3Ma}{8Q^2}-1\Big]\lambda^2+2\times \frac{1}{4}\Big[\frac{3Ma}{8Q^2}-1\Big]X\lambda+\frac{a}{64Q^2}X^2\Big]\varepsilon_{S^2}\Big|_{(u_1,\infty)}\nonumber\\
    &\qquad\qquad+\int_{S^2}\Big[\Big(1-\frac{r_c}{r}\Big)\frac{3M}{2}\lambda^2+\frac{1}{2}X\lambda-F|\xi|^2\Big]\varepsilon_{S^2}\Big|_{(u_1,v_0)}\nonumber
\end{align}
Viewing the first two lines of the RHS of equation~\eqref{eq:QuadForm} as two  quadratic forms and appealing to Sylvester's criterion, the necessary and sufficient conditions for the RHS of equation~\eqref{eq:QuadForm} to be positive semi-definite are
\begin{enumerate}
    \item[a)]  $2r^2F'+F^2\leq 0,$
from which one can read off that the worst case scenario is
\begin{align*}
    F'+\frac{F^2}{2r^2}= 0\implies F(r)=\frac{a }{ 1-\frac{a}{2r}}.
\end{align*}
\item[b)]  $a\geq \frac{8Q^2}{3M}=2r_c.$
\end{enumerate}
This then gives that 
\begin{align*}
    F(r)=\frac{2r_c}{1-\frac{r_c}{r}}.
\end{align*}
We can then conclude from this discussion that for $F$ to be differentiable on the exterior, $r_+>r_c$, which is equivalent to our restriction on the charge-to-mass ratio.

{\small
\bibliographystyle{IEEEtran}
\bibliography{RNCL}}
\end{document}